\documentclass[final]{IEEEtran}

\usepackage{amsmath,amssymb,amsfonts,mathrsfs,bm,amsthm}
\usepackage{amstext}
\usepackage{textcomp}
\usepackage{upgreek}
\usepackage{multicol}
\usepackage{indentfirst}
\usepackage{graphicx}
\usepackage{paralist}
\usepackage{multirow}
\usepackage{tabularx}
\usepackage{floatrow}
\floatsetup[table]{style=plaintop}
\usepackage{cite}
\usepackage{citesort}
\usepackage{booktabs}
\usepackage{subfigure}

\newtheorem{corollary}{\bf Corollary}
\newtheorem{proposition}{\bf Proposition}

\newtheorem{lemma}{\bf Lemma}
\newtheorem{remark}{\bf Remark}

\usepackage{accents}
\newcommand{\ubar}[1]{\underaccent{\bar}{#1}}

\IEEEoverridecommandlockouts
\allowdisplaybreaks
\begin{document}
\title{ Caching Policy Toward  Maximal Success Probability and Area Spectral Efficiency of Cache-enabled HetNets}
\author{\normalsize Dong Liu and Chenyang Yang
\thanks{This work was supported in part by National Natural Science Foundation of China (NSFC) under Grants with No. 61671036 and 61429101. The work of D. Liu was supported by the Academic Excellence Foundation of BUAA for Ph.D. Students. Parts of this work were presented in IEEE ICC 2016 \cite{DongICC} and IEEE GLOBECOM 2016 \cite{DongGC}. 
	
D. Liu and C. Yang are with School of Electronics and Information Engineering, Beihang University, Beijing, China	(e-mail: \{dliu, cyyang\}@buaa.edu.cn). }
}

\maketitle

\begin{abstract}
In this paper, we investigate the optimal caching policy respectively maximizing the success probability and area spectral efficiency (ASE) in a cache-enabled heterogeneous network (HetNet) where a tier of multi-antenna macro base stations (MBSs) is overlaid with a tier of helpers with caches. Under the probabilistic caching framework, we resort to stochastic geometry theory to derive the success probability and ASE. After finding the optimal caching policies, we analyze the impact of critical system parameters and compare the ASE with traditional HetNet where the MBS tier is overlaid by a tier of pico BSs (PBSs) with limited-capacity backhaul. Analytical and numerical results  show that the optimal caching probability is less skewed among helpers to maximize the success probability when the ratios of MBS-to-helper density, MBS-to-helper transmit power, user-to-helper density, or the rate requirement are small, but is more skewed to maximize the ASE in general. Compared with traditional HetNet, the helper density is much lower than the PBS density to achieve the same target ASE. The helper density can be reduced by increasing cache size. With given total cache size within an area, there exists an optimal helper node density that maximizes the ASE.
\end{abstract}

\begin{IEEEkeywords}
Caching policy, area spectral efficiency, success probability, heterogeneous networks
\end{IEEEkeywords}

\section{Introduction}
To support the 1000-fold higher throughput in the fifth-generation (5G)
cellular systems, a promising way is to densify the network
by deploying more small base stations (BSs) in a macro cell \cite{densification}. Such
heterogeneous networks (HetNets) can  offload  the traffic and increase the area spectral efficiency (ASE), but the gain largely relies on
high-speed backhaul links. Although optical
fiber can provide high capacity, bringing fiber-connection to every single
small BS is rather labor-intensive and expensive. Alternatively, digital
subscriber line (DSL) or microwave backhaul may easily become a bottleneck
and frustratingly impair the throughput gain brought by the network densification
\cite{Femtocell}.

Recently, it has been observed that a large portion of mobile data
traffic is generated by many duplicate downloads of a few popular contents
\cite{woo2013comparison}. Besides, the storage capacity of today's
memory devices grows rapidly at a relatively low cost. Motivated by these facts, the authors  in \cite{Andy2012}
suggested to replace small BSs by the BSs  that have weak backhaul links (or even completely without backhaul)
but have high capacity caches, called \emph{helper nodes}. By optimizing the caching policies to serve
more users under the constraints of file downloading time, large throughput
gain was reported. Considering small cell networks (SCNs) with backhaul of
very limited capacity, the
authors in \cite{Procach14} observed that the backhaul traffic load can be
reduced by caching files at the small BSs  based on their popularity. These results indicate that by fetching
contents locally instead of fetching from core network via backhaul links duplicately,
equipping caches at BSs is a promising way to unleash the
potential of HetNet.

In \cite{Dong}, both the throughput and energy efficiency of cached-enabled cellular networks with hexagonal cells were analyzed. For HetNets or SCNs, however, it is more appropriate to use Poisson Point Process (PPP) to model the BS location \cite{flexible}. Stochastic geometry theory was first applied in \cite{EURASIP} for a single-tier cache-enable SCN, where each SBS caches the most popular files and user density is assumed sufficiently large such that all the BSs are active. In \cite{BastugISIT}, the results in \cite{EURASIP} are extended  to a two-tier network with intra-tier and inter-tier dependence. In \cite{chenchenyang}, the throughput of a cache-enabled network with content pushing to users, device-to-device communication and caching at relays was derived, where every node (including the macro BS (MBS) and relay) is with a single antenna and with high-capacity backhaul.

Caching policy is critical in reaping the benefit brought by caching \cite{DongMag}. In wireless networks, when the coverage of several BSs overlaps, a user is able to fetch contents from multiple helpers and hence cache-hit probability can be increased by caching different files among helpers \cite{Niki13}. However, owing to interference and path loss, such a file diversity may lead to low signal-to-interference-plus-noise ratio (SINR), since a user may associate with a relative further BS to ``hit the cache" when the nearest BS does not cache the requested file \cite{Dong}. This suggests that caching policy not only affects the cache-hit probability but also changes the way of user association and hence the SINR distribution of wireless networks.

In \cite{Niki13}, caching policy was optimized to minimize the file download time where the interference among helpers are not considered. In \cite{song2015optimal}, the optimal caching policy was proposed to minimize the average bit error rate over fading channels. Both \cite{Niki13} and \cite{song2015optimal} assume \emph{a priori} known BS-user topology, which is not practical in mobile networks.  To reflect the uncertain connectivity between BS and user, a probabilistic caching framework was proposed recently in \cite{Blaszczyszyn2015optimal}, where the network models based on stochastic geometry were considered. In particular, a probabilistic caching policy was proposed where each BS caches files independently according to an optimized caching probability to maximize  the cache-hit probability. However, the optimal caching probability is not obtained with closed-form, which makes it hard to gain useful insights into the impact of various system parameters. In \cite{cui2015analysis}, the optimal caching probability maximizing the successful transmission probability in a single-tier network was obtained in closed-form when user density approaches infinity. In \cite{rao2015optimal}, caching policy was optimized to maximize the traffic offloaded to helpers and cache-enabled users, but the links among helpers and users are assumed interference-free. In real-world HetNets, the interference is complicated and has large impact on the system design and network performance. While deploying helpers is a cost-effective way for offloading traffic and increasing ASE of cellular networks, how to place the contents and how to deploy the helpers for interference-limited cache-enabled HetNets are still not well understood. Specifically:

\begin{itemize}
	\item Existing works rarely consider caching policy optimization in HetNets (e.g., only  single-tier network is considered in \cite{EURASIP,cui2015analysis}, and  each BSs caches the most popular files in \cite{EURASIP,BastugISIT,chenchenyang}), and the basic features of cache-enabled HetNets, such as cross-tier interference, the transmit power and BS density of different tiers, user densities, rate requirement and association bias, are not well captured and analyzed in \cite{Niki13,BastugISIT,rao2015optimal,Blaszczyszyn2015optimal,song2015optimal,EURASIP,BastugISIT,chenchenyang,cui2015analysis}.
	\item Existing works mainly focus on finding the optimal caching probability maximizing the success probability in cache-enabled networks \cite{Blaszczyszyn2015optimal,cui2015analysis,rao2015optimal}, but never consider caching policy optimization for maximizing the ASE, yet another important performance metric.
	\item Existing works never consider helper idling \cite{BastugISIT,EURASIP,chenchenyang,Niki13,song2015optimal,Blaszczyszyn2015optimal,cui2015analysis,rao2015optimal} (i.e., turning helpers with no user to serve into idle mode to avoid generating interference), which is appealing since the cost-effective helpers make dense deployment possible, and affects both SINR distribution and caching policy optimization.
\end{itemize}

To better understand these issues, in this paper we attempt to find the optimal caching policy for cache-enabled HetNet under different performance metrics, analyze the impact of critical system parameters, investigate the benefits of cache-enabled HetNet with respect to traditional HetNet with limited-capacity backhaul PBSs, and reveal the tradeoff in deploying cache-enabled HetNets. To this end, we consider a cache-enabled HetNet where a tier of multi-antenna MBSs with high capacity backhaul is overlaid with a tier of denser helpers with caches but without backhaul links. Under the probabilistic caching framework \cite{Blaszczyszyn2015optimal} and using stochastic geometry theory, we derive the success probability and ASE of the cache-enabled HetNets respectively as functions of
BS/helper node density, user density, caching probability, storage size and file popularity. We also derive the ASE of traditional HetNet
where a tier of MBSs is overlaid with a tier of PBS with limited-capacity backhaul for
comparison. The major contributions of this paper are summarized as follows,
\begin{itemize}
	\item We find the optimal caching policy to maximize a concave lower bound of the success probability, which is very tight in low user density case. Analytical and simulation results show that the optimal caching probability is less skewed to maximize the success probability when the ratios of MBS-to-helper density, MBS-to-helper transmit power, user-to-helper density, or the rate requirement are low, and with helper idling.
	\item We find the optimal caching policy to maximize the approximated ASE, which is accurate for large cache size of each helper. When the user-to-helper density and helper-to-MBS density approach infinity, we prove that simply caching the most popular files everywhere is the optimal solution, which is quite different from maximizing the success probability.
	\item We show that the cache-enabled HetNet can provide much higher ASE than the traditional HetNet with the same PBS/helper density, or can be deployed with much lower helper density than the PBS density of traditional HetNet to achieve the same target ASE, which can reduce the cost of deployment and operation remarkably. Moreover, we find that the helper density can be traded off by the cache size to achieve a target ASE. Given the total cache size within an area, there exists an optimal helper density that maximizes the ASE.
\end{itemize}

The rest of this paper is organized as follows. In Section II, we present the system model. The caching policies toward maximizing success probability and ASE are optimized in Section III and Section IV, respectively. The numerical and simulation results are provided in Section V, and the conclusions are drawn in Section VI.

\section{System Model}

\subsection{Network Model}
We consider a cache-enabled HetNet, where a tier of MBSs is overlaid with a tier of denser helper nodes  as shown in Fig.~\ref{fig:layout}. Since MBSs are deployed with relative low density, we assume each MBS is connected to the core network with high-capacity backhaul links, e.g., optical fibers, and the helper nodes are deployed without backhaul but equipped with caches.

\begin{figure}[!htb]
\centering
\includegraphics[width=0.77\textwidth]{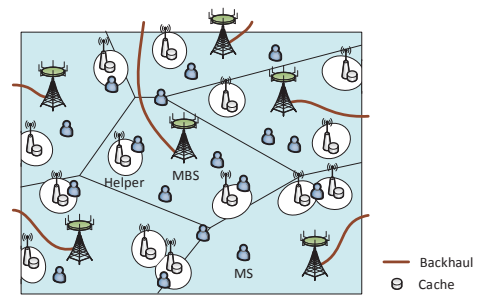}
\caption{Layouts of the considered cache-enabled HetNets.} \label{fig:layout}
\end{figure}

The distribution of MBSs, helper nodes and users are modeled as three independent homogeneous PPPs  with density of $\lambda_1$, $\lambda_2$ and $\lambda_u$, denoted as $\Phi_{1}$, $\Phi_{2}$ and $\Phi_u$, respectively. Each MBS is equipped with $M_1 \geq 1$ antennas and each helper node is with $M_2 = 1$ antenna.\footnote{Since helper nodes are expected to enable dense deployment with low cost, we only consider single-antenna case.} Denote $k\in\{1,2\}$ as the index of the tier that a randomly chosen user in the network (called the typical user) is associated with. In the following, if not specified, BS refers to either MBS or helper. The transmit power at each BS in the $k$th tier is denoted by $P_k$.

We assume that each user requests a file from a content catalog that contains $N_f$ files randomly. The files are indexed according to their popularity, ranking from the most popular (the $1$st file) to the least popular (the $N_f$th file). The probability that the $f$th popular file is requested follows Zipf distribution as
\begin{equation}
p_f = \frac{f^{-\delta}}{\sum_{n=1}^{N_f} n^{-\delta}}, \label{eqn:pf}
\end{equation}
where the skew parameter $\delta$ is with typical value of 0.5 $\sim$ 1.0 \cite{breslau1999web}. For simplicity, we assume that the
files are with equal size\footnote{Files with different size can be divided into equal-size content chunks.}
and the cache size of each helper node is $N_c$.

We consider probabilistic caching policy where each helper independently selects files to cache according to a specific probability distribution. To unify the analysis, denote $0 \leq q_{f, k} \leq 1$ as the probability that the BS of the $k$th tier caches the $f$th file. When $\mathbf q_{k} \triangleq [q_{f, k}]_{f= 1,\cdots, N_f}$ is given, each BS can determine which files should be cached by the method in \cite{Blaszczyszyn2015optimal}. For the MBS tier,  $\mathbf q_{1} = \mathbf 1$, since the MBS can be regarded as caching all the files due to the high-capacity backhaul.


Since every helper caches files independently, the distribution of the BSs in the $k$th tier caching the $f$th file can be regarded as a thinning of the PPP $\Phi_k$ with probability $q_{f,k}$, which follows a PPP with density $q_{f, k}\lambda_k$ (denoted by $\Phi_{f,k}(\mathbf{q}_1,\mathbf{q}_2)$). Similarly, the distribution of the BSs in the $k$th tier not caching the $f$th file follows PPP with density $(1-q_{f, k})\lambda_k$ (denoted by $\Phi_{f',k}(\mathbf{q}_1,\mathbf{q}_2)$).

We consider user association based on both channel condition and cached files in each helper. Specifically, when the typical user requests the $f$th file, it associates with the BS in the set of $\{\Phi_{f,k}(\mathbf{q}_1,\mathbf{q}_2)\}_{k=1,2}$ that has the strongest average biased-received-power (BRP). The BRP for the $k$th tier is $P_{{\rm r}, k} = P_kB_k r^{-\alpha}$, where $B_k \geq 0$ is the association bias factor, $r$ is the BS-user distance, and $\alpha$ is the path-loss exponent. In such a cache-enabled HetNet, the user may not associate with the BS with the strongest BRP since the BS may not cache the requested file, which is very different from the traditional HetNets without local caching.

 Since the helpers  without backhaul can be densely deployed at low cost and the traffic may fluctuate among peak and off-peak times, the density of helper nodes may become comparable with or even higher than the density of users and hence some helpers may have no users to serve. These ``inactive BSs" will be turned into idle mode (i.e., become muting) to avoid generating interference, and the rest BSs that have users to serve are called as ``active BSs" denoted by $\tilde{\Phi}_{k}$. We assume that $\lambda_u \gg \lambda_1$, and the number of users in each macro cell far exceeds the number of antennas at each MBS $M_1$. Assume that each MBS serves every $M_1$ users in the same time-frequency resource  by zero-forcing beamforming with equal power allocation, while each helper serves its associated users by time division or frequency division multiple access with full power. These assumptions define a simplified scenario, which however can capture the fundamental features of cache-enabled HetNets.
Then, the downlink SINR at the typical user that requests the $f$th file and associates with the $k$th tier is
\begin{align}
&{\gamma}_{f,k}(\mathbf{q}_1, \mathbf{q}_2)
= \frac{\frac{P_k}{M_k} h_{k0} r_{k}^{-\alpha}}{\sum_{j=1}^{2} ( I_{f,kj} + I_{f',kj})  + \sigma^2 }
\triangleq \frac{\frac{P_k}{M_k} h_{k0} r_{k}^{-\alpha}}{I_{k}+ \sigma^2 }, \label{eqn:gamma}
\end{align}
where $h_{k0}$ is the equivalent channel power (including channel coefficient and beamforming) from the associated BS $b_{k0}$ to the typical user, $r_k$ is the corresponding distance, $\sigma^2$ is the noise power, and the total interference $I_k$ consists of the interference from the BSs that cache the $f$th file, i.e., $I_{f, kj} = \sum_{i \in \tilde{\Phi}_{f,j}(\mathbf{q}_1,\mathbf{q}_2) \backslash  b_{k0}} P_jh_{ji} r_{ji}^{-\alpha}$, and the interference from the BSs that do not cache the $f$th file, i.e., $I_{f', kj} = \sum_{i \in \tilde{\Phi}_{f',j}(\mathbf{q}_1,\mathbf{q}_2)}  P_jh_{ji} r_{ji}^{-\alpha}$, where $\tilde \Phi_{f, j}(\mathbf{q}_1,\mathbf{q}_2)$ and $\tilde \Phi_{f', j}(\mathbf{q}_1,\mathbf{q}_2)$ are respectively the sets of active BSs  in the $j$th tier that caching and not caching the $f$th file, $h_{ji}$ and $r_{ji}$ are the equivalent interference channel power and distance from the $i$th active BS in the $j$th tier to the typical user. We consider Rayleigh fading channels. Then, $h_{k0}$ follows exponential distribution with unit mean, i.e., $h_{k0} \sim \exp (1)$,\footnote{For $k = 1$, since each MBS serves $M_1$ users on the same time-frequency resource, $h_{10} \sim \mathbb{G}(M_1 - M_1 + 1, 1)$\cite{zhang2011multi}, which is actually exponential distribution with unit mean. For $k = 2$, since each helper has one antenna, $h_{20}\sim \exp(1)$.} and $h_{ij}$ follows gamma distribution with shape parameter $M_j$ and unit mean, i.e., $h_{ji} \sim \mathbb{G}(M_j, 1/M_j)$ \cite{adhoc}.  Since $\mathbf{q}_1 = \mathbf{1}$ is a constant due to the high capacity backhaul of MBS, we denote any functions of $(\mathbf{q}_1, \mathbf{q}_2)$ as $(\mathbf{q}_2)$ for notational simplicity in the following, e.g., $\gamma_{f,k}(\mathbf{q}_1,\mathbf{q}_2) = \gamma_{f,k}(\mathbf{q}_2)$.

Since HetNets are usually interference-limited \cite{flexible}, it is reasonable to neglect the thermal noise, i.e., $\sigma^2 = 0$. For notational simplicity, we define the ratios of BS density, number of antennas, transmit power, and bias factor as $\lambda_{jk} \triangleq \lambda_j/\lambda_k$, $M_{jk} \triangleq M_j/M_k$, $P_{jk} \triangleq P_j/P_k$, and $B_{jk} \triangleq B_j/B_k$, respectively. Note that $\lambda_{kk} = M_{kk} = P_{kk} = B_{kk} = 1$.

\subsection{Performance Metric}
We consider two performance metrics, either from each user or from network perspective.

From the user perspective, we use success probability to reflect the quality of service, which is the probability that the achievable rate of the typical user exceeds the rate requirement $R_0$. Based on the law of total
probability, the success probability is given by
\begin{align}
p_{\rm s}(\mathbf{q}_2)  = & \sum_{k = 1}^{2}  p_{{\rm s}, k}(\mathbf{q}_2) = \sum_{k = 1}^{2} \sum_{f=1}^{N_f} p_f \mathcal{P}_{f,k}(\mathbf{q}_2)\nonumber\\
& \times \mathbb{P} \left( \frac{W}{ U_{f,k}/M_k }\log_2 (1 + \gamma_{f,k}(\mathbf{q}_2)) \geq R_{0} \right)  , \label{eqn:eql} 
\end{align}
where $p_{{\rm s}, k}(\mathbf{q}_2)$ is the the success probability contributed by the $k$th tier, $\mathcal{P}_{f,k}(\mathbf{q}_2)$ is the probability of the typical user associated with the $k$th tier when requesting the $f$th file which is given by Lemma 1 in the next section, $\mathbb{P} \left( \frac{W}{  U_{f,k}/M_k }\log_2 (1 + \gamma_{f,k}(\mathbf{q}_2)) \geq R_{0} \right)$ is the success probability when the user requests the $f$th file and associates with the $k$th tier, $W$ is the total transmission bandwidth of the network, $U_{f,k}$ is the total number of users served by the same BS together with the typical user (including the typical user) when the typical user requests the $f$th file. Since every $M_k$ users can share the same time-frequency resource thanks to zero-forcing beamforming, the total time-frequency resource is divided into $\frac{U_{f,k}}{M_k}$ parts and hence $\frac{W}{U_{f,k}/M_k}$ represents the time-frequency resource allocated to each user.

From the network perspective, we use ASE to measure the network capacity, which is defined as the average throughput per unit area normalized by the transmission bandwidth \cite{Quek},
\begin{equation}
{\sf ASE}(\mathbf{q}_2) = \frac{1}{W}\sum_{k=1}^{2} p_{{\rm a},k}(\mathbf{q}_2)\lambda_k  \mathbb{E}[R_k(\mathbf{q}_2)] , \label{eqn:ASEdef}
\end{equation}
where $p_{{\rm a},k}(\mathbf{q}_2)$ is the probability that a randomly chosen BS in the $k$th tier is active, $R_k(\mathbf{q}_2) \triangleq \sum_{u=1}^{U_k}\frac{W}{  U_k/M_k }$ $\log_2 (1 + \gamma_{uk}(\mathbf{q}_2))$ is the throughput of the active BS in the $k$th tier, $U_k$ is the total number of users served by a randomly chosen active BS in the $k$th tier, $\gamma_{uk}(\mathbf{q}_2)$ is the SINR of the $u$th user served by the BS in the $k$th tier, and $\mathbb{E}$ denotes the expectation. The expectation is taken over file requests, cached files among helpers, the number and location of BSs and users, and small-scale fading.

\section{Caching Policy Maximizing Success probability}
In this section, we first derive the success probability as a function of caching probability, then
find the optimal caching probability maximizing the success probability and analyze the impact of different system settings on the optimal caching policy.

\subsection{Successful Probability}
To derive the main results, we first introduce a lemma regarding the tier association probability.

\begin{lemma}
	The probability of the typical user associating with the $k$th tier is
	\begin{equation}
	\mathcal{P}_{k}(\mathbf{q}_2) =  \sum_{f = 1}^{N_f} \frac{p_fq_{f,k}}{\sum_{j=1}^{2}q_{f,j}\lambda_{jk} (P_{jk}B_{jk})^{\frac{2}{\alpha}}}, \label{eqn:asso}
	\end{equation}
	and the probability of the typical user associating with the $k$th tier conditioned on that the user requests the $f$th file is $\mathcal{P}_{f,k}( \mathbf{q}_2)  = q_{f,k}\left(\sum_{j=1}^{2}q_{f,j}\lambda_{jk} (P_{jk}B_{jk})^{\frac{2}{\alpha}} \right)^{-1}$.
\end{lemma}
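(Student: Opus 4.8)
The plan is to first compute $\mathcal{P}_{f,k}(\mathbf{q}_2)$, the probability that the typical user associates with the $k$th tier \emph{given} that it requests the $f$th file, and then recover $\mathcal{P}_k(\mathbf{q}_2)$ by averaging over the file request with the Zipf weights $p_f$ through the law of total probability. By Slivnyak's theorem I may place the typical user at the origin without changing the law of $\Phi_1,\Phi_2$. Conditioned on a request for file $f$, the BSs eligible to serve in tier $k$ form the thinned process $\Phi_{f,k}(\mathbf{q}_2)$, a homogeneous PPP of intensity $q_{f,k}\lambda_k$ (as noted in the network model), and the two eligible processes $\Phi_{f,1}(\mathbf{q}_2)$ and $\Phi_{f,2}(\mathbf{q}_2)$ are independent because $\Phi_1$ and $\Phi_2$ are independent and the per-BS caching decisions are made independently of each other and of the request. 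Since within a tier the (average) BRP is monotone decreasing in distance, the strongest-BRP eligible BS in tier $k$ is the nearest one, at distance $X_k$ say, and the event ``tier $k$ is selected'' is equivalent to $P_kB_k X_k^{-\alpha} > P_jB_j X_j^{-\alpha}$ for the other tier $j$, i.e. to $X_j > (P_{jk}B_{jk})^{1/\alpha}X_k$.

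Next I would invoke the standard nearest-point statistics of a homogeneous PPP: $X_k$ has density $f_{X_k}(r)=2\pi q_{f,k}\lambda_k r\,e^{-\pi q_{f,k}\lambda_k r^2}$, and $\mathbb{P}(X_j>x)=e^{-\pi q_{f,j}\lambda_j x^2}$ is the void probability of the disk of radius $x$. Conditioning on $X_k=r$ and using the independence of $X_1$ and $X_2$,
\[
\mathcal{P}_{f,k}(\mathbf{q}_2)=\int_0^\infty 2\pi q_{f,k}\lambda_k r\,e^{-\pi q_{f,k}\lambda_k r^2}\,\mathbb{P}\!\left(X_j>(P_{jk}B_{jk})^{1/\alpha}r\right)dr .
\]
Substituting the void probability leaves a single Gaussian-type integral, $\int_0^\infty 2\pi q_{f,k}\lambda_k r\,e^{-cr^2}\,dr=\pi q_{f,k}\lambda_k/c$ with $c=\pi\big(q_{f,k}\lambda_k+q_{f,j}\lambda_j(P_{jk}B_{jk})^{2/\alpha}\big)$; dividing numerator and denominator by $\lambda_k$ and using $\lambda_{kk}=P_{kk}=B_{kk}=1$ gives exactly $\mathcal{P}_{f,k}(\mathbf{q}_2)=q_{f,k}\big(\sum_{j=1}^2 q_{f,j}\lambda_{jk}(P_{jk}B_{jk})^{2/\alpha}\big)^{-1}$. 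Finally $\mathcal{P}_k(\mathbf{q}_2)=\sum_{f=1}^{N_f}p_f\,\mathcal{P}_{f,k}(\mathbf{q}_2)$ yields \eqref{eqn:asso}.

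The calculation is short; the points needing care are conceptual rather than computational. First, I must justify that the two eligible processes are independent and that the caching configuration is independent of which file is requested, so that the association event factorizes into the two nearest-distance laws used above. Second, the degenerate cases require a remark but no separate analysis: if $q_{f,k}=0$ the eligible process in tier $k$ is empty, so $X_k=\infty$ and $\mathcal{P}_{f,k}=0$, which is precisely the value the closed form returns; and on the infinite plane a homogeneous PPP of positive intensity almost surely has a nearest point, so $X_k<\infty$ a.s. whenever $q_{f,k}>0$. I expect the reduction of the association event to the comparison of two independent nearest-neighbour distances to be the main (and essentially only) step of substance.
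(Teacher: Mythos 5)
Your proposal is correct and follows essentially the same route as the paper: decompose $\mathcal{P}_k(\mathbf{q}_2)$ over the requested file by the law of total probability, and compute the conditional association probability by treating the eligible BSs as independent thinned PPPs of intensities $q_{f,k}\lambda_k$ under maximum-BRP association. The only difference is that the paper simply cites the standard max-BRP association result (Lemma~1 of the referenced HetNet work) for $\mathcal{P}_{f,k}(\mathbf{q}_2)$, whereas you rederive it explicitly via the nearest-point density and void probability — a correct, self-contained filling-in of that cited step.
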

\begin{IEEEproof}
See Appendix A.
\end{IEEEproof}

From Lemma 1, we can see that both the requested file and caching probability affect user association, which is very different from the traditional HetNets without local caching.

For the tractability of the analysis, the following approximations are used to derive the closed-form expression of success probability.

{\bf BS Active Probability Approximation}:
The distribution of the active BSs in $k$th tier is modeled as PPP with density $p_{{\rm a},k}(\mathbf{q}_2)\lambda_k$ by thinning ${\Phi}_k$ with probability $p_{{\rm a},k}(\mathbf{q}_2)$ \cite{economy}, and $p_{{\rm a},k}(\mathbf{q}_2)$ can be approximated as
\begin{align}
p_{{\rm a},k}(\mathbf{q}_2) & = 1 - \int_{0}^{\infty} e^{-\lambda_u x} f_{S_k}(x){\rm d} x \nonumber \\
& \approx 1 - \left(1 + \frac{\mathcal{P}_k(\mathbf{q}_2)\lambda_u}{3.5\lambda_k}\right)^{-3.5}, \label{eqn:pa}
\end{align}
where $f_{S_k}(x)$ is the PDF of the service area  $S_k$ of the BS in the $k$th tier, and the approximation comes from $f_{S_k}(x) \approx \frac{3.5^{3.5}}{\Gamma(3.5)} \big(\tfrac{\lambda_j}{\mathcal{P}_k(\mathbf{q}_2)}\big){}^{3.5}x^{2.5}e^{-3.5{\lambda_kx}/{\mathcal{P}_k}}$ in \cite{offloading}, which is accurate when $\lambda_2 \gg \lambda_1$. Note that for $k=1$ (i.e., the MBS tier), $p_{{\rm a},1}(\mathbf{q}_2)$ exactly equals to 1 since ${\lambda_u}/{\lambda_1} \to \infty$.

{\bf Average Load Approximation}: The number of users served by the same BS together with the typical user when the typical user requests the $f$th file is approximated by the average value of $U_k$, i.e., $U_{f,k} \approx \mathbb{E}[U_k] = 1 + \frac{1.28\lambda_u \mathcal{P}_{k}(\mathbf{q}_2)}{\lambda_k} \label{eqn:Uk} $ \cite{offloading}, which is verified to have negligible impact on the success probability by simulation in Section V. Note that when ${\lambda_u}/{\lambda_k}\to 0$, i.e., the user density is low such that each helper serves at most one user, then $U_{f,k} \to \mathbb{E}[U_k] = 1$, and the approximation becomes exact.

Then, based on the tier association probability and the above approximations, we obtain the success probability in the following proposition.

\begin{proposition}
	The success probability of the typical user is
	\begin{align}
	 p_{\rm s}(\mathbf{q}_2)  = &\sum_{k=1}^{2}\sum_{f=1}^{N_f} p_f  q_{f,k} \Bigg(\sum_{j=1}^{2}  \lambda_{jk} P_{jk}^{\frac{2}{\alpha}} \nonumber \\
	& \times \left(q_{f,j}  p_{{\rm a},j}(\mathbf{q}_2)) B_{jk}^{\frac{2}{\alpha}} \mathsf{Z}_{1,jk}(\gamma_{0,k}(\mathbf{q}_2)) \right. \nonumber\\
	&\left.\left.+ (1\!-\!q_{f,j})  p_{{\rm a},j}(\mathbf{q}_2))\mathsf{Z}_{2,jk}(\gamma_{0,k}(\mathbf{q}_2))  +  q_{f,j}   B_{jk}^{\frac{2}{\alpha}} \right) \vphantom{\sum_{j=1}^{2}}\right)^{-1}\!\!\!\!, \label{eqn:suc0} 	
	\end{align}
		where $\gamma_{0,k}(\mathbf{q}_2) = 2^{\frac{R_0}{WM_k}  \big( 1 + \frac{1.28\lambda_u \mathcal{P}_{k}(\mathbf{q}_2)}{\lambda_k}\big) } - 1$ is the equivalent per-user receive SINR requirement when the user associates with the $k$th tier, $ \mathsf{Z}_{1,kj} (x)\triangleq {}_{2}F_1 \big[ -\frac{2}{\alpha}, M_j; 1-\frac{2}{\alpha}; -\frac{x}{M_{jk}B_{jk}} \big] -1$, $\mathsf{Z}_{2,kj}(x) \triangleq \Gamma \left(1-\frac{2}{\alpha}\right)  \Gamma\left(M_j + \frac{2}{\alpha}\right) \Gamma(M_j)^{-1}(\frac{x}{M_{jk}})^{\frac{2}{\alpha}}$, $_{2}F_1[\cdot]$ and $\Gamma(\cdot)$ denote the Gauss hypergeometric function and Gamma function, respectively.
\end{proposition}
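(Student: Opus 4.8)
The plan is to evaluate the inner probability in \eqref{eqn:eql}, namely $\mathbb{P}\big(\frac{W}{U_{f,k}/M_k}\log_2(1+\gamma_{f,k}(\mathbf{q}_2))\ge R_0\big)$, for each file index $f$ and each tier $k$, and then reassemble the double sum. First I would invoke the Average Load Approximation $U_{f,k}\approx 1+\frac{1.28\lambda_u\mathcal P_k(\mathbf{q}_2)}{\lambda_k}$, so that the rate requirement turns into the deterministic SINR threshold $\gamma_{f,k}(\mathbf{q}_2)\ge\gamma_{0,k}(\mathbf{q}_2)$ with $\gamma_{0,k}$ as in the statement. Neglecting the noise ($\sigma^2=0$) and using that the desired channel power $h_{k0}$ is unit-mean exponential, I would condition on the serving distance $r_k$ and the aggregate interference $I_k$ to obtain $\mathbb P(\gamma_{f,k}\ge\gamma_{0,k}\mid r_k)=\mathbb E_{I_k}\big[e^{-\gamma_{0,k}M_k r_k^\alpha I_k/P_k}\big]=\mathcal L_{I_k}\big(\frac{\gamma_{0,k}M_k r_k^\alpha}{P_k}\big)$. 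Thus the problem splits into finding (i) the conditional law of $r_k$ given that the typical user associates with tier $k$ when requesting file $f$, and (ii) the Laplace transform of $I_k$.

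For (i), I would reuse the argument behind Lemma~1: conditioned on requesting file $f$, the serving BS is the strongest-BRP point of the thinned PPP $\Phi_{f,k}$ of density $q_{f,k}\lambda_k$, which yields $f_{r_k\mid f,k}(x)=\frac{2\pi q_{f,k}\lambda_k x}{\mathcal P_{f,k}(\mathbf{q}_2)}\exp\!\big(-\pi x^2\sum_{j=1}^2 q_{f,j}\lambda_j(P_{jk}B_{jk})^{2/\alpha}\big)$. For (ii), I would write $I_k=\sum_{j=1}^2(I_{f,kj}+I_{f',kj})$ as a sum of four independent shot-noise terms; independence follows from the independent thinnings of $\Phi_j$ together with the BS Active Probability Approximation, under which the active BSs of tier $j$ caching (resp.\ not caching) file $f$ form an independent PPP of density $q_{f,j}p_{{\rm a},j}(\mathbf{q}_2)\lambda_j$ (resp.\ $(1-q_{f,j})p_{{\rm a},j}(\mathbf{q}_2)\lambda_j$). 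Applying the PGFL of a PPP with i.i.d.\ marks $h\sim\mathbb{G}(M_j,1/M_j)$ (for which $\mathbb E_h[e^{-\xi h}]=(1+\xi/M_j)^{-M_j}$), each factor becomes $\exp\!\big(-2\pi\rho\int_d^\infty\big(1-(1+\tfrac{\gamma_{0,k}M_k P_j}{M_j P_k}r_k^\alpha\tau^{-\alpha})^{-M_j}\big)\tau\,{\rm d}\tau\big)$ with the appropriate density $\rho$ and inner radius $d$. The crucial bookkeeping point is the exclusion region: for the caching interferers ($I_{f,kj}$) one has $d=(P_{jk}B_{jk})^{1/\alpha}r_k$, since the serving BS has the largest BRP among all file-$f$ caches and, for $j=k$, this is exactly the exclusion of the serving BS itself; for the non-caching interferers ($I_{f',kj}$) one has $d=0$, since those BSs are never association candidates.

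It then remains to evaluate the two radial integrals in closed form. Substituting $\tau=x\,r_k$ (equivalently $v=(\tau/r_k)^2$), the caching-interferer integral equals $(P_{jk}B_{jk})^{2/\alpha}r_k^2$ times $\int_1^\infty\big(1-(1+\tfrac{\gamma_{0,k}}{M_{jk}B_{jk}}v^{-\alpha/2})^{-M_j}\big){\rm d}v$, which I would identify with $(P_{jk}B_{jk})^{2/\alpha}r_k^2\,\mathsf Z_{1,jk}(\gamma_{0,k}(\mathbf{q}_2))$ via the identity $\int_1^\infty\big(1-(1+\kappa v^{-\alpha/2})^{-M}\big){\rm d}v={}_2F_1\big[-\tfrac{2}{\alpha},M;1-\tfrac{2}{\alpha};-\kappa\big]-1$ (an integration by parts followed by an Euler integral representation of ${}_2F_1$); likewise the non-caching-interferer integral equals $P_{jk}^{2/\alpha}r_k^2\,\mathsf Z_{2,jk}(\gamma_{0,k}(\mathbf{q}_2))$ via $\int_0^\infty\big(1-(1+\kappa\tau^{-\alpha})^{-M}\big)\tau\,{\rm d}\tau=\tfrac{\kappa^{2/\alpha}}{2}\,\Gamma(1-\tfrac{2}{\alpha})\Gamma(M+\tfrac{2}{\alpha})/\Gamma(M)$. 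Collecting the exponents, $\mathcal L_{I_k}=\exp(-\pi r_k^2 C_{f,k})$ with $C_{f,k}=\sum_j\lambda_j P_{jk}^{2/\alpha}\big(q_{f,j}p_{{\rm a},j}B_{jk}^{2/\alpha}\mathsf Z_{1,jk}+(1-q_{f,j})p_{{\rm a},j}\mathsf Z_{2,jk}\big)$, and integrating against $f_{r_k\mid f,k}$ gives $\mathcal P_{f,k}(\mathbf{q}_2)\,\mathbb P(\gamma_{f,k}\ge\gamma_{0,k})=\int_0^\infty 2\pi q_{f,k}\lambda_k x\exp\!\big(-\pi x^2(\sum_j q_{f,j}\lambda_j(P_{jk}B_{jk})^{2/\alpha}+C_{f,k})\big){\rm d}x=q_{f,k}\lambda_k/\big(\sum_j q_{f,j}\lambda_j(P_{jk}B_{jk})^{2/\alpha}+C_{f,k}\big)$; dividing numerator and denominator by $\lambda_k$ and summing over $f$ and $k$ with weights $p_f$ yields \eqref{eqn:suc0}, the bare $q_{f,j}B_{jk}^{2/\alpha}$ term inside the denominator being precisely the contribution of the serving-distance density. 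The main obstacle I expect is the exclusion-region bookkeeping in step (ii)---correctly deciding which interference term carries the inner radius $(P_{jk}B_{jk})^{1/\alpha}r_k$ and which does not---together with the closed-form hypergeometric evaluation of the caching-interferer integral; the remaining manipulations are routine stochastic geometry.
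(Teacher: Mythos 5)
Your proposal is correct and follows essentially the same route as the paper's proof: condition on the serving distance with the Lemma-1-type density, compute the Laplace transforms of the four interference components via the PPP PGFL with gamma fading and the exclusion radius $(P_{jk}B_{jk})^{1/\alpha}r_k$ for file-$f$ cachers (zero for non-cachers), obtain $\mathsf{Z}_{1,jk}$ and $\mathsf{Z}_{2,jk}$, integrate over $r_k$, and invoke the active-probability and average-load approximations. The only cosmetic differences are that you impose the mean-load approximation up front (the paper averages over $U_{f,k}$ first and approximates afterward) and evaluate the non-caching integral directly rather than as the $r_{0j}\to 0$ limit of the hypergeometric term, neither of which changes the argument.
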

\begin{IEEEproof}
	See Appendix B.
\end{IEEEproof}

Specifically, by substituting $k=2$, $q_{f,1} = 1$, $M_2 = 1$ and $p_{{\rm a}, 1}(\mathbf{q}_1) = 1$ into \eqref{eqn:suc0}, we  obtain the success probability contributed by the helper tier as \eqref{eqn:Poff},
\begin{figure*}[ht]
	\begin{equation}
	p_{{\rm s}, 2} (\mathbf{q}_2) = \sum_{f=1}^{N_f} \frac{ p_f q_{f,2}}{ \mathsf{C}_{1}(\gamma_{0,2}(\mathbf{q}_2)) + \mathsf{C}_{2}(\gamma_{0,2}(\mathbf{q}_2)) p_{{\rm a},2}(\mathbf{q}_2) +  \mathsf{C}_{3}(\gamma_{0,2}(\mathbf{q}_2)) p_{{\rm a},2}(\mathbf{q}_2) q_{f,2}  + q_{f,2} }, \label{eqn:Poff} 
	\end{equation}
	\hrulefill
\end{figure*}
where $\mathsf{C}_{1}(x) \triangleq \lambda_{12}  (P_{12}B_{12})^{\frac{2}{\alpha}} {}_{2}F_1 \big[ -\frac{2}{\alpha}, M_1; 1-\frac{2}{\alpha}; -\frac{x}{M_{12}B_{12}} \big]$, $\mathsf{C}_{2}(x) \triangleq  \Gamma (1-\frac{2}{\alpha}) \Gamma(M_2 + \frac{2}{\alpha_2}) \Gamma(M_2)^{-1}x^{\frac{2}{\alpha}}$, $\mathsf{C}_{3}(x) \triangleq {}_{2}F_1  \big[ -\frac{2}{\alpha}, M_2; 1-\frac{2}{\alpha}; -x \big] - \mathsf{C}_{2}(x)- 1$.

It is shown that $p_{{\rm s}, 2}(\mathbf q_{2})$ decreases with the helper active probability $p_{{\rm a}, 2}(\mathbf{q}_2)$. This indicates that if a helper can be turn into idle mode when no user associates with it (i.e., $p_{{\rm a}, 2}(\mathbf{q}_2) < 1$), the success probability is higher than that without helper idling (i.e., $p_{{\rm a}, 2}(\mathbf{q}_2) = 1$). Proposition 1 also shows that $p_{{\rm a}, 2}(\mathbf{q}_2)$ and $\gamma_{0,2}(\mathbf{q}_2)$ increase with user density $\lambda_u$. Further considering that $p_{{\rm s}, 2}(\mathbf q_{2})$ decreases with $\gamma_{0,2}(\mathbf{q}_2)$, $p_{{\rm s}, 2}(\mathbf q_{2})$ decreases with user density.

\subsection{Optimal Caching Policy}
The optimal caching probability that maximizes the success probability can be found from
\begin{subequations}
	\begin{align}
	\text{\em Problem 1:}\quad \max_{\mathbf q_2}~ & p_{\rm s}(\mathbf q_{2})  \nonumber \\
	{\rm s.t.}~ & \sum_{i=1}^{N_f} q_{f,2} \leq N_c  \label{eqn:con1}  \\
	& 0\leq q_{f,2} \leq 1, ~ f = 1, \cdots, N_f \label{eqn:con2}
	\end{align}
\end{subequations}
where \eqref{eqn:con1} is equivalent to the cache size constraint (i.e., the number of cached file cannot exceed the cache size) for each helper as proved in \cite{Blaszczyszyn2015optimal}, and \eqref{eqn:con2} is the probability constraint.

This problem is not concave in general, because both $p_{{\rm a}, 2}(\mathbf{q}_2)$ and $\gamma_{0,2}(\mathbf{q}_2)$ in the objective function depend on $\mathcal{P}_2(\mathbf{q}_2)$, which is a complicated function as shown in \eqref{eqn:asso}. Moreover, the summation over two tiers make the problem more complicated. Therefore, only a local optimal solution can be found, say by using interior point method \cite{boyd2004convex}. Such a solution is not only of high complexity when the dimension of optimization variable $\mathbf q_2$, i.e., $N_f$, is large, but also hard to provide design guideline for practical systems.

From \eqref{eqn:eql}, we can see that the success probability contributed by each tier is the product of tier association probability and the probability of tier success probability. Since we assume that $\lambda_u \gg \lambda_1$ and $M_1$ is not large, when user associates with the MBS tier, the time-frequency resources allocated to the user could be too stringent to support the rate requirement. On the other hand, the probability of the user associated with the MBS tier is low when helper nodes are densely deployed, i.e., $\lambda_2 \gg \lambda_1$. Therefore, we can safely neglect the impact of the success probability contributed by the MBS tier $p_{{\rm s}, 1}(\mathbf{q}_2)$,  and maximizing the success probability contributed by the helper tier $p_{{\rm s}, 2}(\mathbf{q}_2)$ as,\footnote{In Section V, we show that for rate requirement $R_0 > 1$ Mbps, when $\lambda_u/\lambda_1 = 50$ and $\lambda_2/\lambda_1 = 50$, the success probability contributed by the first tier $p_{{\rm s}, 1}(\mathbf{q}_2)< 0.05$, which can be safely neglected. We also show that maximizing $p_{{\rm s},2}(\mathbf{q}_2)$ can achieve almost the same performance as maximizing $p_{{\rm s}}(\mathbf{q}_2)$ in more general cases.}
\begin{subequations}
	\begin{align}
	\text{\em Problem 2:}\quad \max_{\mathbf q_2}~ & p_{{\rm s}, 2}(\mathbf q_{2})  \nonumber \\
	{\rm s.t.}~ & \eqref{eqn:con1}, \eqref{eqn:con2}  \nonumber
	\end{align}
\end{subequations}

In the following we first solve Problem 2 in general case, and then analyze the behavior of optimal caching policy and the impact of system settings in high and low user density cases.

\subsubsection{General Case}
Problem 2 is still not concave in general since the objective function still related to $p_{{\rm a}, 2}(\mathbf{q}_2)$ and $\gamma_{0,2}(\mathbf{q}_2)$.
Considering that it is the complicated expression of $\mathcal{P}_2(\mathbf{q}_2)$ that makes the problem non-concave, in the following, we first introduce a $\mathbf q_2$-independent upper bound for  $\mathcal{P}_2(\mathbf{q}_2)$, which yields an $\mathbf q_2$-independent upper bound for $p_{{\rm a}, 2}(\mathbf{q}_2)$ (denoted by $\bar p_{{\rm a}, 2}$) and $\gamma_{0,2}(\mathbf{q}_2)$ (denoted by $\bar{\gamma}_{0,2}$) respectively because $p_{{\rm a}, 2}(\mathbf{q}_2)$ and $\gamma_{0,2}(\mathbf{q}_2)$ increase with $\mathcal{P}_2(\mathbf{q}_2)$ as shown in \eqref{eqn:pa} and Proposition 1, respectively. Since $p_{{\rm s}, 2}(\mathbf{q}_2)$ decreases with $p_{{\rm a}, 2}(\mathbf{q}_2)$ as shown in \eqref{eqn:Poff} and decreases with the equivalent per-user receive SINR requirement $\gamma_{0,2}(\mathbf{q}_2)$, by substituting $\bar p_{{\rm a}, 2}$ and $\bar{\gamma}_{0,2}$ into \eqref{eqn:Poff}, we can obtain the lower bound of $p_{{\rm s}, 2}(\mathbf{q}_2)$ (denoted as $\ubar p_{{\rm s}, 2}(\mathbf{q}_2)$), which is shown to be a concave function of $\mathbf{q}_2$ and can be solved in closed-form.


Denote the caching probability that maximizes $\mathcal{P}_2(\mathbf{q}_2)$ as $\mathbf q_2^o \triangleq [ q_{f,2}^o]_{f = 1, \cdots, N_f}$. Then, for any $\mathbf q_{2}$ satisfying \eqref{eqn:con1} and \eqref{eqn:con2}, we have $\mathcal{P}_2 (\mathbf q_2) \leq \mathcal{P}_2(\mathbf q_2^o)$. Therefore, $\mathcal{P}_2(\mathbf q_2^o)$ can be used as the $\mathbf{q}_2$-independent upper bound of $\mathcal{P}_2(\mathbf{q}_2)$, which is tight when $\mathbf{q}_2 = \mathbf{q}_2^o$ and can be obtained by solving the following concave problem
\begin{align}
\text{\em Problem 3:} \max_{\mathbf q_2}~ & \mathcal{P}_2(\mathbf{q}_2) = \sum_{f=1}^{N_f}  \frac{p_f q_{f,2}}{ \lambda_{12} (P_{12} B_{12}){}^{\frac{2}{\alpha}}+ q_{f,2} } \label{eqn:obj} \\
{\rm s.t.}~ & \eqref{eqn:con1},\eqref{eqn:con2}   \nonumber
\end{align}
where \eqref{eqn:obj} is obtained by substituting $\mathbf q_1 = \mathbf 1$ into \eqref{eqn:asso}.

It can be easily proved that the Hessian matrix of $\mathcal{P}_2(\mathbf{q}_2)$ is negative definite. Further considering that constraints \eqref{eqn:con1} and \eqref{eqn:con2} are linear, Problem 3 is concave. Then, from the Karush-Kuhn-Tucker (KKT) condition, we obtain the optimal solution of Problem 3 as
\begin{equation}
	q_{f,2}^o = \left[ \frac{\sqrt{\lambda_{12} (P_{12} B_{12})^{\frac{2}{\alpha}}} }{\sqrt{\mu}}\sqrt{p_f} - \lambda_{12} (P_{12} B_{12})^{\frac{2}{\alpha}}  \right]_0^1\!, \!  \label{eqn:qo}
\end{equation}
where $[x]_0^1 = \max\{\min\{x,1\},0\}$ denotes that $x$ is truncated by $0$ and $1$, and the Lagrange multiplier $\mu$ satisfying $\sum_{f=1}^{N_f}q_{f,2}^o = N_c$ can be efficiently found by bisection searching.

By substituting $\mathcal{P}_2(\mathbf{q}_2^o)$ into \eqref{eqn:Poff}, we can obtain the lower bound of success probability as
\begin{align}
&\ubar p_{{\rm s}, 2}(\mathbf q_2) = \nonumber\\
& \sum_{f=1}^{N_f} \frac{ p_f q_{f,2}}{ \mathsf{C}_{1}(\bar \gamma_{0,2}) + \bar{p}_{{\rm a},2}  \mathsf{C}_{2}(\bar \gamma_{0,2}) +  (\bar{p}_{{\rm a},2} \mathsf{C}_{3}(\bar\gamma_{0,2}) + 1) q_{f,2} }. \!\! \label{eqn:Pofflower}
\end{align}
where $\bar \gamma_{0,2} = 2^{\frac{R_0}{W}  \big( 1 + \frac{1.28\lambda_u \mathcal{P}_2(\mathbf q_2^o)}{\lambda_2}\big) } - 1$ and $\bar p_{{\rm a}, 2} =  1 - \big(1 + \tfrac{\mathcal{P}_2(\mathbf q_2^o)\lambda_u}{3.5\lambda_2}\big)^{-3.5}$.
\begin{proposition}
The optimal caching probability that maximizes the lower bound of the success probability $\ubar p_{{\rm s}, 2}(\mathbf q_2)$ is
\begin{align}
\underline{q}_{f,2}^*  = & \left[ \frac{\sqrt{ \mathsf{C}_{1}(\bar\gamma_{0,2}) +  \bar{p}_{{\rm a},2} \mathsf{C}_{2}(\bar\gamma_{0,2})} }{\sqrt{\nu} ( \bar{p}_{{\rm a},2} C_{3,\bar\gamma_{0,2}} + 1)} \sqrt{p_f}\right. \nonumber\\
&\left.-\frac{\mathsf{C}_{1}(\bar\gamma_{0,2}) +  \bar{p}_{{\rm a},2} \mathsf{C}_{2}(\bar\gamma_{0,2})}{ \bar{p}_{{\rm a},2} C_{3, \bar\gamma_{0,2}} + 1} \right]_0^1, \label{eqn:optlower}
\end{align}
where $\nu$ satisfying $\sum_{f=1}^{N_f}\ubar{q}_{f,2}^* = N_c$ can be found by bisection searching.
\end{proposition}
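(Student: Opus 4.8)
The plan is to exploit that $\ubar p_{{\rm s},2}(\mathbf q_2)$ in \eqref{eqn:Pofflower} has exactly the same separable structure as the objective of Problem~3, so the KKT argument used there carries over almost verbatim. Abbreviate the $\mathbf q_2$-independent quantities as $a \triangleq \mathsf{C}_{1}(\bar\gamma_{0,2}) + \bar p_{{\rm a},2}\,\mathsf{C}_{2}(\bar\gamma_{0,2})$ and $b \triangleq \bar p_{{\rm a},2}\,\mathsf{C}_{3}(\bar\gamma_{0,2}) + 1$, so that $\ubar p_{{\rm s},2}(\mathbf q_2) = \sum_{f=1}^{N_f} g_f(q_{f,2})$ with $g_f(q)\triangleq p_f q/(a+bq)$. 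The first step is to establish $a>0$ and $b>0$. Since $\ubar p_{{\rm s},2}(\mathbf q_2)$ is obtained from the valid probability expression \eqref{eqn:Poff} by substituting the legitimate upper bounds $\bar p_{{\rm a},2}$ and $\bar\gamma_{0,2}$, its per-term denominator $a+bq_{f,2}$ must stay positive for every $q_{f,2}\in[0,1]$; evaluating at $q_{f,2}=0$ yields $a>0$, and positivity at $q_{f,2}=1$ combined with $a>0$ yields $b>0$ (alternatively one checks these directly from the definitions of $\mathsf{C}_1,\mathsf{C}_2,\mathsf{C}_3$).

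Given $a,b>0$, each $g_f$ is strictly increasing and strictly concave on $[0,1]$, because $g_f'(q)=p_f a/(a+bq)^2>0$ and $g_f''(q)=-2p_f a b/(a+bq)^3<0$. Hence $\ubar p_{{\rm s},2}$ is strictly concave, and since \eqref{eqn:con1}--\eqref{eqn:con2} define a polytope, maximizing $\ubar p_{{\rm s},2}$ over this set is a concave program whose KKT conditions are necessary and sufficient for global optimality. Because the objective is coordinatewise increasing, the cache constraint \eqref{eqn:con1} is active at the optimum when $N_c<N_f$ (the case $N_c\ge N_f$ being trivial). Introducing the multiplier $\nu\ge 0$ for \eqref{eqn:con1} and $\eta_f,\xi_f\ge 0$ for the box constraints, the stationarity condition reads $g_f'(\ubar q_{f,2}^*) = \nu - \eta_f + \xi_f$.

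The next step is the standard three-case split by complementary slackness. When $0<\ubar q_{f,2}^*<1$ we have $\eta_f=\xi_f=0$, so $p_f a/(a+b\ubar q_{f,2}^*)^2=\nu$, and taking the positive root gives $\ubar q_{f,2}^* = \frac{\sqrt a}{b\sqrt\nu}\sqrt{p_f} - \frac{a}{b}$, which is precisely the bracketed term in \eqref{eqn:optlower}. When $\ubar q_{f,2}^*=0$, $\xi_f=0$ and $\eta_f\ge 0$ force $g_f'(0)=\nu-\eta_f\le\nu$, equivalently $\frac{\sqrt a}{b\sqrt\nu}\sqrt{p_f}-\frac{a}{b}\le 0$, consistent with truncation at $0$; symmetrically $\ubar q_{f,2}^*=1$ forces the bracketed term to be $\ge 1$. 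Combining the cases yields $\ubar q_{f,2}^* = \big[\frac{\sqrt a}{b\sqrt\nu}\sqrt{p_f}-\frac{a}{b}\big]_0^1$, i.e. \eqref{eqn:optlower}. Finally, each $\ubar q_{f,2}^*$ is nonincreasing in $\nu$, hence so is $\sum_f \ubar q_{f,2}^*$, so the $\nu$ satisfying $\sum_f \ubar q_{f,2}^* = N_c$ is found by a one-dimensional bisection, completing the argument.

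The only genuinely nonroutine point I anticipate is the positivity of the effective coefficients $a$ and $b$, on which the concavity of each $g_f$ (and hence of the whole program) rests: $\mathsf{C}_3(\bar\gamma_{0,2})$ is a difference of a hypergeometric value, a Gamma-function term and $1$, and its sign is not manifest, so I expect to rely on the ``valid lower bound of a probability'' observation in the first paragraph rather than on a direct estimate. Once concavity is secured, the remaining derivation is the same KKT computation already carried out for Problem~3.
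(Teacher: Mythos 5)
Your route is essentially the paper's: the published proof simply observes that, because $\bar p_{{\rm a},2}$ and $\bar\gamma_{0,2}$ are $\mathbf q_2$-independent, \eqref{eqn:Pofflower} has the same separable structure as the objective of Problem 3, hence is concave, and then reuses the KKT computation of Problem 3 to get \eqref{eqn:optlower}; your write-up is that argument with the stationarity/complementary-slackness cases and the bisection monotonicity spelled out, all of which is fine.

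One step needs repair, though, and it is exactly the point you flagged as nonroutine. Writing $a=\mathsf{C}_{1}(\bar\gamma_{0,2})+\bar p_{{\rm a},2}\mathsf{C}_{2}(\bar\gamma_{0,2})$ and $b=\bar p_{{\rm a},2}\mathsf{C}_{3}(\bar\gamma_{0,2})+1$, your inference ``denominator positive at $q_{f,2}=0$ and $q_{f,2}=1$, hence $a>0$ and $b>0$'' is a non sequitur: $a>0$ and $a+b>0$ do not imply $b>0$ (e.g.\ $a=2$, $b=-1$), and the sign of $b$ is precisely what concavity of $q\mapsto p_f q/(a+bq)$ hinges on, since $g_f''$ has the sign of $-ab$. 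The fix is to establish $-1\le \mathsf{C}_{3}(\bar\gamma_{0,2})\le 0$ (then $b\ge 1-\bar p_{{\rm a},2}\ge 0$, with strict positivity for finite $\bar\gamma_{0,2}$ and $\bar p_{{\rm a},2}<1$). This follows from the definitions: $-\mathsf{C}_3(x)-0=\mathsf{Z}_{2,22}(x)-\mathsf{Z}_{1,22}(x)$ is, by the Laplace-transform derivation in Appendix B, the contribution of interferers inside the exclusion disc of radius $r$, namely $\tfrac{2}{r^2}\int_{0}^{r}\bigl(1-(1+x\,r^{\alpha}v^{-\alpha})^{-1}\bigr)v\,{\rm d}v$, whose integrand is between $0$ and $1$, so $0\le -\mathsf{C}_3(x)\le 1$; equivalently, the facts $\mathsf{C}_2\ge 0$, $\mathsf{C}_3\le 0$ and $\mathsf{C}_3\to -1$ that the paper invokes in Appendices C and D give the same bound. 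With that bound stated, the rest of your KKT derivation and the bisection argument for $\nu$ go through unchanged.
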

\begin{IEEEproof}
Since $\bar p_{{\rm a}, 2}$ and $\bar \gamma_{0,2}$ do not depend on $q_{f,2}$,  \eqref{eqn:Pofflower} has the same function structure as \eqref{eqn:obj} and hence is a concave function. Then, by using the KKT condition similar to solving Problem 3, we obtain \eqref{eqn:optlower}.
\end{IEEEproof}

As shown in \eqref{eqn:optlower}, $\ubar q_{f,2}^*$ is non-increasing with $f$ since $p_f$ decreases with $f$, which coincides with the intuition that the file with higher popularity should be cached with higher probability.
Moreover, since  $\ubar q_{f,2}^*$ is truncated by $0$ and $1$, there may exist some files of high popularity (e.g. $f = 1,\cdots, N_1, ~N_1 \leq N_f$) with caching probability of $1$ (i.e., cached everywhere), and some files of low popularity (e.g. $f = N_f - N_0 + 1, \cdots, N_f, ~N_0 \leq N_f - N_1 + 1$) with caching probability of $0$ (i.e., not cached at all helpers).
For $\ubar q_{f,2}^* \in (0,1)$, considering  \eqref{eqn:pf}, the relation between caching probability and file popularity rank in \eqref{eqn:optlower} obeys a shifted power law with exponent $-\delta/2$, which is very different from the noise-limited scenario considered in \cite{rao2015optimal}.

\begin{figure}[!htb]
\centering
\includegraphics[width=0.75\textwidth]{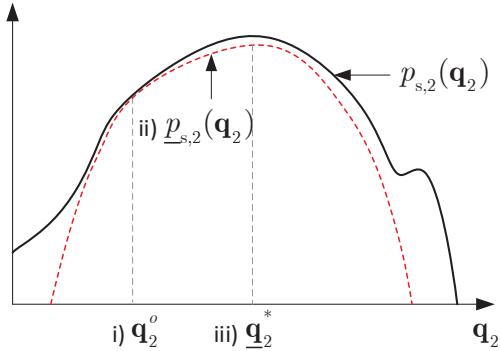}
\caption{Illustration the procedure of finding the closed-form expression of caching probability $ \mathbf{\underline{q}}_2^*$.  i) Solve Problem 3 to obtain $\mathbf q_2^0$ and $\mathcal{P}_2(\mathbf q_2^o)$. ii) Substitute $\mathcal{P}_2(\mathbf q_2^o)$ into \eqref{eqn:obj} to obtain the concave lower bound $\underline p_{{\rm s}, 2}(\mathbf{q}_2)$, which is tight when $\mathbf{q}_2 = \mathbf{q}_2^o$. iii) Maximize the lower bound $ \underline p_{{\rm s},2}(\mathbf q_2)$ and finally obtain $ \mathbf{\underline{q}}_2^*$. } \label{fig:lowerbound}
\end{figure}

In Fig. \ref{fig:lowerbound}, we illustrate the procedure of finding the closed-form expression of the caching probability $\mathbf{\ubar q}_2^* \triangleq [\ubar q_{f,2}^*]_{f = 1, \cdots, N_f}$. Later, we show that the lower bound is tight, i.e., $\ubar p_{{\rm s},2}(\mathbf{q}_2)= p_{{\rm s}, 2}(\mathbf{q}_2)$, in low user density case, which means that $\mathbf{\ubar q}_{2}^*$ is the optimal solution of Problem 2 when user density is low. For general case, we will show that the caching probability $\ubar{q}_{f,2}^*$ can achieve almost the same success probability as the caching probability found by inter-point method in Section V. Since the computation of $\ubar{q}_{f,2}^*$ only requires twice bisection searches on two scalars, i.e., $\mu$ and $\nu$, it can be obtained with much lower complexity than the interior point method when $N_f$ is large.

Based on \eqref{eqn:optlower}, we can analyze the impact of user density on the optimal caching policy.

\begin{corollary}
	For any $\ubar{q}_{f,2}^*$, $\ubar{q}_{f+1,2}^* \in (0,1)$, $\ubar{q}_{f,2}^* - \ubar{q}_{f+1,2}^*$ increases with $\lambda_u/\lambda_2$.
\end{corollary}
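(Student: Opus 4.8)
The plan is to reduce the statement to the monotonicity of a single scalar. When $\ubar{q}_{f,2}^*,\ubar{q}_{f+1,2}^*\in(0,1)$ the truncation in \eqref{eqn:optlower} is inactive for both files, so, writing $a\triangleq\mathsf{C}_1(\bar\gamma_{0,2})+\bar p_{{\rm a},2}\mathsf{C}_2(\bar\gamma_{0,2})$ and $b\triangleq\bar p_{{\rm a},2}\mathsf{C}_3(\bar\gamma_{0,2})+1$, we have $\ubar{q}_{f,2}^*-\ubar{q}_{f+1,2}^*=\frac{\sqrt{a}}{\sqrt{\nu}\,b}(\sqrt{p_f}-\sqrt{p_{f+1}})$. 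Since the Zipf weights are strictly decreasing, $\sqrt{p_f}-\sqrt{p_{f+1}}$ is a positive constant, so it suffices to show the ``slope'' $s\triangleq\sqrt{a}/(\sqrt{\nu}\,b)$ increases with $\lambda_u/\lambda_2$. To eliminate $\nu$ I would invoke the (tight) cache constraint $\sum_{f}\ubar{q}_{f,2}^*=N_c$: the $N_1$ files cached everywhere contribute $1$ each, the $N_0$ uncached files contribute $0$, and each of the $N_f-N_1-N_0$ fractionally cached files contributes $s\sqrt{p_f}-a/b$, which yields $s=\big[(N_c-N_1)+(a/b)(N_f-N_1-N_0)\big]/\sum_{f:\,\ubar{q}_{f,2}^*\in(0,1)}\sqrt{p_f}$. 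Treating $\lambda_u/\lambda_2$ as varied through $\lambda_u$ with $\lambda_1,\lambda_2$ (hence $\lambda_{12}$, $\mathbf{q}_2^o$, $\mathcal{P}_2(\mathbf{q}_2^o)$ and the truncation pattern $N_1,N_0$) fixed, the denominator and $N_c-N_1>0$ and $N_f-N_1-N_0\ge2$ (it includes $f$ and $f+1$) are constants, so $s$ increases with $\lambda_u/\lambda_2$ iff $g\triangleq a/b$ does.

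It remains to prove $g=\big(\mathsf{C}_1(\bar\gamma_{0,2})+\bar p_{{\rm a},2}\mathsf{C}_2(\bar\gamma_{0,2})\big)/\big(\bar p_{{\rm a},2}\mathsf{C}_3(\bar\gamma_{0,2})+1\big)$ increases with $\lambda_u/\lambda_2$. Both $\bar\gamma_{0,2}$ and $\bar p_{{\rm a},2}$ increase with $\lambda_u/\lambda_2$ (immediate from their expressions in Proposition 2 and \eqref{eqn:pa}), so it is enough to show that $g$, viewed as a function $g(x,\phi)$ of $x=\bar\gamma_{0,2}$ and $\phi=\bar p_{{\rm a},2}$, is nondecreasing in each argument. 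Differentiation gives $\partial g/\partial\phi\propto\mathsf{C}_2(x)-\mathsf{C}_1(x)\mathsf{C}_3(x)$ and $\partial g/\partial x\propto\big(\mathsf{C}_1'(x)+\phi\mathsf{C}_2'(x)\big)\big(\phi\mathsf{C}_3(x)+1\big)-\phi\,\mathsf{C}_3'(x)\big(\mathsf{C}_1(x)+\phi\mathsf{C}_2(x)\big)$. For the coefficient functions of \eqref{eqn:Poff} (where $M_2=1$) I would then establish: (i) $\mathsf{C}_1$ is increasing, via $\frac{d}{dx}{}_{2}F_1[-\tfrac{2}{\alpha},M_1;1-\tfrac{2}{\alpha};-x]=\frac{2M_1/\alpha}{1-2/\alpha}\,{}_{2}F_1[1-\tfrac{2}{\alpha},M_1+1;2-\tfrac{2}{\alpha};-x]$ together with the Pfaff identity ${}_{2}F_1[1-\tfrac{2}{\alpha},M_1+1;2-\tfrac{2}{\alpha};-x]=(1+x)^{-(M_1+1)}{}_{2}F_1[1,M_1+1;2-\tfrac{2}{\alpha};\tfrac{x}{1+x}]$, whose right side is an all-positive power series; (ii) $\mathsf{C}_2(x)=\Gamma(1-\tfrac{2}{\alpha})\Gamma(1+\tfrac{2}{\alpha})\,x^{2/\alpha}$ is increasing; and (iii) the representation $\mathsf{C}_3(x)=-x^{2/\alpha}\int_0^{x^{-2/\alpha}}(1+u^{\alpha/2})^{-1}\,{\rm d}u$, obtained from $\mathsf{C}_2(x)+\mathsf{C}_3(x)+1={}_{2}F_1[-\tfrac{2}{\alpha},1;1-\tfrac{2}{\alpha};-x]$ and the standard identities ${}_{2}F_1[-\tfrac{2}{\alpha},1;1-\tfrac{2}{\alpha};-x]-1=x^{2/\alpha}\int_{x^{-2/\alpha}}^{\infty}(1+u^{\alpha/2})^{-1}\,{\rm d}u$ and $\Gamma(1-\tfrac{2}{\alpha})\Gamma(1+\tfrac{2}{\alpha})=\int_0^{\infty}(1+u^{\alpha/2})^{-1}\,{\rm d}u$. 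From (iii), $(1+u^{\alpha/2})^{-1}<1$ for $u>0$ gives $\mathsf{C}_3(x)\in(-1,0)$, hence $\phi\mathsf{C}_3(x)+1>0$ for $\phi\in[0,1]$; and substituting $v=x^{-2/\alpha}$, $\mathsf{C}_3=-\tfrac{1}{v}\int_0^v(1+u^{\alpha/2})^{-1}\,{\rm d}u$ is nondecreasing in $v$ (since $\int_0^v(1+u^{\alpha/2})^{-1}\,{\rm d}u\ge v(1+v^{\alpha/2})^{-1}$, the integrand being decreasing), so $\mathsf{C}_3$ decreases in $x$, i.e.\ $\mathsf{C}_3'\le0$. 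Plugging in, $\partial g/\partial\phi\propto\mathsf{C}_2-\mathsf{C}_1\mathsf{C}_3>0$ because $\mathsf{C}_3<0<\mathsf{C}_1,\mathsf{C}_2$; and in $\partial g/\partial x$ the first summand is strictly positive (as $\mathsf{C}_1'>0$ and $\phi\mathsf{C}_3+1>0$) while the second is nonnegative (as $-\mathsf{C}_3'\ge0$ and $\mathsf{C}_1+\phi\mathsf{C}_2>0$), so $\partial g/\partial x>0$. Hence $g$, and therefore $s$ and $\ubar{q}_{f,2}^*-\ubar{q}_{f+1,2}^*$, increases with $\lambda_u/\lambda_2$.

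I expect the technical core to be step (iii) and the monotonicity of $\mathsf{C}_1$: once $g$ is isolated the rest is routine, but establishing $\mathsf{C}_3(x)\in(-1,0)$ with $\mathsf{C}_3$ decreasing and $\mathsf{C}_1$ increasing requires the integral form of the interference functional and a Pfaff/Euler reduction of the relevant Gauss hypergeometric functions, and carrying a general $M_1$ through cleanly is the delicate bookkeeping. A secondary point to state (not prove) is that the truncation pattern $N_1,N_0$ is held fixed over the considered range of $\lambda_u/\lambda_2$; this is the generic situation, holds for small perturbations, and is already implicit in the hypothesis that $\ubar{q}_{f,2}^*$ and $\ubar{q}_{f+1,2}^*$ both remain in $(0,1)$.
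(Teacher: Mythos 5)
Your proposal is correct, and its skeleton is exactly the paper's Appendix C: write the difference as $\ubar q_{f,2}^*-\ubar q_{f+1,2}^*=\tfrac{1}{c_2}\sqrt{\tfrac{c_1}{\nu}}\,(\sqrt{p_f}-\sqrt{p_{f+1}})$ with $c_1=\mathsf{C}_1(\bar\gamma_{0,2})+\bar p_{{\rm a},2}\mathsf{C}_2(\bar\gamma_{0,2})$, $c_2=\bar p_{{\rm a},2}\mathsf{C}_3(\bar\gamma_{0,2})+1$, eliminate $\nu$ through the cache constraint to get the slope $\tfrac{1}{c_2}\sqrt{\tfrac{c_1}{\nu}}=\big(N_c-N_1+(N_f-N_0-N_1)\tfrac{c_1}{c_2}\big)/\sum_{f}\sqrt{p_f}$, and reduce everything to monotonicity of $c_1/c_2$ (your $g=a/b$) in $\lambda_u/\lambda_2$; your implicit fixing of the truncation pattern $N_1,N_0$ is the same "without loss of generality" the paper makes. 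Where you go beyond the paper is in how $c_1/c_2$ is shown to grow: the paper only tracks the $\bar p_{{\rm a},2}$ channel, arguing $\partial(c_1/c_2)/\partial\bar p_{{\rm a},2}\ge 0$ from the sign facts $\mathsf{C}_2\ge 0$, $\mathsf{C}_3\le 0$, and then invokes that $\bar p_{{\rm a},2}$ increases with $\lambda_u/\lambda_2$; it never addresses the fact that $\bar\gamma_{0,2}$ also increases with $\lambda_u/\lambda_2$ and enters $c_1,c_2$, nor does it verify $c_2>0$. Your $\partial g/\partial x$ analysis — the derivative/Pfaff argument for $\mathsf{C}_1'>0$ and the integral representation $\mathsf{C}_3(x)=-x^{2/\alpha}\int_0^{x^{-2/\alpha}}(1+u^{\alpha/2})^{-1}{\rm d}u$, which simultaneously yields $\mathsf{C}_3\in(-1,0)$ (hence $c_2>0$) and $\mathsf{C}_3'\le 0$ — closes exactly these gaps, so your version is a strictly more complete proof of the same statement at the cost of the hypergeometric bookkeeping; the paper's version buys brevity by leaving the $\bar\gamma_{0,2}$ dependence untreated.
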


\begin{IEEEproof}
See Appendix C.
\end{IEEEproof}

Corollary 1 indicates that when the ratio of user-to-helper density increases, the files with higher popularity have more chances to be cached and \emph{vice versa}, which reflects a trend towards caching the most popular files everywhere. On the contrary, when the ratio reduces, say from $\lambda_u/\lambda_2 \to \infty$ (implies no BS idling) to finite values (implies with BS idling), $\ubar{q}_{f,2}^* - \ubar{q}_{f+1,2}^*$ decreases. This indicates that BS idling makes the caching probability distribution less skewed, i.e, the diversity of cached files increases.

To further analyze how the per-user data rate requirement, BS density and transmit power of different tiers affect the caching policy, we derive the following corollaries in  extreme cases.
\subsubsection{High user density ($\lambda_u / \lambda_2$ is large so that $p_{{\rm a}, 2}(\mathbf{q}_2) \to 1$) }
In this case, all the helpers are active. The lower bound of success probability becomes
\begin{equation}
\ubar p_{{\rm s}, 2}(\mathbf q_{2})  = \sum_{f=1}^{N_f} \frac{ p_f q_{f,2}}{ \mathsf{C}_{1}(\bar\gamma_{0,2}) +  (\mathsf{C}_{3}(\bar\gamma_{0,2}) + 1)q_{f,2} }. \label{eqn:case1}
\end{equation}
Similar to  deriving \eqref{eqn:optlower}, we can obtain the  optimal caching probability maximizing $\ubar p_{{\rm s}}$ for 	$p_{{\rm a}, 2}(\mathbf{q}_2) \to 1$ as
\begin{equation}
	\ubar q_{f,2}^* =  \left[ \frac{\sqrt{\mathsf{C}_{1,\bar \gamma_{0,2}} + \mathsf{C}_{2}(\bar\gamma_{0,2})} }{\sqrt{\nu} (\mathsf{C}_{3}(\bar\gamma_{0,2}) + 1)} \sqrt{p_f}-\frac{\mathsf{C}_{1}(\bar\gamma_{0,2}) + \mathsf{C}_{2}(\bar\gamma_{0,2})}{\mathsf{C}_{3}(\bar\gamma_{0,2}) + 1} \right]_0^1, \label{eqn:opt high}
\end{equation}
where the Lagrange multiplier $\nu$ satisfying $ \sum_{f=1}^{N_f} \ubar q_{f,2}^* = N_c$ can be found by bisection searching.

\begin{corollary}
	When $R_0 \to \infty$ or $\lambda_u/\lambda_2 \to \infty$, $\ubar q_{1,2}^*, \cdots, \ubar q_{N_c,2}^* = 1$ and $\ubar q_{N_c+1,2}^*, \cdots, \ubar q_{N_f, 2}^* = 0$.
\end{corollary}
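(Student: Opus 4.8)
\emph{Proof proposal.} The plan is to argue directly from the closed form \eqref{eqn:opt high}, tracking how its building blocks behave as the equivalent receive-SINR target $\bar\gamma_{0,2}$ diverges. First I would note that both hypotheses force $\bar\gamma_{0,2}\to\infty$: since $\bar\gamma_{0,2}=2^{\frac{R_0}{W}\left(1+1.28\lambda_u\mathcal P_2(\mathbf q_2^o)/\lambda_2\right)}-1$ and, by \eqref{eqn:obj}--\eqref{eqn:qo}, $\mathcal P_2(\mathbf q_2^o)$ does not depend on $\lambda_u$, letting $R_0\to\infty$ or $\lambda_u/\lambda_2\to\infty$ sends the exponent to infinity, hence $\bar\gamma_{0,2}\to\infty$.

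Next I would examine $\mathsf C_1,\mathsf C_2,\mathsf C_3$ evaluated at $\bar\gamma_{0,2}$. The factor $\mathsf C_2(x)=\Gamma(1-\tfrac{2}{\alpha})\Gamma(M_2+\tfrac{2}{\alpha})\Gamma(M_2)^{-1}x^{2/\alpha}$ is an explicit positive power of $x$ that diverges (recall $\alpha>2$), and $\mathsf C_1(x)\ge0$, so $\mathsf C_1(x)+\mathsf C_2(x)\to\infty$. For $\mathsf C_3$ I would appeal to the large-argument asymptotics of the Gauss hypergeometric function: with the present parameters (note the $c-a=1$ structure), the leading $x^{2/\alpha}$ term of ${}_2F_1[-\tfrac{2}{\alpha},M_2;1-\tfrac{2}{\alpha};-x]$ coincides \emph{exactly} with $\mathsf C_2(x)$, the remainder being $O(x^{-M_2})$; hence $\mathsf C_3(x)={}_2F_1[\cdots]-\mathsf C_2(x)-1\to-1$, so $\mathsf C_3(x)+1\to0^{+}$ (it stays strictly positive, being the coefficient of $q_{f,2}$ in the concave denominator of \eqref{eqn:case1}). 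Consequently the sharpening factor
\[
\beta\;\triangleq\;\frac{\mathsf C_1(\bar\gamma_{0,2})+\mathsf C_2(\bar\gamma_{0,2})}{\mathsf C_3(\bar\gamma_{0,2})+1}\longrightarrow\infty .
\]
Even the weaker claim that $\mathsf C_3+1$ merely stays bounded would already give $\beta\to\infty$, since the numerator diverges.

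Then I would rewrite \eqref{eqn:opt high} in the compact form $\ubar q_{f,2}^*=\big[\beta\big(\sqrt{p_f}/\theta-1\big)\big]_0^1$ with $\theta\triangleq\sqrt{(\mathsf C_1(\bar\gamma_{0,2})+\mathsf C_2(\bar\gamma_{0,2}))\,\nu}$, where $\nu>0$ is the multiplier fixed by $\sum_f\ubar q_{f,2}^*=N_c$. Since $N_c\ge1$, at least one $\ubar q_{f,2}^*>0$, which forces $0<\theta<\sqrt{p_1}$, so $\theta$ stays in a bounded interval. A file carries a \emph{fractional} value $\ubar q_{f,2}^*\in(0,1)$ exactly when $\theta^2<p_f<\theta^2(1+1/\beta)^2$; this window has length $O(\theta^2/\beta)\to0$, so once $\bar\gamma_{0,2}$ is large enough it contains at most one of the distinct Zipf masses $p_1>\cdots>p_{N_f}$. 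But $\ubar q_{f,2}^*$ is non-increasing in $f$, so a lone fractional index $f_0$ would give $\sum_f\ubar q_{f,2}^*=(f_0-1)+\ubar q_{f_0,2}^*\notin\mathbb Z$, contradicting $\sum_f\ubar q_{f,2}^*=N_c\in\mathbb Z$. Hence for $\bar\gamma_{0,2}$ large every $\ubar q_{f,2}^*\in\{0,1\}$; there are exactly $N_c$ ones and the sequence is non-increasing, so $\ubar q_{1,2}^*=\cdots=\ubar q_{N_c,2}^*=1$ and $\ubar q_{N_c+1,2}^*=\cdots=\ubar q_{N_f,2}^*=0$ (the statement being vacuous when $N_c\ge N_f$).

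I expect the crux to be the hypergeometric step of the second paragraph: establishing that the growing $x^{2/\alpha}$ part of ${}_2F_1[-\tfrac{2}{\alpha},M_2;1-\tfrac{2}{\alpha};-x]$ cancels against $\mathsf C_2(x)$, so that $\mathsf C_3(x)+1\to0$ (or at least remains bounded), which is what makes $\beta$ blow up and must be argued from the ${}_2F_1$ large-argument expansion rather than hand-waved. Once $\beta\to\infty$ is in hand, the finishing argument—turning a shrinking fractional window together with the integer cache-size constraint into the pure "cache the $N_c$ most popular files everywhere" solution—is short.
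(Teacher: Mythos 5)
Your argument is correct, and its analytic core coincides with the paper's: Appendix D performs exactly the hypergeometric step you single out as the crux, using the connection formula to show ${}_{2}F_1\big[-\tfrac{2}{\alpha},M_2;1-\tfrac{2}{\alpha};-\bar\gamma_{0,2}\big]=\mathsf{C}_{2}(\bar\gamma_{0,2})+\mathcal{O}\big(\bar\gamma_{0,2}^{-M_2}\big)$ (see \eqref{eqn:asy}), hence $\mathsf{C}_{3}(\bar\gamma_{0,2})\to-1$. Where you genuinely differ is the finish. The paper substitutes this limit into the objective \eqref{eqn:case1}: the $q_{f,2}$-dependence of the denominator disappears, the bound becomes proportional to $\sum_f p_f q_{f,2}$, and the greedy $0$--$1$ point is read off as the maximizer of that linear program under \eqref{eqn:con1}--\eqref{eqn:con2}. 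You instead analyze the closed-form maximizer \eqref{eqn:opt high} directly: $\beta\to\infty$ shrinks the fractional window $(\theta^2,\theta^2(1+1/\beta)^2)$ below the minimal gap between the (distinct) Zipf masses, and monotonicity in $f$ together with the integer normalization $\sum_f\ubar q_{f,2}^*=N_c$ rules out a lone fractional entry. The paper's route is shorter; yours reasons about the maximizer at large but finite $\bar\gamma_{0,2}$ rather than implicitly exchanging ``limit of maximizers'' with ``maximizer of the limiting objective,'' at the cost of needing $\theta<\sqrt{p_1}$, the multiplier normalization, and distinctness of the $p_f$. Two touch-ups: your positivity claim for $\mathsf{C}_3+1$ is stated circularly, yet its sign is essential to your truncation step (a negative coefficient would flip the ordering); it does follow from the same transformation, whose remainder equals $\tfrac{2/\alpha}{2/\alpha+M_2}\,\bar\gamma_{0,2}^{-M_2}\,{}_2F_1\big[M_2,M_2+\tfrac{2}{\alpha};M_2+1+\tfrac{2}{\alpha};-1/\bar\gamma_{0,2}\big]>0$. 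Also, for $R_0\to\infty$ with finite $\lambda_u/\lambda_2$ the relevant closed form is \eqref{eqn:optlower} rather than the high-density formula, but your fallback remark (boundedness of the $q$-coefficient suffices, since it tends to $1-\bar p_{{\rm a},2}>0$ while the numerator diverges) covers that case, and the paper's own proof glosses over the same point.
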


\begin{IEEEproof}
See Appendix D.
\end{IEEEproof}

Corollary 2 indicates that when the data requirement $R_0$ or $\lambda_u/\lambda_2$ is high, the optimal caching policy is simply caching the most popular files everywhere.

\subsubsection{Low user density ($\lambda_u/\lambda_2$ is low so that $U_{2} \to 1$ and  $p_{{\rm a}, 2}(\mathbf{q}_2) \to 0$)}
In this case, each helper serves at most one user and most of helpers have no user to serve, i.e.,  $p_{{\rm a}, 2}(\mathbf{q}_2) \to 0$, the success probability can be simplified into
\begin{equation}
p_{{\rm s}, 2}(\mathbf q_{2})  = \sum_{f=1}^{N_f} \frac{ p_f q_{f,2}}{ \mathsf{C}_{1}(\gamma_{0,2})  +  q_{f,2} } \label{eqn:case2},   
\end{equation}
where $\gamma_{0,2} = 2^{R_0/W} - 1$ because $U_2 \to 1$. Then, we have $\gamma_{0,2}(\mathbf{q}_2) = \bar \gamma_{0,2}$ and $p_{{\rm a},2}(\mathbf{q}_2) = \bar p_{{\rm a},2}$ and hence $\ubar p_{{\rm s}, 2}(\mathbf{q}_2) = p_{{\rm s}, 2}(\mathbf{q}_2)$, i.e., the lower bound is tight. Note that since $\gamma_{0,2}$ does not depend on $\mathbf q_2$ anymore in this case, $p_{{\rm s}, 2}(\mathbf q_2)$ is a concave function and hence Problem 2 becomes concave.

Again, similar to   deriving \eqref{eqn:optlower}, we can obtain the optimal caching probability maximizing $p_{{\rm s}, 2}(\mathbf{q}_2)$ for $U\to 1$ and $p_{{\rm a}, 2}(\mathbf{q}_2) \to 0$ as
\begin{equation}
q_{f,2}^* =  \left[ \frac{\sqrt{ \mathsf{C}_{1}(\gamma_{0,2}) }}{ \sqrt{\nu}}\sqrt{p_f}-\mathsf{C}_{1}(\gamma_{0,2})\right]_0^1, \label{eqn:opt}
\end{equation}
where $\nu$ satisfying $ \sum_{f=1}^{N_f}q_{f,2}^* = N_c$ can be found by bisection searching.
 We can prove that the conclusions in Corollary 2 also hold in this case, which are not shown for conciseness.

 In the following corollary, we show the impact of BS density and transmit power of different tiers on the  caching policy.

\begin{corollary}
	For any $q_{f,2}^*$, $q_{f+1,2}^* \in (0,1)$, $q_{f,2}^* - q_{f+1,2}^*$ increases with $\lambda_{12}$ and $P_{12}$.
\end{corollary}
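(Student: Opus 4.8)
The plan is to exploit the simplification available in the low user density regime: there $\gamma_{0,2} = 2^{R_0/W} - 1$ no longer depends on $\mathbf q_2$, nor on $\lambda_{12}$ or $P_{12}$, so $c \triangleq \mathsf{C}_{1}(\gamma_{0,2}) = \lambda_{12}(P_{12}B_{12})^{\frac{2}{\alpha}}\,{}_{2}F_1\big[-\tfrac{2}{\alpha}, M_1; 1-\tfrac{2}{\alpha}; -\tfrac{\gamma_{0,2}}{M_{12}B_{12}}\big]$ is, viewed as a function of the caching variables, a positive constant. Crucially, the hypergeometric factor is positive and independent of $\lambda_{12}$ and $P_{12}$, so $c$ is strictly increasing in $\lambda_{12}$ (linearly) and in $P_{12}$ (as $P_{12}^{2/\alpha}$, the exponent being positive). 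Hence it suffices to prove that $q_{f,2}^* - q_{f+1,2}^*$ is increasing in $c$.

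First I would rewrite \eqref{eqn:opt} as $q_{f,2}^* = [\,t\sqrt{p_f} - c\,]_0^1$ with $t \triangleq \sqrt{c/\nu}$. When $q_{f,2}^*, q_{f+1,2}^* \in (0,1)$ the truncations are inactive for these two files and the additive constants $-c$ cancel, giving $q_{f,2}^* - q_{f+1,2}^* = t\,(\sqrt{p_f} - \sqrt{p_{f+1}})$, where $\sqrt{p_f} - \sqrt{p_{f+1}} > 0$ because $p_f$ is strictly decreasing in $f$ by \eqref{eqn:pf}. Thus the claim reduces to showing that the multiplier $t = \sqrt{c/\nu}$ is strictly increasing in $c$. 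Note that under the hypothesis of the corollary the cache constraint is active, so $t$ is pinned down by $G(t,c) \triangleq \sum_{f=1}^{N_f} [\,t\sqrt{p_f} - c\,]_0^1 = N_c$, and at least one summand (the one indexed by $f$) lies in the open interval $(0,1)$.

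The main obstacle is that the truncation $[\cdot]_0^1$ makes $G$ only piecewise linear, so a direct implicit-function/derivative argument is not available; instead I would argue by monotone comparison. Each summand $[\,t\sqrt{p_f} - c\,]_0^1$ is nondecreasing in $t$ and nonincreasing in $c$. Suppose $c' > c$ but the corresponding multipliers satisfied $t' \le t$. Then $t'\sqrt{p_f} - c' < t\sqrt{p_f} - c$ for every $f$, hence $[\,t'\sqrt{p_f} - c'\,]_0^1 \le [\,t\sqrt{p_f} - c\,]_0^1$ for all $f$, with strict inequality for the index $f$ at which $t\sqrt{p_f} - c \in (0,1)$. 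Therefore $G(t',c') < G(t,c) = N_c$, contradicting $G(t',c') = N_c$. Hence $t' > t$, i.e. $t$ is strictly increasing in $c$, and combining with the previous step, $q_{f,2}^* - q_{f+1,2}^*$ is strictly increasing in $c$, and therefore in $\lambda_{12}$ and in $P_{12}$. This parallels, in a simpler setting, the argument used for Corollary 1 in Appendix C, where $\bar\gamma_{0,2}$ and $\bar p_{{\rm a},2}$ additionally depend on $\lambda_u/\lambda_2$.
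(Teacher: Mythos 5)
Your proof is correct, and it follows the paper's overall decomposition: write $q_{f,2}^*-q_{f+1,2}^* = \sqrt{\mathsf{C}_1(\gamma_{0,2})/\nu}\,(\sqrt{p_f}-\sqrt{p_{f+1}})$, observe that in the low user density regime $\gamma_{0,2}=2^{R_0/W}-1$ is parameter-free so $\mathsf{C}_1(\gamma_{0,2})=\lambda_{12}(P_{12}B_{12})^{2/\alpha}\,{}_2F_1[\cdot]$ is strictly increasing in $\lambda_{12}$ and $P_{12}$, and then show the multiplier $t=\sqrt{\mathsf{C}_1(\gamma_{0,2})/\nu}$ increases with $\mathsf{C}_1(\gamma_{0,2})$. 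Where you diverge is in that last, key step. The paper (Appendix E) fixes the truncation pattern --- $N_1$ files clipped at $1$ and $N_0$ files clipped at $0$ --- solves the active cache constraint explicitly to get \eqref{eqn:k}, and reads the monotonicity of $t$ off the affine dependence of the numerator on $\mathsf{C}_1(\gamma_{0,2})$. You instead avoid solving for $t$ at all: you treat the constraint $G(t,c)=\sum_f [\,t\sqrt{p_f}-c\,]_0^1=N_c$ as an implicit definition and run a monotone-comparison contradiction, using that each clipped summand is nondecreasing in $t$, nonincreasing in $c$, and strictly so at any index lying in $(0,1)$. This buys you something the paper's computation glosses over: your argument remains valid even if the sets of files truncated at $0$ or $1$ change as $\lambda_{12}$ or $P_{12}$ varies, whereas \eqref{eqn:k} is derived for a fixed $(N_1,N_0)$ and the paper does not justify comparing across parameter values at which these counts differ. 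The price is that you only get monotonicity of $t$ in $c$ rather than a closed-form expression, but that is all the corollary needs; your explicit checks that the hypergeometric factor is positive and independent of $\lambda_{12},P_{12}$, and that the constraint is active with a strictly interior coordinate (guaranteed by the corollary's hypothesis $q_{f,2}^*\in(0,1)$), close the argument cleanly.
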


\begin{IEEEproof}
See Appendix E.
\end{IEEEproof}

Corollary 3 indicates that when the MBS-to-helper density ratio $\lambda_1/\lambda_2$ or MBS-to-helper transmit power ratio $P_1/P_2$ increases, the files with higher popularity have more chances to be cached while the files with lower popularity have less chances to be cached, leading to a trend towards caching the most popular files everywhere. By contrast, when $\lambda_1/\lambda_2$ or $P_1/P_2$ decreases, the caching probability should be less skewed, i.e., the diversity of cached files increases.

We summary the impact of the system settings on the optimal caching policy in Table \ref{tab:impact}.

\begin{table*}[ht]
	\centering
	\caption{impact of the system setting on the optimal caching policy} 	\label{tab:impact}%
	\begin{tabular}{ccc}
		\toprule
		MBS-to-helper Transmit Power Ratio & MBS-to-helper Density Ratio & Optimal Caching Probability \\\hline
		$P_1/P_2$  $\downarrow$ & $\lambda_1/\lambda_2$ $\downarrow$     & Less Skewed \\
	$P_1/P_2$ 	$\uparrow$ &  $\lambda_1/\lambda_2$ $\uparrow$   & More Skewed \\ \hline
		User-to-helper Density Ratio, & Rate Requirement & Optimal Caching Probability \\ \hline
		$\lambda_u/\lambda_2$$\downarrow$ & $R_0$ $\downarrow$    & Less Skewed \\
		$\lambda_u/\lambda_2$ $\uparrow$ &  $R_0$ $\uparrow$   & More Skewed \\
		\bottomrule 
	\end{tabular}%
\end{table*}%

\section{Caching Policy Maximizing Area Spectral Efficiency}

In this section, we first derive the ASE of the cache-enabled HetNet as a function of caching probability and find the optimal caching probability maximizing the ASE. Then, we explain the difference of the optimal caching probability maximizing the success probability and the ASE. Finally, to show the merit of cache-enabled HetNet, we derived the ASE of traditional HetNets with PBSs with limited-capacity backhaul for numerical comparison.\footnote{Since when the rate requirement is high, e.g., larger than the backhaul capacity of each PBS, the success probability is zero when the user associates with the PBS tier, we did not compare the success probability of traditional HetNets equipped limited-capacity backhaul with cache-enabled HetNets for conciseness.} The ASE is computed in units of nats/s/Hz with $1$ nat/s = $1.443$ bps to simplify the expressions and analysis.

\subsection{Cache-enabled HetNets}
Based on the success probability derived in Proposition 1, we can obtain the ASE of cache-enabled HetNet in the following proposition.

\begin{proposition}
	The ASE of the cache-enabled HetNet with different association bias factor in the two tiers is
	\begin{align}
	&{\sf ASE}(\mathbf{q}_2)  =  \sum_{k=1}^{2} \lambda_k p_{{\rm a},k}(\mathbf{q}_2) M_k \sum_{f=1}^{N_f} \frac{p_fq_{f,k}}{\mathcal{P}_{k}(\mathbf{q}_2)} \nonumber \\ & \times \int_{0}^{\infty}  \left(\sum_{j=1}^{2}  \lambda_{jk} P_{jk}^{\frac{2}{\alpha}} \left(q_{f,j}  p_{{\rm a},j} (\mathbf{q}_2) B_{jk}^{\frac{2}{\alpha}} \mathsf{Z}_{1,jk}(e^x-1) \right.\right.\nonumber \\
	 & \left.\left.+( 1 -  q_{f,j})  p_{{\rm a},j}(\mathbf{q}_2) \mathsf{Z}_{2,jk}(e^x-1)  +  q_{f,j}   B_{jk}^{\frac{2}{\alpha}} \right) \vphantom{\sum_{j=1}^{2}} \right)^{-1}\!\!\!\!  {\rm d} x, \label{eqn:ASE}
	\end{align}
	where $ \mathsf{Z}_{1,kj} (x)$ and $\mathsf{Z}_{2,kj}(x)$ is defined in Proposition 1.
\end{proposition}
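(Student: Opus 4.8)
The plan is to reduce the ASE to a per-tier ergodic rate of the typical served user and then read off the required SINR CCDF from Proposition~1. Starting from \eqref{eqn:ASEdef}, I would first invoke the Average Load Approximation already used for the success probability: replacing $U_k$ in $R_k(\mathbf{q}_2)=\frac{WM_k}{U_k}\sum_{u=1}^{U_k}\log_2(1+\gamma_{uk}(\mathbf{q}_2))$ by its mean $\mathbb{E}[U_k]$ gives $\lambda_k p_{{\rm a},k}(\mathbf{q}_2)\,\mathbb{E}[R_k(\mathbf{q}_2)]\approx\frac{WM_k}{\mathbb{E}[U_k]}\,\lambda_k p_{{\rm a},k}(\mathbf{q}_2)\,\mathbb{E}\big[\sum_{u=1}^{U_k}\log_2(1+\gamma_{uk}(\mathbf{q}_2))\big]$.

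Second, I would pass from this BS-centric average to a user-centric one by the user--BS conservation (mass-transport) identity: the spatial density of users associated with tier $k$ equals $\lambda_u\mathcal{P}_k(\mathbf{q}_2)$, which also equals $\lambda_k p_{{\rm a},k}(\mathbf{q}_2)\mathbb{E}[U_k]$, and more generally $\lambda_k p_{{\rm a},k}(\mathbf{q}_2)\,\mathbb{E}\big[\sum_{u=1}^{U_k}g(\gamma_{uk})\big]=\lambda_u\mathcal{P}_k(\mathbf{q}_2)\,\mathbb{E}[g(\gamma_{{\rm typ},k})]$ for any test function $g$, where $\gamma_{{\rm typ},k}$ is the SINR of the typical user conditioned on associating with tier $k$. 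Combined with $\frac{\lambda_u}{\mathbb{E}[U_k]}=\frac{\lambda_k p_{{\rm a},k}(\mathbf{q}_2)}{\mathcal{P}_k(\mathbf{q}_2)}$, the $1/W$ prefactor in \eqref{eqn:ASEdef} cancels the $W$ and the ASE collapses to $\sum_k\lambda_k p_{{\rm a},k}(\mathbf{q}_2)M_k\,\mathbb{E}[\log_2(1+\gamma_{{\rm typ},k})]$. Conditioning on the requested file and using Lemma~1, $\mathbb{E}[\log_2(1+\gamma_{{\rm typ},k})]=\sum_{f}\frac{p_f\mathcal{P}_{f,k}(\mathbf{q}_2)}{\mathcal{P}_k(\mathbf{q}_2)}\mathbb{E}[\log_2(1+\gamma_{f,k}(\mathbf{q}_2))]$.

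Third, I would switch to nats (the claimed unit) and write $\mathbb{E}[\ln(1+\gamma_{f,k})]=\int_0^\infty\mathbb{P}(\gamma_{f,k}>e^x-1)\,{\rm d} x$. The CCDF here is exactly what Appendix~B computes en route to \eqref{eqn:suc0}: for a generic threshold $T$ (only at the very end is $T=\gamma_{0,k}$ substituted) that derivation yields $\mathcal{P}_{f,k}(\mathbf{q}_2)\,\mathbb{P}(\gamma_{f,k}(\mathbf{q}_2)>T)=q_{f,k}\big(\sum_{j}\lambda_{jk}P_{jk}^{2/\alpha}\big(q_{f,j}p_{{\rm a},j}(\mathbf{q}_2)B_{jk}^{2/\alpha}\mathsf{Z}_{1,jk}(T)+(1-q_{f,j})p_{{\rm a},j}(\mathbf{q}_2)\mathsf{Z}_{2,jk}(T)+q_{f,j}B_{jk}^{2/\alpha}\big)\big)^{-1}$. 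Substituting $T=e^x-1$, the factor $\mathcal{P}_{f,k}(\mathbf{q}_2)$ cancels the one coming from the file conditioning, leaving $\frac{q_{f,k}}{\mathcal{P}_k(\mathbf{q}_2)}\int_0^\infty(\cdots)^{-1}{\rm d} x$ with precisely the bracketed denominator of \eqref{eqn:ASE}; collecting the prefactor $\lambda_k p_{{\rm a},k}(\mathbf{q}_2)M_k\sum_f p_f$ then gives the claimed expression.

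The step I expect to be the main obstacle is the second one: making the passage from the BS-centric throughput to the user-centric ergodic rate airtight. The load $U_k$ seen by a uniformly chosen user is size-biased relative to that of a typical active BS, so the identity $\lambda_u\mathcal{P}_k(\mathbf{q}_2)=\lambda_k p_{{\rm a},k}(\mathbf{q}_2)\mathbb{E}[U_k]$ and the subsequent cancellation are only clean once $U_k$ has been frozen at $\mathbb{E}[U_k]$ by the load approximation; I would state this explicitly so that the reduction is unambiguous. The remaining tasks---verifying that the Appendix~B computation holds verbatim for a generic SINR threshold and that the integral converges (it does, since $\mathsf{Z}_{2,jk}$ makes the denominator grow like $e^{2x/\alpha}$ as $x\to\infty$)---are routine.
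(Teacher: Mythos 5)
Your proposal is correct and, in its computational core, coincides with the paper's proof: both reduce the ASE to $\sum_k \lambda_k p_{{\rm a},k}(\mathbf{q}_2) M_k\,\mathbb{E}[\ln(1+\gamma_{f,k}(\mathbf{q}_2))]$ averaged over the file request with weight $p_f\mathcal{P}_{f,k}(\mathbf{q}_2)/\mathcal{P}_k(\mathbf{q}_2)$, write the ergodic rate as $\int_0^\infty \mathbb{P}(\gamma_{f,k}(\mathbf{q}_2)>e^x-1)\,{\rm d}x$, and reuse the Appendix-B Laplace-transform computation with the generic, load-independent threshold $e^x-1$, after which the $\mathcal{P}_{f,k}(\mathbf{q}_2)$ factors cancel exactly as you say. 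The only real difference is the first reduction. The paper does not invoke the Average Load Approximation for this proposition at all: since the per-user SINRs $\gamma_{uk}(\mathbf{q}_2)$, $u=1,\dots,U_k$, are taken as identically distributed (and independent of $U_k$), the $1/U_k$ prefactor cancels the $U_k$ identically distributed summands inside the expectation, giving $\mathbb{E}[R_k(\mathbf{q}_2)]=WM_k\,\mathbb{E}[\ln(1+\gamma_{uk}(\mathbf{q}_2))]$ in one line. Your detour — freezing $U_k$ at $\mathbb{E}[U_k]$ and then undoing the load via the conservation identity $\lambda_k p_{{\rm a},k}(\mathbf{q}_2)\mathbb{E}[U_k]=\lambda_u\mathcal{P}_k(\mathbf{q}_2)$ — reaches the same expression, but it is more work than needed (the load drops out identically, so no approximation is required), and it quietly mixes two different load formulas, since the paper's $\mathbb{E}[U_k]=1+1.28\lambda_u\mathcal{P}_k(\mathbf{q}_2)/\lambda_k$ is not exactly $\lambda_u\mathcal{P}_k(\mathbf{q}_2)/(\lambda_k p_{{\rm a},k}(\mathbf{q}_2))$; this is harmless here only because $\mathbb{E}[U_k]$ cancels from the final answer. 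What your route buys is transparency: the mass-transport formulation makes explicit the identification of ``SINR of a user of the typical active BS'' with ``SINR of the typical user'' (the size-biasing/Feller issue you flag), which the paper's i.i.d. step simply assumes without comment. Either way the resulting expression is \eqref{eqn:ASE}.
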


\begin{IEEEproof}
	See Appendix F.
\end{IEEEproof}

Then, considering the constraints on caching probability and cache size, the caching policy maximizing the ASE of cache-enabled HetNet can be found from
\begin{align}
\text{\em Problem 4:}\quad \max_{\mathbf q_2}~ & \mathsf{ASE}(\mathbf{q}_2)   \\
{\rm s.t.}~ & \eqref{eqn:con1},\eqref{eqn:con2} \nonumber
\end{align}

The computation of ASE requires $2N_f$ numerical integral coupled with $\mathbf{q}_k$, and hence  the optimal caching probability maximizing the ASE is hard to obtain even numerically. In the following, we first obtain a closed-form expression of an approximated ASE in a special case when the association bias factors are equal, and then try to find the optimal caching probability.

The key to obtain a closed-form expression of an approximated ASE is to derive a closed-form approximation for the integration in \eqref{eqn:ASE}, where $\mathsf{Z}_{1,jk}(e^x - 1)$ contains Gauss hypergeometric function. To tackle this problem, we first derive the asymptotic expressions of $\mathsf{Z}_{1,jk}(e^x - 1)$ and $\mathsf{Z}_{2,jk}(e^x - 1)$ for $x \to 0$ and $x\to \infty$ based on the asymptotic expressions of Gauss hypergeometric function and exponential function, respectively. Then, by substituting the asymptotic expressions into the integration,  we are able to derive a closed-form expression of an approximated ASE.

\begin{corollary}
When $B_1 = B_2$ and $N_c/N_f$ is large, the ASE of the cache-enabled HetNet can be approximated as
\begin{align}
\!\!{\sf ASE}(\mathbf{q}_2) \approx & \sum_{k=1}^{2}p_{{\rm a},k}(\mathbf{q}_2) \lambda_k  M_k \left(\ln 2  + \sum_{f=1}^{N_f}  \frac{p_f q_{f,k}}{\mathcal{P}_{k}(\mathbf{q}_2)} \right.\nonumber \\
&\left. \times \frac{\alpha}{2\mathsf{K}_{2,fk}(\mathbf{q}_2)} \ln \left( 1 + \frac{\mathsf{K}_{2,fk}(\mathbf{q}_2)}{\mathsf{K}_{1,k}(\mathbf{q}_2)}  4^{-\frac{1}{\alpha}}\right) \vphantom{\sum_{f=1}^{N_f}}\right), \label{eqn:ASEc}
\end{align}
where $\mathsf{K}_{1,k}(\mathbf{q}_2) \triangleq  \sum_{j=1}^{2}  \lambda_{jk} P_{jk}^{\frac{2}{\alpha}}  p_{{\rm a},j}(\mathbf{q}_2) \Gamma(1 - \frac{2}{\alpha}) \Gamma(M_j + \frac{2}{\alpha})\Gamma(M_j)^{-1} M_{jk}^{-\frac{2}{\alpha}}$, $\mathsf{K}_{2,fk}(\mathbf{q}_2) \triangleq \sum_{j=1}^{2}  \lambda_{jk} P_{jk}^{\frac{2}{\alpha}} (1 - p_{{\rm a},j}(\mathbf{q}_2)) $
\end{corollary}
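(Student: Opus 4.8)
The plan is to start from the exact ASE expression in Proposition 3, specialize it to the case $B_1=B_2$ (so $B_{jk}=1$ for all $j,k$), and then replace the integrand by a closed-form approximation obtained from the asymptotics of the two special functions $\mathsf Z_{1,jk}(e^x-1)$ and $\mathsf Z_{2,jk}(e^x-1)$. First I would record the two asymptotic regimes announced in the text before the statement: as $x\to0$, $e^x-1\sim x\to0$, so using $_2F_1[a,b;c;z]=1+\tfrac{ab}{c}z+O(z^2)$ gives $\mathsf Z_{1,jk}(e^x-1)\to0$ linearly in $x$ and $\mathsf Z_{2,jk}(e^x-1)\to0$ like $x^{2/\alpha}$; as $x\to\infty$, the Gauss hypergeometric term grows like a constant times $(e^x-1)^{2/\alpha}$ (by the standard $z\to-\infty$ behavior of $_2F_1$), so $\mathsf Z_{1,jk}(e^x-1)$ and $\mathsf Z_{2,jk}(e^x-1)$ both behave like $e^{2x/\alpha}$ up to constants. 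The upshot is that in the denominator of \eqref{eqn:ASE} the term $\sum_j\lambda_{jk}P_{jk}^{2/\alpha}q_{f,j}$ (a constant, coming from the ``$+q_{f,j}B_{jk}^{2/\alpha}$'' piece) dominates for small $x$, while the terms proportional to $e^{2x/\alpha}$ dominate for large $x$.

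Next I would build a single surrogate integrand that matches both limits: approximate the whole denominator by $\mathsf C + \mathsf D\,e^{2x/\alpha}$ for appropriately chosen $x$-independent constants $\mathsf C$ and $\mathsf D$, where $\mathsf C=\sum_j\lambda_{jk}P_{jk}^{2/\alpha}q_{f,j}=\mathcal P_k(\mathbf q_2)$ (using $B_{jk}=1$ and Lemma~1 with $\mathbf q_1=\mathbf 1$ — this is exactly the normalization already appearing as $\mathcal P_k$ in \eqref{eqn:ASE}), and $\mathsf D$ collects the leading large-$x$ coefficients of the $\mathsf Z_{1,jk}$ and $\mathsf Z_{2,jk}$ contributions, namely $\mathsf D = \mathsf K_{1,k}(\mathbf q_2)\cdot(\text{const})+\mathsf K_{2,fk}(\mathbf q_2)\cdot(\text{const})$ — this is where the definitions of $\mathsf K_{1,k}$ (the ``$p_{{\rm a},j}$'' / active-interferer part, carrying the $\Gamma$-factors and $M_{jk}^{-2/\alpha}$) and $\mathsf K_{2,fk}$ (the ``$1-p_{{\rm a},j}$'' / idle-BS part) come from. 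The large-cache hypothesis $N_c/N_f$ large enters precisely here: when $N_c/N_f$ is large the optimized caching probabilities $q_{f,j}$ are close to $1$ for most $f$, so the small-$x$ constant in the denominator is genuinely of order $\mathcal P_k$ and the two-term fit is tight. Then the remaining integral is elementary:
\begin{equation}
\int_0^\infty \frac{{\rm d}x}{\mathsf C+\mathsf D\,e^{2x/\alpha}} = \frac{\alpha}{2\mathsf C}\ln\!\Big(1+\frac{\mathsf C}{\mathsf D}\Big),
\end{equation}
which, after substituting the identifications $\mathsf C\leftrightarrow$ (a multiple of) $\mathsf K_{2,fk}$ inside the log-argument denominator and $\mathsf D\leftrightarrow\mathsf K_{1,k}$, together with the residual constant $\ln 2$ (from the change of variables $\gamma\to x$ with $\log_2$ versus $\ln$, equivalently the $x\to0$ correction term) and the prefactor $\lambda_k p_{{\rm a},k}M_k\,p_fq_{f,k}/\mathcal P_k$ carried over verbatim from \eqref{eqn:ASE}, yields \eqref{eqn:ASEc}. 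The factor $4^{-1/\alpha}$ in the log-argument is the numerical artifact of how the matching point between the two asymptotic regimes is chosen (it tracks the $(e^x-1)^{2/\alpha}$ versus $e^{2x/\alpha}$ discrepancy at the crossover).

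The main obstacle — and the place where the argument is genuinely an approximation rather than an identity — is the two-term fit $\text{denominator}\approx\mathsf C+\mathsf D e^{2x/\alpha}$: one has to argue that the intermediate-$x$ behavior of the Gauss hypergeometric function is well captured by interpolating its two endpoint asymptotics, and that the error incurred is uniformly small in the regime $N_c/N_f$ large (so that $\mathsf C$ is not anomalously small for the popular files that dominate the sum $\sum_f p_f q_{f,k}$). I would handle this by bounding $_2F_1[-\tfrac2\alpha,M_j;1-\tfrac2\alpha;-z]$ between its $z\to0$ and $z\to\infty$ forms using monotonicity/convexity of the hypergeometric series in $z$ for the relevant parameter ranges ($0<2/\alpha<1$, $M_j\ge1$), transfer this to a sandwich on the integrand, and then observe that both the upper and lower surrogate integrals produce the same closed form up to the explicit constant $4^{-1/\alpha}$, which is absorbed into the statement. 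Everything else — the $x\to0$ expansion producing $\ln2$, the elementary integral, and the bookkeeping of the $\mathsf K_{1,k}$, $\mathsf K_{2,fk}$ constants from the $j$-sum — is routine and would be relegated to the appendix.
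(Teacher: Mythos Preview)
Your high-level outline (small-$x$ and large-$x$ asymptotics of $\mathsf Z_{1,jk}$, $\mathsf Z_{2,jk}$, then integrate a surrogate) matches the paper's route, but the way you assemble the pieces has a real gap. The paper does \emph{not} fit a single surrogate $\mathsf C+\mathsf D\,e^{2x/\alpha}$ over $[0,\infty)$; it \emph{splits} the integral at $x=\ln 2$ and treats the two ranges separately. On $[0,\ln 2]$ the small-$x$ asymptotics plus the large-$N_c/N_f$ hypothesis kill the $x$-dependent terms, leaving the constant integrand $\big(\sum_j q_{f,j}\lambda_{jk}P_{jk}^{2/\alpha}\big)^{-1}$; integrating over an interval of length $\ln 2$ and then multiplying by $p_fq_{f,k}/\mathcal P_k$ and summing over $f$ collapses (via Lemma~1) to exactly the additive $\ln 2$ in \eqref{eqn:ASEc}. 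Your claim that the $\ln 2$ comes from a ``$\log_2$ versus $\ln$'' base change is incorrect: Proposition~3 is already in nats, and a single integral $\int_0^\infty(\mathsf C+\mathsf D e^{2x/\alpha})^{-1}dx$ produces no separate additive constant.

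The second, related mistake is the identification of the constants. On $[\ln 2,\infty)$ the large-$x$ expansion gives $\mathsf Z_{1,jk}(e^x-1)\approx A_j e^{2x/\alpha}-1$ (note the $-1$), and this $-1$ cancels against the ``$+q_{f,j}$'' piece to leave the constant part $\sum_j\lambda_{jk}P_{jk}^{2/\alpha}q_{f,j}(1-p_{{\rm a},j})=\mathsf K_{2,fk}$, which is \emph{not} the same as your small-$x$ constant $\mathsf C=\sum_j q_{f,j}\lambda_{jk}P_{jk}^{2/\alpha}$. The elementary integral on $[\ln 2,\infty)$ then yields $\tfrac{\alpha}{2\mathsf K_{2,fk}}\ln\!\big(1+\tfrac{\mathsf K_{2,fk}}{\mathsf K_{1,k}}4^{-1/\alpha}\big)$, and the $4^{-1/\alpha}$ is precisely $e^{-2x/\alpha}\big|_{x=\ln 2}$ from the lower limit --- not a fitting artifact to be ``absorbed''. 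Once you split at $\ln 2$ and keep the $-1$ in the large-$x$ expansion of $\mathsf Z_{1,jk}$, all three pieces ($\ln 2$, $4^{-1/\alpha}$, and the $\mathsf K_{2,fk}$ with its $1-p_{{\rm a},j}$ factor) fall out mechanically; your monotonicity sandwich is then unnecessary.
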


\begin{IEEEproof}
See Appendix G.
\end{IEEEproof}
In Section V, we show that the approximation is also accurate even when $N_c/N_f$ is small (e.g., $N_c/N_f = 0.1$) by simulation.

Although the approximated ASE is in closed-form, the complex expressions of $\mathcal{P}_k(\mathbf{q}_2)$ and $p_{{\rm a},2}(\mathbf{q}_2)$ with respect to $\mathbf{q}_2$ make the optimization problem not concave in general. Only a local optimal solution can be found, e.g., by interior point method \cite{boyd2004convex}.

To obtain a closed-form expression for the caching probability, we consider an extreme case when $\lambda_u/\lambda_2 \to \infty$ and $\lambda_2/\lambda_1 \to \infty$. In this case, the active probability of helper is $p_{{\rm a},2} \to 1$ due to $\lambda_u/\lambda_2 \to \infty$ and the helper tier dominates the network ASE due to $\lambda_2/\lambda_1 \to \infty$. Then, we are able to simplify the expression of approximated ASE and obtain the following corollary.

\begin{corollary}
	When $\lambda_u/\lambda_2 \to \infty$ and $\lambda_2/\lambda_1 \to \infty$, the optimal caching probability maximizing the approximated ASE is $ q_{1,2}^*, \cdots,  q_{N_c,2}^* = 1$ and $ q_{N_c+1,2}^*, \cdots,  q_{N_f, 2}^* = 0$.
\end{corollary}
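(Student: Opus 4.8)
The plan is to start from the closed-form approximated ASE in Corollary 5, specialize it to the regime $\lambda_u/\lambda_2 \to \infty$ and $\lambda_2/\lambda_1 \to \infty$, and show that the resulting objective reduces to a linear function of $\mathbf{q}_2$ whose maximizer under the constraints \eqref{eqn:con1}, \eqref{eqn:con2} is exactly the ``most popular files everywhere'' allocation. First I would take the limits inside \eqref{eqn:ASEc}. Since $\lambda_u/\lambda_2 \to \infty$ gives $p_{{\rm a},2}(\mathbf{q}_2) \to 1$ (from \eqref{eqn:pa}), the term $\mathsf{K}_{2,f2}(\mathbf{q}_2) = \sum_j \lambda_{j2} P_{j2}^{2/\alpha}(1 - p_{{\rm a},j}(\mathbf{q}_2))$ has its $j=2$ summand vanish; the $j=1$ summand is $\lambda_{12} P_{12}^{2/\alpha}(1 - p_{{\rm a},1}) = 0$ as well because $p_{{\rm a},1} = 1$ always. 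Hence $\mathsf{K}_{2,f2}(\mathbf{q}_2) \to 0$. Using $\frac{1}{y}\ln(1 + c y) \to c$ as $y \to 0$, the bracketed summand for $k=2$ collapses to $\frac{\alpha}{2}\cdot\frac{4^{-1/\alpha}}{\mathsf{K}_{1,2}(\mathbf{q}_2)}$, so the $k=2$ contribution to the ASE becomes
\[
\lambda_2 M_2 \left(\ln 2 + \frac{\alpha\, 4^{-1/\alpha}}{2\,\mathsf{K}_{1,2}(\mathbf{q}_2)}\sum_{f=1}^{N_f}\frac{p_f q_{f,2}}{\mathcal{P}_2(\mathbf{q}_2)}\right).
\]
Next, because $\lambda_2/\lambda_1 \to \infty$, i.e. $\lambda_{12} \to 0$, one checks that $\mathsf{K}_{1,2}(\mathbf{q}_2) \to \Gamma(1-\tfrac{2}{\alpha})\Gamma(M_2+\tfrac{2}{\alpha})\Gamma(M_2)^{-1}$, a constant independent of $\mathbf{q}_2$ (the $j=1$ term in $\mathsf{K}_{1,2}$ carries a factor $\lambda_{12}\to 0$), and simultaneously $\mathcal{P}_2(\mathbf{q}_2) = \sum_f \frac{p_f q_{f,2}}{\lambda_{12}(P_{12}B_{12})^{2/\alpha} + q_{f,2}} \to \sum_f p_f = 1$ for any $\mathbf{q}_2$ with $q_{f,2} > 0$. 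Also the $k=1$ term in \eqref{eqn:ASEc} is multiplied by $\lambda_1 = \lambda_2\lambda_{12} \to 0$ relative to the $k=2$ term, so it is negligible. Thus the ASE is asymptotically an affine-plus-linear function $\mathrm{const} + \mathrm{const}'\sum_{f=1}^{N_f} p_f q_{f,2}$ with $\mathrm{const}' > 0$.

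The remaining step is a trivial linear-program argument: maximizing $\sum_f p_f q_{f,2}$ over $0 \le q_{f,2}\le 1$, $\sum_f q_{f,2} \le N_c$, with $p_1 \ge p_2 \ge \cdots \ge p_{N_f}$, is solved by putting all the available ``mass'' $N_c$ on the largest-coefficient variables, i.e. $q_{f,2}^* = 1$ for $f \le N_c$ and $q_{f,2}^* = 0$ for $f > N_c$ (since $N_c$ is an integer the vertex is integral and unique when the $p_f$ are strictly decreasing). This yields the claimed caching probability.

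The main obstacle I anticipate is making the two simultaneous limits rigorous rather than merely formal: one has to be careful that $\mathcal{P}_2(\mathbf{q}_2) \to 1$ uniformly over the feasible set so that the $\frac{1}{\mathcal{P}_2(\mathbf{q}_2)}$ factor can be pulled out as a constant, and that the degenerate limit $\mathsf{K}_{2,f2}(\mathbf{q}_2)\to 0$ is handled correctly in \eqref{eqn:ASEc} via $\lim_{y\to 0}\frac{1}{y}\ln(1+cy) = c$ rather than producing a $0/0$ artifact — equivalently one can first simplify \eqref{eqn:ASE} directly in this regime (where $\mathsf{Z}_{2,jk}$ drops out of the denominator because its coefficient $1 - p_{{\rm a},j}$ vanishes) and re-derive the integral in closed form, which is the route I would actually follow in the appendix. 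Once the objective is shown to be affine in $\mathbf{q}_2$ with positive slope on $\sum_f p_f q_{f,2}$, the optimality of caching the $N_c$ most popular files everywhere is immediate.
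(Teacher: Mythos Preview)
Your proposal is correct and follows essentially the same route as the paper's proof: take $\lambda_u/\lambda_2\to\infty$ so that $p_{{\rm a},2}\to 1$ and $\mathsf{K}_{2,fk}\to 0$, linearize the $\frac{1}{y}\ln(1+cy)$ term, then use $\lambda_2/\lambda_1\to\infty$ to set $\mathcal{P}_2\to 1$ and discard the MBS-tier contribution, reducing the objective to $\sum_f p_f q_{f,2}$ whose constrained maximizer is the most-popular-$N_c$ allocation. Your added remarks on uniformity of the limits and the $0/0$ resolution are more careful than the paper, but the substance is the same.
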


\begin{proof}
	See Appendix H.
\end{proof}

 Corollary 5 suggests that when the ratios of user-to-helper density and helper-to-MBS density are high, the optimal caching policy maximizing the ASE is to cache the most popular files everywhere, which is quite different from the caching policy maximizing the success probability. The reasons are explained as follows. Although increasing file diversity among helpers can increase the probability of user associated with the helper tier, it may make the user not able to associate with the closest helper, which decreases the signal power and increases the interference power and hence reduces the throughput of each helper. As a consequence, when the user-to-helper density and helper-to-MBS density are high so that all the helper node are active (i.e., each helper has at least one cache-hit user to serve) and the throughput of the helper tier dominates the overall ASE, increasing file diversity leads to low throughput of helper tier and hence low ASE. Therefore, in this case, caching the most popular files everywhere is the ASE-maximal caching policy.
 Intuitively, when the user-to-helper density is low such that some helper nodes may have no user to serve, increasing the file diversity among helpers can increase the probability of user associated with the helper tier and hence may increase the active probability of helper. Then, the throughput of each active helper may reduce due to the stronger interference. Nevertheless, we will show in Section V that even when the user-to-helper density is low, caching the most popular files everywhere can still achieve near-optimal ASE, because the overall ASE increases with more active helpers.

 \begin{remark}
For maximizing the success probability, caching the most popular files everywhere is optimal when user-to-helper density approaches infinity or rate requirement approaches infinity, and tends to become optimal when helper-to-MBS density or transmit power decreases. For maximizing ASE, caching the most popular files everywhere is optimal when both user-to-helper density and helper-to-user density approach infinity.
 \end{remark}

\subsection{Traditional HetNets}

For comparison, we derive the ASE of the traditional HetNet without local caching where a tier of MBSs with high-capacity backhaul is overlaid with a tier of PBSs with limited-capacity backhaul. For notational simplicity, we continue to use $\lambda_2$ and $P_2$ as the density and transmit power of PBS in this subsection. In traditional HetNets, we consider user association based on maximal BRP. Since each MBS and each PBS have backhaul and can retrieve all the files from the content server, traditional HetNet can be regarded as a special case of cache-enabled HetNet for $\mathbf{q}_2 = \mathbf{1}$ when compute the user association probability in Lemma 1 but with rate constraint when user associates the PBS tier as shown in the following.

Denote $C_{\rm bh, 2}$ as the backhaul capacity of each PBS and assume backhaul capacity is equally allocated among users. When the typical user associates with the PBS tier, its data rate is limited by the allocated backhaul capacity $\frac{C_{\rm bh, 2}}{U_2}$. Then, the average throughput of an active PBS can be expressed as
\begin{align}
\mathbb{E}\{R_2\} & =\mathbb{E}\left[ \sum_{u=1}^{U_2} \min\left\{\frac{W}{U_2/M_2}\ln(1 + \gamma_{u2}), \frac{C_{\rm bh, 2}}{U_2} \right\}\right]  \nonumber \\
& = \mathbb{E}\left[ \min\{W\ln(1 + \gamma_{u2}), C_{\rm bh, 2} \}\right], \label{eqn:spER2} 
\end{align}
where  function $\min\{x,y\} = x$ if $x \leq y$ and $\min\{x,y\} = y$ otherwise, and the last step follows because $\gamma_{u2}, u = 1, \cdots, U_2$ are independently and identically distributed and substituting $M_2 = 1$.

Similar to deriving  \eqref{eqn:ASE} and considering the limited-capacity backhaul of PBS tier, we can obtain the ASE of traditional HetNet as follows.

\begin{proposition}
	The ASE of traditional HetNet is
	\begin{align}
	{\sf ASE}  = & \sum_{k=1}^{2} p_{{\rm a}, k} \lambda_k M_k\frac{1}{\mathcal{P}_k}  \int_{0}^{\frac{C_{{\rm bh}, k}}{W}} \left( \sum_{j=1}^{2}  \lambda_{jk} (P_{jk}B_{jk})^{\frac{2}{\alpha}} \right. \nonumber\\
	& \left. \times\left( p_{{\rm a},j} \mathsf{Z}_{1,jk}(e^x-1)   +  1 \right) \vphantom{\sum_{j=1}^{2}} \right)^{-1} {\rm d} x,
	\end{align}
	where the backhaul capacity of each MBS $C_{{\rm bh}, 1} \to \infty$, $\mathcal{P}_k = \mathcal{P}_k(\mathbf{q}_2)|_{\mathbf{q}_2 = \mathbf{1}}$, $p_{{\rm a},j} = p_{{\rm a},j}(\mathbf{q}_2)|_{\mathbf{q}_2 = \mathbf{1}}$ and $\mathsf{Z}_{1,jk}(x)$ is defined in Proposition 3.
\end{proposition}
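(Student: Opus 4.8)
The plan is to reuse the derivation of Proposition~3 almost verbatim, the only changes being (i) setting $\mathbf{q}_2=\mathbf{1}$ in the user-association and coverage statistics, since in the traditional HetNet every BS can retrieve every file, and (ii) inserting the finite backhaul capacity of the PBS tier into the rate of a user served by a PBS. First I would start from the ASE definition \eqref{eqn:ASEdef}, $\mathsf{ASE}=\frac{1}{W}\sum_{k=1}^2 p_{{\rm a},k}\lambda_k\,\mathbb{E}[R_k]$, and treat the MBS tier ($k=1$) and the PBS tier ($k=2$) separately.

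For the MBS tier the backhaul is unconstrained, so the throughput term is handled exactly as in Appendix~F with $q_{f,1}=1$: writing $R_1=\sum_{u=1}^{U_1}\frac{WM_1}{U_1}\ln(1+\gamma_{u1})$ and using that the $\gamma_{u1}$ are i.i.d., the load $U_1$ cancels in expectation, so $\mathbb{E}[R_1]=WM_1\,\mathbb{E}[\ln(1+\gamma_{u1})]=WM_1\int_0^\infty \mathbb{P}(\gamma_{u1}>e^x-1)\,{\rm d}x$ by the layer-cake identity. For the PBS tier I would invoke \eqref{eqn:spER2}: because $M_2=1$, the per-user rate is $\frac{1}{U_2}\min\{W\ln(1+\gamma_{u2}),C_{{\rm bh},2}\}$, summing over the $U_2$ i.i.d. users cancels the load once more, and $\mathbb{E}[R_2]=\mathbb{E}[\min\{W\ln(1+\gamma_{u2}),C_{{\rm bh},2}\}]$. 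Applying $\mathbb{E}[\min\{X,c\}]=\int_0^c\mathbb{P}(X>t)\,{\rm d}t$ for the non-negative variable $X=W\ln(1+\gamma_{u2})$ and changing variables $t=Wx$ gives $\frac{1}{W}\mathbb{E}[R_2]=\int_0^{C_{{\rm bh},2}/W}\mathbb{P}(\gamma_{u2}>e^x-1)\,{\rm d}x$. Taking $C_{{\rm bh},1}/W\to\infty$ unifies both tiers into $\frac{1}{W}\mathbb{E}[R_k]=M_k\int_0^{C_{{\rm bh},k}/W}\mathbb{P}(\gamma_{uk}>e^x-1)\,{\rm d}x$, which is consistent since $M_2=1$.

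It then remains to substitute the per-tier SINR coverage probability $\mathbb{P}(\gamma_{uk}>\tau)$. Here I would specialize the coverage expression already obtained for Propositions~1 and~3 to $\mathbf{q}_2=\mathbf{1}$: with every BS serving every file the cache-miss sets $\tilde{\Phi}_{f',j}$ are empty, so the $\mathsf{Z}_{2,jk}$ and $(1-q_{f,j})$ terms drop out, association reverts to plain maximal-BRP, and $\sum_{f}p_f q_{f,k}/\mathcal{P}_k=1/\mathcal{P}_k$ since $\sum_f p_f=1$. This collapses the bracket in \eqref{eqn:ASE} to $\sum_{j=1}^2\lambda_{jk}(P_{jk}B_{jk})^{\frac{2}{\alpha}}\big(p_{{\rm a},j}\mathsf{Z}_{1,jk}(e^x-1)+1\big)$, hence $\mathbb{P}(\gamma_{uk}>e^x-1)=\frac{1}{\mathcal{P}_k}\big[\sum_{j=1}^2\lambda_{jk}(P_{jk}B_{jk})^{\frac{2}{\alpha}}(p_{{\rm a},j}\mathsf{Z}_{1,jk}(e^x-1)+1)\big]^{-1}$ with $\mathcal{P}_k$, $p_{{\rm a},j}$ evaluated at $\mathbf{q}_2=\mathbf{1}$. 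Plugging this back into $\mathsf{ASE}=\sum_k p_{{\rm a},k}\lambda_k M_k\cdot\frac{1}{W}\mathbb{E}[R_k]$ and noting $p_{{\rm a},1}=1$ (the MBS is always active) yields the stated formula.

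I expect the only delicate point to be justifying that the coverage formula --- derived under the probabilistic-caching model with its PPP-thinning approximation for the active-BS process --- carries over unchanged to the traditional HetNet simply by setting $\mathbf{q}_2=\mathbf{1}$; this needs a remark that with $\mathbf{q}_2=\mathbf{1}$ there are no cache-miss BSs, the thinning probabilities $p_{{\rm a},j}(\mathbf{1})$ are precisely the quantities already appearing in the formula, and the equivalence between the typical user and a user of a randomly chosen active BS used in Appendix~F still holds. The rest is the routine layer-cake bookkeeping sketched above.
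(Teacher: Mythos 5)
Your proof takes essentially the same route as the paper's Appendix I: specialize the coverage probability \eqref{eqn:succ} (equivalently the bracket in \eqref{eqn:ASE}) to $\mathbf{q}_1=\mathbf{q}_2=\mathbf{1}$, apply the layer-cake identity to $\min\{W\ln(1+\gamma_{u2}),C_{\rm bh, 2}\}$ from \eqref{eqn:spER2} so the integral truncates at $C_{\rm bh, 2}/W$ (with $C_{\rm bh,1}\to\infty$ for the MBS tier), and assemble via \eqref{eqn:ASEdef}. The only slip is writing $\mathsf{ASE}=\sum_k p_{{\rm a},k}\lambda_k M_k\cdot\frac{1}{W}\mathbb{E}[R_k]$ while also taking $\frac{1}{W}\mathbb{E}[R_k]=M_k\int_0^{C_{{\rm bh},k}/W}\mathbb{P}(\gamma_{uk}>e^x-1)\,{\rm d}x$, which double-counts $M_k$ for the MBS tier since \eqref{eqn:ASEdef} carries no $M_k$ factor; with that bookkeeping corrected, your argument coincides with the paper's.
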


\begin{IEEEproof}
	See Appendix I.
\end{IEEEproof}
Comparing Proposition 3 and Proposition 4, we can see that when the backhaul capacity of PBS is unlimited in traditional HetNet and each helper caches all the files in cache-enabled HetNet, the ASE of cache-enabled HetNet and traditional HetNet are equivalent.

The computation of the ASE of traditional HetNets requires double numerical integration. To reduce the computational complexity, similar to deriving Corollary 4, we obtain the closed-form expression of an approximated ASE of traditional HetNet in a special case.

\begin{corollary}
When $B_1 = B_2$ and $\frac{C_{\rm bh, 2}}{W} \ll \ln 2$, the ASE of traditional HetNet can be approximated as
\begin{align}
{\sf ASE}  \approx & \lambda_1 M_1  \left(\ln 2 +\frac{\alpha}{2\mathcal{P}_1 \mathsf{K}_{2}}  \ln \left(1 + \frac{\mathsf{K}_2}{\mathsf{K}_1}4^{-\frac{1}{\alpha}}\right)   \right)  \nonumber \\
&+\lambda_2 p_{{\rm a},2} C_{\rm bh, 2}, 
\end{align}
where $\mathsf{K}_{1} \triangleq  \sum_{j=1}^{2}  \lambda_{j1} P_{j1}^{\frac{2}{\alpha}}  p_{{\rm a},j} \Gamma(1 - \frac{2}{\alpha}) \Gamma(M_j + \frac{2}{\alpha})\Gamma(M_j)^{-1} $ $M_{j1}^{-\frac{2}{\alpha}}$, and $\mathsf{K}_{2} \triangleq \sum_{j=1}^{2}  \lambda_{j1} P_{j1}^{\frac{2}{\alpha}} (1 - p_{{\rm a},j})$.
\end{corollary}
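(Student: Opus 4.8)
The plan is to follow the same asymptotic-integration recipe used to establish Corollary 4, now applied to the single numerical integral appearing in Proposition 4 for the MBS tier, while handling the PBS tier separately via its backhaul-limited throughput. First I would set $B_1 = B_2$, so that $(P_{jk}B_{jk})^{2/\alpha} = P_{jk}^{2/\alpha}$ and the integrand in Proposition 4 simplifies. The ASE splits as a sum over $k=1,2$. For the PBS tier ($k=2$), I would observe that the upper integration limit is $C_{\mathrm{bh},2}/W$, which is assumed much smaller than $\ln 2$; since the integrand $\big(\sum_j \lambda_{j2}P_{j2}^{2/\alpha}(p_{\mathrm{a},j}\mathsf{Z}_{1,j2}(e^x-1)+1)\big)^{-1}$ is continuous and equals $\big(\sum_j \lambda_{j2}P_{j2}^{2/\alpha}\big)^{-1}$ at $x=0$ (because $\mathsf{Z}_{1,j2}(0) = {}_2F_1[-2/\alpha,M_j;1-2/\alpha;0]-1 = 0$), over the tiny interval $[0,C_{\mathrm{bh},2}/W]$ it is approximately constant. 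Evaluating $\sum_j \lambda_{j2}P_{j2}^{2/\alpha}$ at its $x=0$ value and noting $\lambda_{22} = P_{22} = 1$ gives a factor that, after multiplying by $p_{\mathrm{a},2}\lambda_2 M_2 /\mathcal P_2$ with $M_2=1$ and using that $\mathcal P_2 = \mathcal P_2(\mathbf 1)$ equals exactly this same normalization over the PPP geometry, collapses to $\lambda_2 p_{\mathrm{a},2} (C_{\mathrm{bh},2}/W)\cdot W = \lambda_2 p_{\mathrm{a},2} C_{\mathrm{bh},2}$ (recalling the ASE is expressed in nats and the $W$ normalization). This yields the second term.

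For the MBS tier ($k=1$), the integral is $\int_0^{C_{\mathrm{bh},1}/W}(\cdots)^{-1}\mathrm dx$ with $C_{\mathrm{bh},1}\to\infty$, so it becomes $\int_0^\infty (\cdots)^{-1}\mathrm dx$, identical in structure to the $k=1$ piece treated in Appendix G for Corollary 4. The key step is the two-regime asymptotic replacement of $\mathsf{Z}_{1,j1}(e^x-1)$: for small $x$ one uses the series expansion of the Gauss hypergeometric function to get $\mathsf{Z}_{1,j1}(e^x-1)\approx \mathsf{Z}_{2,j1}$-type behavior scaling like $(e^x-1)^{2/\alpha}\approx x^{2/\alpha}$ (dominant near $0$), and for large $x$ the hypergeometric function grows polynomially so $\mathsf{Z}_{1,j1}(e^x-1)$ is well-approximated by its leading power term. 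I would then replace the integrand by $\big(\mathsf{K}_1 + \mathsf{K}_2(e^x-1)^{?}\big)^{-1}$-style expression — more precisely, matching Appendix G, by something that integrates to the $\frac{\alpha}{2\mathcal P_1 \mathsf K_2}\ln(1+\frac{\mathsf K_2}{\mathsf K_1}4^{-1/\alpha})$ closed form plus the $\ln 2$ correction term that arises from the boundary/matching between the two asymptotic regimes. Substituting $k=1$, $M_{jk}=M_{j1}$, $P_{jk}=P_{j1}$, $\lambda_{jk}=\lambda_{j1}$ into the definitions of $\mathsf K_{1,k}$ and $\mathsf K_{2,fk}$ from Corollary 4, and noting that for $\mathbf q_2 = \mathbf 1$ the quantity $\mathsf K_{2,f1}$ loses its $f$-dependence (since $q_{f,j}$ no longer appears, or appears uniformly), gives exactly $\mathsf K_1$ and $\mathsf K_2$ as defined in the statement, and $\mathcal P_1$ in the denominator. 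Multiplying by $\lambda_1 M_1$ out front completes the first term.

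The main obstacle I anticipate is the PBS-tier term: one must be careful that the crude "integrand $\approx$ constant over a short interval" argument is consistent with the stated hypothesis $C_{\mathrm{bh},2}/W \ll \ln 2$, and that the bookkeeping of the $1/\mathcal P_2$ factor, the $p_{\mathrm{a},2}$ factor, and the nats-vs-bits $W$-normalization all cancel to leave precisely $\lambda_2 p_{\mathrm{a},2}C_{\mathrm{bh},2}$ with no stray constants. In particular I would double-check that $\mathsf Z_{1,j2}(0)=0$ really makes the integrand's value at $x=0$ equal to $\big(\sum_j\lambda_{j2}P_{j2}^{2/\alpha}\big)^{-1}$ and that this equals $\mathcal P_2(\mathbf 1)^{-1}$ up to the $\lambda_2$ scaling implicit in the $\lambda_{j2}$ ratios, so that the $1/\mathcal P_2$ prefactor in Proposition 4 is exactly absorbed. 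The MBS-tier term should then follow essentially verbatim from the proof of Corollary 4 with the index substitution $k\mapsto 1$, so I would simply cite Appendix G for that portion rather than re-deriving the hypergeometric asymptotics.
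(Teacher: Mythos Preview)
Your approach mirrors the paper's Appendix~J exactly: treat the PBS tier by approximating the integrand as its constant value $(\sum_j\lambda_{j2}P_{j2}^{2/\alpha})^{-1}$ at $x=0$ over the short interval $[0,C_{\mathrm{bh},2}/W]$ and cancel against $\mathcal P_2(\mathbf 1)=(\sum_j\lambda_{j2}P_{j2}^{2/\alpha})^{-1}$, then handle the MBS tier by the two-regime splitting $[0,\ln 2]\cup[\ln 2,\infty)$ of Appendix~G. One small correction: the small-$x$ behavior of $\mathsf Z_{1,j1}(e^x-1)$ is \emph{linear} in $x$ (namely $\tfrac{2M_1}{\alpha-2}x+\mathcal O(x^2)$, coming from the first nontrivial term of the hypergeometric series), not $x^{2/\alpha}$ as you wrote; the $x^{2/\alpha}$ scaling belongs to $\mathsf Z_{2,jk}$, which does not appear in Proposition~4 since $\mathbf q_2=\mathbf 1$. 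This slip is inconsequential, however, because on $[0,\ln 2]$ the paper simply drops the $p_{\mathrm a,j}\mathsf Z_{1,j1}$ term against the constant $1$ (exactly the move in (G.5)), so your plan to cite Appendix~G for the MBS piece carries through unchanged.
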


\begin{IEEEproof}
See Appendix J.
\end{IEEEproof}
In Section V, we show that the approximation is accurate as long as $\frac{C_{\rm bh, 2}}{W} \leq \ln 2$ (i.e., $C_{\rm bh, 2} \leq 20$ Mbps for $20$ MHz transmission bandwidth). Considering that copper lines are the most widely used backhaul solutions \cite{dahrouj2015cost} with average data rate of 5 Mbps in the US \cite{Andy2012} and its cost grows linearly with the capacity \cite{dahrouj2015cost}, Corollary 6 can cover most densely deployed small BSs within affordable cost.


\section{Simulation and Numerical Results}

In this section, we validate previous analysis via simulation, and demonstrate how different factors affect the optimal caching policies, as well as the corresponding success probability and ASE of cache-enabled HetNet via numerical results.

The following caching policies are considered for comparison:
\begin{enumerate}
	\item ``{\em Max $p_{\rm s}$ (Inter-point)}": The solution of Problem 1 found by interior point method. Since the problem is not concave, we try 100 different random initial values and pick the best solution to increase the opportunity to find the global optimal solution.
	\item ``{\em Max $p_{{\rm s}, 2}$ (Lower-bound)}": The optimal solution that maximizes the lower bound $\ubar p_{{\rm s}, 2}(\mathbf{q}_2)$ in Proposition 2, i.e., $\ubar q_{f,2}^*$.
	\item  ``{\em Max ASE}": The solution of Problem 4 found by interior point method with 100 different random initial values.
	\item  ``{\em Popular}": Caching the most popular files everywhere, i.e., $q_{1,2}, \cdots, q_{N_c, 2} = 1$ and $q_{N_c + 1,2}, \cdots,$ $ q_{N_f, 2} = 1$. This policy achieves no file diversity.
	\item  ``{\em Uniform}":  Each file is cached with equal caching probability, i.e., $q_{f,2} = N_c/N_f$. This policy achieves the maximal file diversity.
\end{enumerate}

Unless otherwise specified, the following setting is used. We consider a $3$ km $\times$ $3$ km square area. The MBS, helper, and user densities are $\lambda_1 = 1/(250^2 \pi)$ m$^{-2}$, $\lambda_2 = 50/(250^2 \pi)$ m$^{-2}$ and $\lambda_u = 50/(250^2 \pi)$ m$^{-2}$, respectively. The path-loss exponent is $\alpha = 3.7$ \cite{3GPP}. The transmit power of each MBS (with $M_1 = 4$ antennas) and helper are $P_1 = 46$ dBm and $P_2 = 21$ dBm, respectively \cite{3GPP}. The bias factors for the MBS and helper tiers are $B_1 = 1$ and $B_2 = 10$, respectively. The transmission bandwidth is $W = 20$ MHz and the minimal rate requirement for each user is $R_0 = 2$ Mbps. The file catalog size is $N_f = 1000$ files and each helper can cache $N_c = 100$ files, i.e., the normalized cache size is $\eta \triangleq N_c/N_f = 10\%$.

\subsection{Success Probability}
\begin{figure}[!htb]
	\centering
	\includegraphics[width=\textwidth]{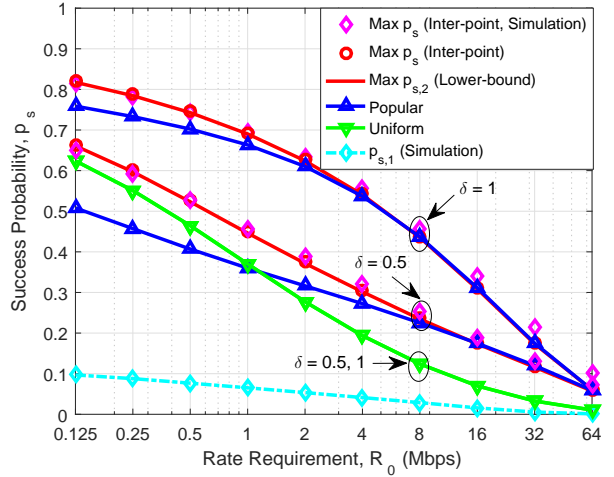}
	\caption{Validation of the analysis with simulation.} \label{fig:SINR}
\end{figure}

In Fig. \ref{fig:SINR}, we show the simulation and numerical results of success probability versus the per-user rate requirement. The simulation result is obtained based on $q_{f,2}^*$ from the interior point method and then by computing $p_{\rm s} (\mathbf q_2^*)$ via Monte Carlo method considering $-95$ dBm noise power. The numerical results are computed from Proposition 1. We can see that the numerical results (with legend ``Max $p_{\rm s}$ (Inter-point)'') is slightly lower than the simulation results (with legend ``Max $p_s$ (Inter-point, Simulation)'') when $R_0$ is low and slightly higher when $R_0$ is high, because Proposition 1 is derived in interference-limited scenario and by using the mean load approximation. For all the considered caching policies, the simulation results are close to the numerical results, which are not shown for a clean figure. Hence, in the rest of this subsection we only provide the numerical results. We also provide the simulation result of the success probability contributed by the MBS tier, i.e., $p_{{\rm s},1}(\mathbf{q}_2^*)$, which is less than 0.05 when $R_0 > 1$ Mbps and hence we can safely neglect the impact of  the success probability contributed by the MBS tier. Comparing {``Max $p_{{\rm s},2}$ (Lower-bound)''} with {``Max. $p_{\rm s}$ (Inter-point)''}, we can observe that the caching policy maximizing the lower bound of success probability contributed by the helper tier $\ubar p_{{\rm s}, 2}(\mathbf{q}_2)$ performs almost the same as the caching policy maximizing the total success probability $p_{\rm s}(\mathbf{q}_2)$. When the rate requirement is high, e.g., $R_0 > 4$ Mbps for $\delta = 0.5$ or $R_0 > 2$ Mbps for $\delta = 1$, caching the most popular files everywhere can achieve almost the same performance as the optimized caching policies in this setup with $\lambda_u/\lambda_2=1$, which validates Corollary 2 although it is derived when $R_0 \to \infty$ and $\lambda_u/\lambda_2 \to \infty$. Moreover, when $\delta$ increases, the gap between caching the most popular files everywhere and the optimized caching policies shrinks. These results indicate  that for highly skewed demand with high rate requirement, e.g., video on demand service, simply caching the most popular files everywhere can achieve maximal success probability.

\begin{figure}[htb]
	\centering
	\includegraphics[width=\textwidth]{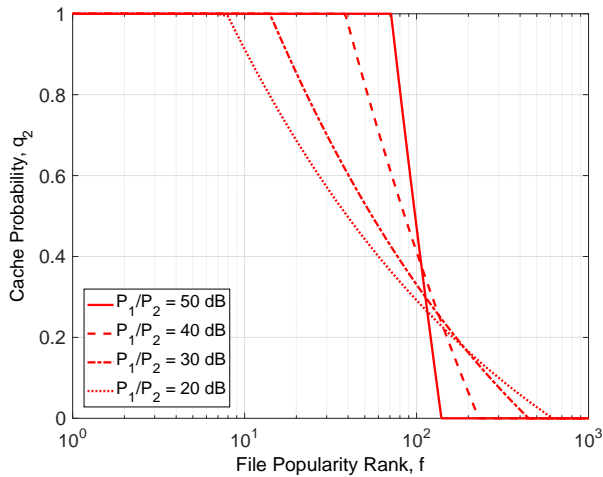}
	\caption{Impact of difference in transmit powers, $\delta = 0.5$.} 
	\label{fig:P}
\end{figure}

In Fig. \ref{fig:P}, we show the impact of difference in transmit powers on the optimal caching policy. When $P_{1}/P_2$ increases, the files with higher popularity have more chances to be cached while the files with lower popularity have less chances to be cached, which agrees with Corollary 3 although it is derived when $\lambda_u/\lambda_2 \to 0$. This is because the interference from the MBS increases with $P_1/P_2$ and caching the most popular file everywhere can increase the user's SINR and hence increase the rate, which leads to the increase of success probability.

\begin{figure}[!htb]
	\centering
	\includegraphics[width=\textwidth]{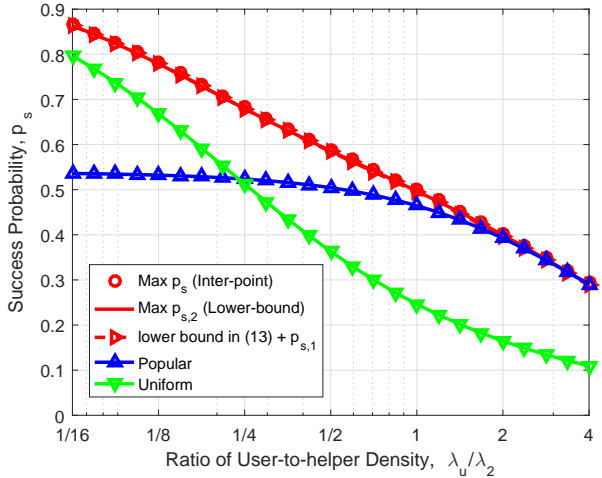}
	\caption{Impact of helper density, $\delta = 0.5$.} \label{fig:lambda}
\end{figure}

%
%

In Fig. \ref{fig:lambda}, we show the impact of helper density with given MBS and user densities. We can see that the gain of optimal caching over caching popular files everywhere shrinks when user-to-helper density $\lambda_u/\lambda_2$ increases ($\lambda_1/\lambda_2$ increases as well), which agrees with Corollary 1 and Corollary 3. When $\lambda_u/\lambda_2$ is low, the gain of optimal caching over uniform caching approaches to zero. This can be explained as follows. When $\lambda_2$ increases, the SINR at the user associated with the helper tier increases since the helper is closer to the user meanwhile more helpers can be turned into idle mode leading to lower interference. On the other hand, the number of users served by each helper will decrease, hence the time-frequency resources allocated to each user will increase. As a result, the achievable rate of each user will increase and hence the optimal caching probability is less skewed to increase file diversity. Moreover, we can see that the lower bound of success probability computed from \eqref{eqn:Pofflower}, i.e., $\ubar p_{{\rm s},2 }(\mathbf{q}_2^*) + p_{{\rm s}, 1}(\mathbf{q}_2^*)$, almost overlaps with ``Max $p_{s,2}$ (Inter-point)", i.e., $p_{\rm s}(\mathbf{q}_2^*) = p_{\rm s, 2}(\mathbf{q}_2^*) + p_{\rm s, 1}(\mathbf{q}_2^*)$, which means that the lower bound in \eqref{eqn:Pofflower} is also tight for more general user density case.

\begin{figure}[!htb]
	\centering
	\includegraphics[width=\textwidth]{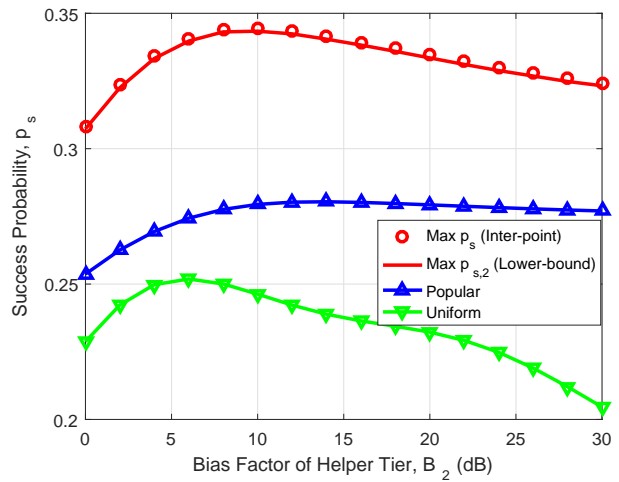}
	\caption{Impact of bias factor, $B_1 = 1$, $\delta = 0.5$.} \label{fig:B}
\end{figure}

In Fig. \ref{fig:B}, we show the impact of the bias factor. We can see that the optimized caching policies outperform the other two policies. When $B_2$ increases, the success probability first increases and then decreases (though slightly for ``Popular"). This is because the users are more likely to associate with the helper tier when $B_2$ increases, and hence increases the success probability. However, when $B_2$ continues to increase, the number of users served by each helper also increases, leading to less time-frequency resources allocated to each user. Meanwhile, the distance between the user and its interfering MBS increases since the user prefers a far helper node to associate with than a near MBS, which reduces the SINR of user. Therefore, the success probability finally decreases with $B_2$.

\subsection{Area Spectral Efficiency}
\begin{figure}[!htb]
	\centering
	\includegraphics[width=\textwidth]{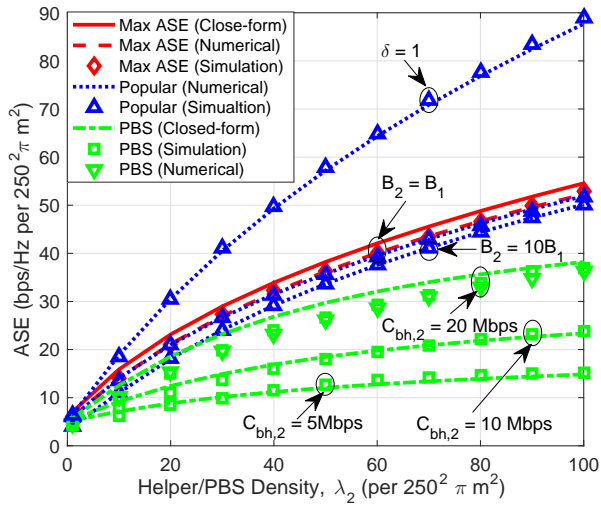}
	\caption{ASE v.s. helper/PBS density $\lambda_2$, $\delta = 0.5$, $B_2 = B_1$ (unless otherwise specified).} \label{fig:ASE_lambda}
\end{figure}

In Fig. \ref{fig:ASE_lambda}, we show the relation between ASE and helper density and compare the ASE of cache-enabled HetNet with traditional HetNet (with legend ``PBS"). The numerical results are obtained from Proposition 3 for cache-enabled HetNet and Proposition 4 for traditional HetNet, the results of the closed-form expression are obtained from Corollary 4 for cache-enabled HetNet and Corollary 6 for traditional HetNet, and the simulation results are obtained by Monte Carlo method considering $-95$ dBm noise power. We can see that the numerical results derived in interference-limited scenario are almost overlapped with the simulation results for non-biased user association, i.e., $B_2 = B_1$ or biased user association, e.g., $B_2 = 10 B_1$. The results obtained from closed-form expressions of approximated ASE are slightly higher than the simulation results when $B_2  = B_1$. The gap is mainly from the approximation in \eqref{eqn:int1} and \eqref{eqn:ER2b} where we neglect some small positive terms in the denominator. Nevertheless, the closed-form results are quite accurate and hence, we only provide the closed-form results in the sequel for conciseness. For traditional HetNets, the closed-form results are quite accurate when $C_{\rm bh, 2} = 20$ Mbps (i.e., $C_{\rm bh, 2}/W = \ln 2$) although Corollary 6 is derived by assuming $C_{\rm bh, 2}/W \ll \ln2 $.

Moreover, caching the most popular files everywhere can achieve almost the same ASE as the optimal caching policy maximizing the ASE and the gap is minor even when  $\lambda_u/\lambda_2$ is small, say $\lambda_u/\lambda_2 = 0.5$ when $\lambda_2 = 100/(250^2 \pi)$, which validates Corollary 5 although it is derived when $\lambda_u/\lambda_2 \to \infty$ and $\lambda_2/\lambda_1 \to \infty$. Besides, non-bias user association can achieve higher ASE compared with $B_2 = 10 B_1$. This is because for biased user association, some users may associate with the helper that offers weaker received signal than a MBS, which reduces the SINR of the users and hence may decrease the ASE.

Compared with the traditional HetNet with limited-capacity backhaul (e.g.,  $C_{\rm bh, 2}=10$ or 20 Mbps), the cache-enabled HetNet can double the ASE when each helper node only
caches 10\% of the total files (e.g., when $\delta = 0.5$ or $\delta = 1$). Alternatively, to achieve the same ASE, the
helper node density is much lower than the PBS density, which can
reduce the deployment and operation cost remarkably (e.g., when
$C_{\rm bh, 2} = 10$ Mbps, the helper node density is
about 1/3 of the PBS density to achieve an ASE of $20/(250^2\pi)$ bps/Hz/m$^2$).

\begin{figure}[!htb]
	\centering
	\includegraphics[width=\textwidth]{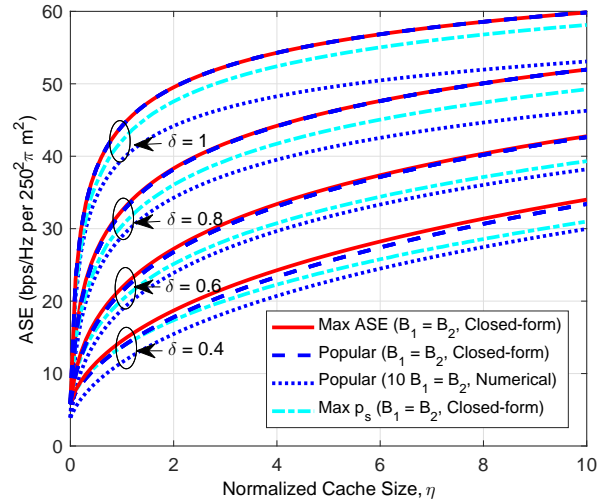}
	\caption{ASE v.s. normalized cache size $\eta$.} \label{fig:ASE_eta}
\end{figure}
In Fig. \ref{fig:ASE_eta}, we show the relation between ASE and normalized cache size with different skew parameter $\delta$. As expected, the ASE of cache-enabled HetNet increases with $\eta$. Furthermore, when $\delta$ is larger, the ASE grows with $\eta$  more rapidly for small value of $\eta$ and grows with $\eta$ more slowly for large $\eta$. Moreover, caching the most popular files everywhere can still achieve almost the maximal ASE when $\eta$ and $\delta$ change, especially when $\delta$ is large. Besides, by comparing with the results of Fig. \ref{fig:ASE_lambda}, we can see that non-biased user association biases can achieve higher ASE than $B_2 = 10 B_1$ for different helper density and cache size. Therefore, in the following, we only consider caching the most popular files everywhere with non-biased user association. We also show the ASE based on the caching probability maximizing the success probability when $R_0 = 0.25$ Mbps, which is slight lower than the maximal ASE. By comparing with Fig. \ref{fig:SINR}, we can conclude that when $R_0$ and $\delta$ are high, caching policy maximizing ASE is equivalent to caching maximizing the success probability, while when $R_0$ and $\delta$ are low, the caching policy maximizing ASE suffers a more severe degradation in success probability than the ASE degradation of caching policy maximizing success probability.

\begin{figure}[!htb]
	\centering
	\includegraphics[width=\textwidth]{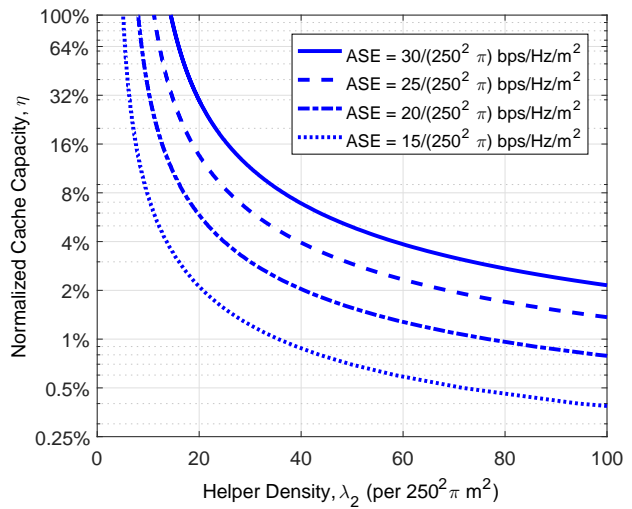}
	\caption{Trade-off between helper density and cache size, $\delta = 0.5$, $B_2 = B_1$.} \label{fig:tradeoff}
\end{figure}

Since the ASE of cache-enabled HetNet can be improved either by increasing
cache size or helper density, a natural question is: how helper
density can be traded off by cache size to achieve a target
ASE? To answer this question, we set the ASE as different values and show
the normalized cache size versus helper density in Fig.
\ref{fig:tradeoff}. With a given target ASE and cache size $N_c$, helper density $\lambda_2$ can be found from \eqref{eqn:ASEc} using the bisection search method. We can observe for example that
to achieve a target ASE of $20/(250^2 \pi)$ bps/Hz/m$^2$, by increasing
the cache size from $\eta = 1\%$ to $\eta = 2\%$, the helper density can be reduced by half. Similar trade-off between BS density and cache size was reported in \cite{EURASIP}, where a homogeneous network with single antenna BSs was considered and the performance metric was success probability. When use ASE as the metric, our results show that the helper density can be reduced more significantly by increasing the cache size.

\begin{figure}[!htb]
	\centering
	\includegraphics[width=\textwidth]{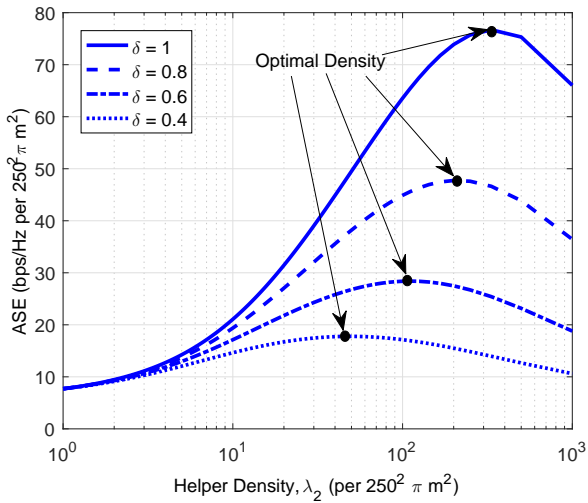}
	\caption{ASE v.s. helper  density with given $\lambda_2N_c  = 10^3/(250^2\pi)$, $B_2 = B_1$.} \label{fig:delta}
\end{figure}

Inspired by such a trade-off, another natural question is: with a given total amount of cache size
within an area, should we deploy the caches in a distributed manner (i.e., more helpers each with small cache size) or in a centralized manner (i.e.,
less helpers each with large cache size) in order to maximize
the ASE? To answer this question, we fix the area cache size
as a constant, e.g., $\lambda_2N_c  = 10^3/(250^2\pi)$, and provide the ASE versus the helper
density in Fig. \ref{fig:delta}. When $\lambda_2 = 1/(250^2\pi)$, every helper caches all the files, and when $\lambda_2 = 10^3/(250^2\pi)$, each helper caches only one file. We can see that there exists an
optimal helper density maximizing the ASE. Moreover, the optimal  density increases with $\delta$, which means that the more skewed the file
popularity is, the more distributedly we should deploy the caches.
This can be explained from the impact of the following two observations in Figs. \ref{fig:ASE_lambda} and \ref{fig:ASE_eta}. On one
hand, with given cache size of each helper, the ASE increases with
the helper density first rapidly and then slowly. On the other hand, with
given helper density, the ASE reduces with the decrease of cache
capacity first slowly and then rapidly, and the larger $\delta$ is, the more slowly the
ASE decreases with $\eta$ when the cache size is large.

\section{Conclusion}

In this paper, we investigated the maximal success probability and ASE of cache-enabled HetNets. Under the probability caching framework, we obtained the optimal caching probability  respectively maximizing the success probability and ASE, and analyzed the impact of system settings. Simulation results validated our analysis. The following conclusions can be drawn from the analytical and numerical results. To maximize the success probability, when the MBS-to-helper
transmit power ratio,  MBS-to-helper density ratio, user-to-helper density ratio or the per-user rate requirement is low, the optimal caching probability is less skewed that achieves file diversity among helpers. Otherwise, caching probability is more skewed. To maximize the ASE, each helper tends to cache the most popular files. Besides, there exists an optimal bias factor to maximize the success probability. By turning the helpers with no user to serve into idle mode, the caching probability should be less skewed to increase file diversity. Compared with traditional HetNet, the helper density is much lower than the PBS density to achieve the same target ASE, and the helper density can be further reduced by increasing cache size. With given total cache size within an area, there exists an optimal helper density that maximizes the ASE.
\appendices

\section{Proof of Lemma 1}

\setcounter{equation}{0}
\renewcommand{\theequation}{A.\arabic{equation}}
Using the law of total probability, we can obtain $
\mathcal{P}_{k}(\mathbf{q}_2) =  \sum_{f = 1}^{N_f} p_k \mathcal{P}_{f,k}(\mathbf{q}_2)$,
where $\mathcal{P}_{f,k}(\mathbf{q}_2)$ is the probability of the typical user associated with the $k$th tier conditioned on that it requests the $f$th file. Since the user requesting the $f$th file associates with the BS with the strongest BRP in $\{\Phi_{f,k}\}$ $(k = 1,2)$, which follows independent homogeneous PPP with density $\{q_{f,k} \lambda_k\}$ $(k = 1,2)$ as we mentioned before, $\mathcal{P}_{f,k}(\mathbf{q}_2)$ can be obtained from \cite[Lemma 1]{flexible} as
$
\mathcal{P}_{f,k}(\mathbf{q}_2)  = q_{f,k}\left(\sum_{j=1}^{2}q_{f,j}\lambda_{jk} (P_{jk}B_{jk})^{\frac{2}{\alpha}} \right)^{-1}. \label{eqn:con_asso 2}
$
Then, Lemma 1 can be proved. \qed

\section{Proof of Proposition 1}
\renewcommand{\theequation}{B.\arabic{equation}}
\setcounter{equation}{0}
From \eqref{eqn:eql}, by defining $ \gamma_{0,f,k} \triangleq 2^{\frac{R_0 U_{f,k}}{WM_{k}}} - 1$ as the equivalent SINR requirement when the user requests the $f$th file and associates with the $k$th tier, we obtain
\begin{align}
p_{\rm s}(\mathbf{q}_2) & =  \sum_{k = 1}^{2}  p_{{\rm s}, k}(\mathbf{q}_2) \nonumber \\
& =  \sum_{k = 1}^{2} \sum_{f=1}^{N_f} p_f \mathcal{P}_{f,k}(\mathbf{q}_2) \mathbb{P} \left( \gamma_{f,k}(\mathbf{q}_2) \geq \gamma_{0,f,k} \right),  \label{eqn:off}
\end{align}

Then, based on the law of total probability, we have
\begin{align}
&\mathbb P (\gamma_{f,k}(\mathbf{q}_2) > \gamma_{0,f,k} ) \nonumber \\
 & = \int_{0}^{\infty}\mathbb{P} (\gamma_{f,k}(\mathbf{q}_2) > \gamma_{0,f,k} ~|~ r) f_{r_k,\mathbf{q}_2}(r) {\rm d} r,  \label{eqn:suc3}
\end{align}
where $f_{r_{k},\mathbf{q}_2}(r) = \frac{2 \pi q_{f,k}\lambda_k}{\mathcal{P}_{f,k}} r  \exp \big(-\pi r^2 \sum_{j=1}^{2} q_{f,j}\lambda_j $ $ ({P}_{jk}B_{jk})^{\frac{2}{\alpha}} \big)$ is the probability density function of the distance between user and its serving BS when requesting the $f$th file and associated with the $k$th tier \cite{flexible}, and $\mathbb P (\gamma_{f,k}(\mathbf{q}_2) > \gamma_{0,f,k} ~|~ r )$ is the success probability conditioned on that the distance between the typical user and its serving BS is $r$, which can be derived as,
\begin{align}
&\mathbb{P} (\gamma_{f,k}(\mathbf{q}_2) > \gamma_{0,f,k} ~|~ r) \nonumber \\
\overset{(a)}{=}& \mathbb{E}_{U_{f,k},I_k} \left[  \mathbb{P}\left( h_{k0} > M_k P_k^{-1} r^{\alpha} I_k \gamma_{0,f,k} ~\big|~r, I_k, U_{f,k}\right) \right] \nonumber \\
 \overset{(b)}{=}& \mathbb{E}_{U_{f,k}, I_k} \left[\exp(-M_kP_k^{-1}r^\alpha I_k \gamma_{0,f,k})\right] \nonumber\\
 \overset{(c)}{=}&\mathbb{E}_{U_{f,k}} \left[ \prod_{j=1}^{2}  \mathcal{L}_{I_{f,kj}} \left(M_kP_k^{-1}r^\alpha \gamma_{0,f,k} \right) \right.\nonumber\\
&\left. \times  \prod_{j=1}^{2} \mathcal{L}_{I_{f',kj}} \left(M_kP_k^{-1}r^\alpha \gamma_{0,f,k} \right)\right] \nonumber \\
\overset{(d)}{=} &\sum_{U_{f,k} = n}^{\infty} \mathbb{P} (U_{f,k} = n)\left( \prod_{j=1}^{2}  \mathcal{L}_{I_{f,kj}} \left(M_kP_k^{-1}r^\alpha \gamma_{0,f,k} \right) \right.\nonumber \\
& \left.\times \prod_{j=1}^{2} \mathcal{L}_{I_{f',kj}} \left(M_kP_k^{-1}r^\alpha \gamma_{0,f,k} \right)\right) \Bigg|_{U_{f,k} = n},\label{eqn:con_suc}
\end{align}
where $U_{f,k}$ is the number of users served by the same BS together with the typical user when the typical user requests the $f$th file and associates with the $k$th tier, step $(a)$ is from \eqref{eqn:gamma} by neglecting the thermal noise, step $(b)$ is from $h_{k0} \sim \exp(1)$, step $(c)$ follows because $\mathcal{L}_{\sum_{j}I_{f,kj}}(s)
=\prod_{i} \mathcal{L}_{I_{f, kj}}(s)$, and $\mathcal{L}(\cdot)$ denotes the Laplace
transform, step $(d)$ is from the law of total probability.

Upon employing the  BS active probability approximation, the distribution of the active BSs in the $j$th tier caching the $f$th file  $\tilde{\Phi}_{f,k}(\mathbf{q}_2)$ and those not caching the $f$th file  $\tilde{\Phi}_{f',j}(\mathbf{q}_2)$ are two homogeneous PPP with density $p_{{\rm a},j}q_{f,j}\lambda_j$ and $p_{{\rm a},j}(1-q_{f,j})q_{f,j}\lambda_j$. Then, the Laplace transform of $I_{f,kj}$ in \eqref{eqn:con_suc} can be derived as
\begin{align}
&   \mathcal{L}_{I_{f,kj}} (s) = \mathbb{E}_{\tilde \Phi_{f,j}(\mathbf{q}_2), h_{ji}} \left[e^{-s \sum_{i\in \tilde\Phi_{f,j}(\mathbf{q}_2) \backslash b_{k0}} P_j  h_{ji} r_{ji}^{-\alpha_j}}\right] \nonumber\\
& \overset{(a)}{=}  \mathbb{E}_{\tilde \Phi_{f,j}(\mathbf{q}_2)}\left[ \prod_{i\in \tilde\Phi_{f,j}(\mathbf{q}_2) \backslash b_{k0}}\left(1 + \frac{s P_j}{M_j}  r_{j,i}^{-\alpha_j}  \right)^{-M_j}\right]  \nonumber \\
&   \overset{(b)}{=} e^{ -2\pi p_{{\rm a}, j} q_{f,j}\lambda_j \int_{r_{0j}}^{\infty} \left(1 - \left({1 + \frac{s P_j}{M_j} v^{-\alpha_j}} \right)^{-M_j}\right)v {\rm d}v} \nonumber \\
& = e^{ - \pi p_{{\rm a},j} q_{f,j} \lambda_j r_{0j}^2 \left(F_1 \left[ -\frac{2}{\alpha}, M_j; 1-\frac{2}{\alpha}; -\frac{sP_j}{M_j} r_{0j}^{-\alpha}\right] -1\right)  }, \label{eqn:Laplace 1}
\end{align}
where step $(a)$ follows from $h_{ji}\sim \mathbb G(M_j,1/M_j)$, step $(b)$ follows from the probability generating function of the PPP, and $r_{0j} = (P_{jk}B_{jk})^{\frac{1}{\alpha}} r$ is the closest possible distance of the interfering BS in $\tilde \Phi_{f,j}(\mathbf{q}_2)$.
Substituting $r_{0j}$ and $s = M_k P_k^{-1} r^\alpha \gamma_{0,f,k}$ into \eqref{eqn:Laplace 1}, we obtain
\begin{equation}
\mathcal{L}_{f,kj}(M_k P_k^{-1} r^\alpha \gamma_{0,f,k})
= e^{  - \pi p_{{\rm a},j} q_{f,j}\lambda_j  (P_{jk}B_{jk})^{\frac{2}{\alpha}} r^2 \mathsf{Z}_{1,kj} (\gamma_{0,f,k})}, \label{eqn:L1}
\end{equation}
where $ \mathsf{Z}_{1,kj} (x)\triangleq {}_{2}F_1 \big[ -\frac{2}{\alpha}, M_j; 1-\frac{2}{\alpha}; -\frac{x}{M_{jk}B_{jk}} \big] -1$.

Since the BSs not caching the $f$th file, i.e. $\tilde{\Phi}_{f',j}$, can be arbitrarily close to the user, i.e., $r_{0j} \to 0$, from $\lim\limits_{r_{0j} \to 0}  r_{0j}^2(_2F_1 \big[-\frac{2}{\alpha}, M_j; 1-\frac{2}{\alpha}; -\frac{sP_j}{M_j} r_{0j}^{-\alpha}\big] -1 ) = \Gamma (1-\frac{2}{\alpha}) \Gamma(M_j + \frac{2}{\alpha_j}) \Gamma(M_j)^{-1}(\frac{sP_j}{M_j})^{\frac{2}{\alpha}}$, similar to the derivation of \eqref{eqn:Laplace 1}, we can obtain
\begin{equation}
\mathcal{L}_{f',kj}(M_k P_k^{-1} r^\alpha \gamma_{0,f,k})= e^{-\pi p_{{\rm a},j} (1-q_{f,j})\lambda_j P_{jk}^{\frac{2}{\alpha}} r^2 \mathsf{Z}_{2,kj}(\gamma_{0,f,k}) }, \hspace{-1mm} \label{eqn:L2}
\end{equation}
where $\mathsf{Z}_{2,kj}(x) \triangleq \Gamma \left(1-\frac{2}{\alpha}\right)  \Gamma\left(M_j + \frac{2}{\alpha}\right) \Gamma(M_j)^{-1}(\frac{x}{M_{jk}})^{\frac{2}{\alpha}}$.

By substituting \eqref{eqn:L1} and \eqref{eqn:L2} into \eqref{eqn:con_suc}, further considering \eqref{eqn:suc3} and $\int_0^\infty 2r e^{-Ar^2}{\rm d} r = \frac{1}{A}$, we obtain
\begin{align}
&\mathbb{P} ( \gamma_{f,k}(\mathbf{q}_2) > \gamma_{0,f,k} )=  \sum_{n=1}^{\infty} \mathbb{P}(U_{f,k} = n)\frac{q_{f,k}}{\mathcal{P}_{f,k}(\mathbf{q}_2)}\nonumber\\
& \times \left(\sum_{j=1}^{2}  \lambda_{jk} P_{jk}^{\frac{2}{\alpha}} \left( q_{f,j}  p_{{\rm a},j}(\mathbf{q}_2)  B_{jk}^{\frac{2}{\alpha}} \mathsf{Z}_{1,jk}(\gamma_{0,f,k}) +(1- q_{f,j}) \right. \right.  \nonumber\\
& \left. \times p_{{\rm a},j}(\mathbf{q}_2)\mathsf{Z}_{2,jk}(\gamma_{0,f,k})  +  q_{f,j}   B_{jk}^{\frac{2}{\alpha}} \right) \Bigg)^{-1} \Bigg|_{U_{f,k} = n}. \label{eqn:suc}
\end{align}
Upon using the average load approximation, \eqref{eqn:suc} becomes
\begin{align}
&\mathbb{P} ( \gamma_{f,k}(\mathbf{q}_2) > \gamma_{0,f,k} )\nonumber\\
& =   \frac{q_{f,k}}{\mathcal{P}_{f,k}(\mathbf{q}_2)} \left(\sum_{j=1}^{2}  \lambda_{jk} P_{jk}^{\frac{2}{\alpha}} (q_{f,j}  p_{{\rm a},j}(\mathbf{q}_2) B_{jk}^{\frac{2}{\alpha}} \mathsf{Z}_{1,jk}(\gamma_{0,2}(\mathbf{q}_2)) \right. \nonumber\\
&+ (1-q_{f,j})  p_{{\rm a},j}(\mathbf{q}_2)\mathsf{Z}_{2,jk}(\gamma_{0,2}(\mathbf{q}_2))  +  q_{f,j}   B_{jk}^{\frac{2}{\alpha}} )\Bigg)^{-1}, \label{eqn:suc2}
\end{align}
where $\gamma_{0,k}(\mathbf{q}_2) \triangleq  2^{\frac{R_0}{W M_k}  \big( 1 + \frac{1.28\lambda_u \mathcal{P}_{k}(\mathbf{q}_2)}{\lambda_k}\big) } - 1$. Then, by substituting \eqref{eqn:suc2} into \eqref{eqn:off}, Proposition 1 can be proved.\qed

\section{Proof of Corollary 1}
\renewcommand{\theequation}{C.\arabic{equation}}
\setcounter{equation}{0}
Without loss of generality, we assume $\ubar q_{1,2}^*, \dots, \ubar q_{N_1,2}^* = 1$ and $\ubar q_{N_f-N_0+1,2}^*, \dots, \ubar q_{N_f,2}^* = 0$. By defining $ c_1 \triangleq  \mathsf{C}_{1}(\bar\gamma_{0,2})  + \bar{p}_{\rm a, 2}\mathsf{C}_{2}(\bar\gamma_{0,2})$ and $ c_2 \triangleq \bar p_{\rm a, 2}C_{3,\bar\gamma_{0,2}} + 1$, from $\sum_{f=1}^{N_f} \ubar q_{f,2}^* = N_c$ and \eqref{eqn:optlower},  we have $\sum_{f=N_0 + 1}^{N_f-N_1} \left( \frac{1}{ c_2} \sqrt{\frac{c_1}{\nu}} \sqrt{p_f} - \frac{c_1}{c_2} \right) + N_1 = N_c$, from which we obtain
\begin{equation}
\frac{1}{c_2}\sqrt{\frac{c_1}{\nu}} = \frac{{N_c - N_1 + (N_f - N_0 - N_1)\frac{c_1}{c_2}}}{ \sum_{f=N_0 + 1}^{N_f-N_1} \sqrt{p_f}}. \label{eqn:c1}
\end{equation}
As shown in \eqref{eqn:c1}, $\frac{1}{c_2}\sqrt{\frac{c_1}{\nu}}$ increases with ${c_1}/{c_2}$. Since $\mathsf{C}_{2}(\bar\gamma_{0,2}) \geq 0$ and $\mathsf{C}_{3}(\bar\gamma_{0,2}) \leq 0$, ${c_1}/{c_2}$ increases with $\bar p_{{\rm a}, 2}$ and hence $\frac{1}{c_2}\sqrt{\frac{c_1}{\nu}}$ increases with $\bar p_{{\rm a}, 2}$. Further considering that $\bar p_{{\rm a},2}$ increases with $\lambda_u/\lambda_2$, we know that $\frac{1}{c_2}\sqrt{\frac{c_1}{\nu}}$ increases with  $\lambda_u/\lambda_2$.

Then, for any $\ubar q_{f,2}^*$, $\ubar q_{f+1,2}^* \in (0,1)$, from \eqref{eqn:optlower}, we have $
\ubar q_{f,2}^* - \ubar q_{f+1,2}^* =  \frac{1}{c_2}\sqrt{\frac{c_1}{\nu}} (p_f - p_{f+1})$. Since $p_f > p_{f+1}$, $\ubar q_{f,2}^*-\ubar q_{f+1,2}^* $ increases with $\frac{1}{c_2}\sqrt{\frac{c_1}{\nu}}$ and hence increases with  $\lambda_u/\lambda_2$.~\qed

\section{Proof of Corollary 2}

\renewcommand{\theequation}{D.\arabic{equation}}
\setcounter{equation}{0}
Since $R_0 \to \infty$ or $\lambda_u/\lambda_2 \to \infty$ are equivalent to $\bar \gamma_{0,2} \to \infty$, in the following we analyze the case for $\bar \gamma_{0,2} \to \infty$ instead. By employing the transformation  ${}_2F_1[a,b;c;x]  =  \tfrac{\Gamma(c)\Gamma(b-a)}{\Gamma(b)\Gamma(c-a)} (-x)^{-a}{}_2F_1[a,a+1-c;a+1$ $-b;\tfrac{1}{x}]
+  \tfrac{\Gamma(c)\Gamma(a-b)}{\Gamma(a)\Gamma(c-b)} (-x)^{-b}{}_2F_1[b,b+1-c;b+1-a;\tfrac{1}{x}] $ \cite[eq. (9.132)]{jeffrey2007table}
and considering the series-form expression of ${}_2F_1 [a,b;c;x] = \sum_{n = 0}^{\infty} \frac{(a)_n 	(b)_n}{(c)_n} x^n$ where $(a)_n \triangleq a(a+1) \cdots (a + n - 1)$
denotes the rising Pochhammer symbol, we can obtain the asymptotic result of $ {}_{2}F_1  \big[ -\frac{2}{\alpha}, M_2; 1-\frac{2}{\alpha}; -\bar \gamma_{0,2} \big]$ for $\bar \gamma_{0,2} \to \infty$ as,
\begin{align}
{}_{2}F_1  \left[ -\frac{2}{\alpha}, M_2; 1-\frac{2}{\alpha}; -\bar\gamma_{0,2} \right]= &\frac{\Gamma (1-\frac{2}{\alpha}) \Gamma(M_2 + \frac{2}{\alpha})}{\Gamma(M_2)} \bar\gamma_{0,2}{}^{\frac{2}{\alpha}} \nonumber\\
&
+ \mathcal{O}\left(\bar \gamma_{0,2}^{-M_2}\right), \label{eqn:asy}
\end{align}
which equals $\mathsf{C}_{2,\bar \gamma_{0,2}}$ when $\bar \gamma_{0,2} \to \infty$. Then, considering the definition of $\mathsf{C}_{3, \bar \gamma_{0,2}}$, we have $\lim\limits_{\bar \gamma_{0,2} \to \infty}\mathsf{C}_{3,\bar \gamma_{0,2}} = \lim\limits_{\bar \gamma_{0,2} \to \infty}{}_{2}F_1  \big[ -\frac{2}{\alpha}, M_2; 1-\frac{2}{\alpha}; -\bar\gamma_{0,2} \big] - \mathsf{C}_{2}(\bar\gamma_{0,2})- 1 = -1$.  Upon substituting $\mathsf{C}_{3}(\bar\gamma_{0,2}) = -1$ into \eqref{eqn:case1}, we obtain $\lim\limits_{R_0 \to \infty} p_{\rm s}(\mathbf{q}_2) = \lim\limits_{\bar \gamma_{0,2} \to \infty} p_{\rm s}(\mathbf{q}_2) = \frac{1}{\mathsf{C}_{1,\bar \gamma_{0,2}}}\sum_{f=1}^{N_f} p_fq_{f,2}$. Since $p_f$ decreases with $f$ and further considering constraint \eqref{eqn:con1} and \eqref{eqn:con2}, it is easy to see that the optimal values of $q_{f,2}$ maximizing $\sum_{f = 1}^{N_f} p_f q_{f,2}$ are $\ubar q_{1,2}^*, \cdots, \ubar q_{N_c,2}^* = 1$ and $\ubar q_{N_c+1,2}^*, \cdots, \ubar q_{N_f, 2}^* = 0$, and hence Corollary 2 is proved. \qed

\section{Proof of Corollary 3}

\renewcommand{\theequation}{E.\arabic{equation}}
\setcounter{equation}{0}
Without loss of generality, we assume $q_{1,2}^*, \dots, \ubar q_{N_1,2}^* = 1$ and $q_{N_f-N_0+1,2}^*, \dots, q_{N_f,2}^* = 0$, from $\sum_{f=1}^{N_f} q_{f,2}^* = N_c$ and \eqref{eqn:opt}, similar to the derivation of \eqref{eqn:c1}, we can obtain
\begin{equation}
\sqrt{\frac{\mathsf{C}_{1}(\gamma_{0,2})}{\nu}}  = \frac{N_c - N_1 + (N_f - N_0 - N_1)\mathsf{C}_{1}(\gamma_{0,2})}{ \sum_{f=N_0 + 1}^{N_f-N_1} \sqrt{p_f}}. \label{eqn:k}
\end{equation}
As shown in \eqref{eqn:k}, $\sqrt{\frac{\mathsf{C}_{1}(\gamma_{0,2})}{\nu}}$ increases with $\mathsf{C}_{1}(\gamma_{0,2})$. Considering that $\mathsf{C}_{1}(\gamma_{0,2})$ increases with $\lambda_{12}$ and $P_{12}$, $\sqrt{\frac{\mathsf{C}_{1}(\gamma_{0,2})}{\nu}}$ increases with $\lambda_{12}$ and $P_{12}$.

Then, for any $q_{f,2}^*$, $q_{f+1,2}^* \in (0,1)$, from \eqref{eqn:optlower}, we have
$q_{f,2}^* - q_{f+1,2}^* =  \sqrt{\frac{\mathsf{C}_{1}(\gamma_{0,2})}{\nu}} (p_f - p_{f+1})$ and $q_{f,2}^*-q_{f+1,2}^* $ increases with $\sqrt{\frac{\mathsf{C}_{1}(\gamma_{0,2})}{\nu}}$ and hence increases with $\lambda_{12}$ and $P_{12}$. \qed

\section{ Proof of Proposition 3}

\renewcommand{\theequation}{F.\arabic{equation}}
\setcounter{equation}{0}
The average throughput of an active BS in the $k$th tier can be expressed as
\begin{align}
\mathbb{E}[R_k(\mathbf{q}_2)] &= \mathbb{E} \left[\sum_{u=1}^{U_k}  \frac{W}{U_k/M_k} \ln \left( 1 + \gamma_{uk}(\mathbf{q}_2)\right) \right] \nonumber \\
& = W M_k \mathbb{E} \left[ \ln \left( 1 + \gamma_{uk}(\mathbf{q}_2)\right) \right], \label{eqn:ERk}
\end{align}
where the last step follows because $\gamma_{uk}(\mathbf{q}_2)~(u = 1,\cdots,U_k)$ are independently and identically distributed.
By taking the expectation over the random file request, we have
\begin{align}
\mathbb{E} [ \ln \left( 1 + \gamma_{uk}(\mathbf{q}_2)\right) ] = \sum_{f=1}^{N_f} p_{f,k}(\mathbf{q}_2)\mathbb{E} [ \ln(1+ \gamma_{f,k}(\mathbf{q}_2)) ] ,   \label{eqn:Eln}
\end{align}
where $p_{f,k}(\mathbf{q}_2) = \frac{p_f \mathcal{P}_{f,k}(\mathbf{q}_2)}{\mathcal{P}_k(\mathbf{q}_2)}$ obtained from conditional probability formula is the probability that the $u$th user served by the BS in the $k$th tier requests the $f$th file. Since $\mathbb{E}[X] = \int_{0}^{\infty} \mathbb{P}(X>x) {\rm d} x$ for $X > 0$, $\mathbb{E} \left[ \ln \left( 1 + \gamma_{f,k}(\mathbf{q}_2)\right) \right]$ can be derived as
\begin{align}
\mathbb{E} \left[ \ln \left( 1 + \gamma_{f,k}(\mathbf{q}_2)\right) \right] & =\int_{0}^{\infty} \mathbb{P} (\ln \left( 1 + \gamma_{f,k}(\mathbf{q}_2)\right) > x) {\rm d} x \nonumber \\
& = \int_{0}^{\infty}  \mathbb{P} ( \gamma_{f,k}(\mathbf{q}_2) > e^x - 1) {\rm d} x .\label{eqn:rate}
\end{align}
Similar to deriving \eqref{eqn:suc}, we can obtain
\begin{align}
&\mathbb{P} ( \gamma_{f,k}(\mathbf{q}_2) > e^x - 1) \nonumber \\
  = & \frac{q_{f,k}}{\mathcal{P}_{f,k}(\mathbf{q}_2)} \bigg(\sum_{j=1}^{2}  \lambda_{jk} P_{jk}^{\frac{2}{\alpha}} \Big(q_{f,j}  p_{{\rm a},j}(\mathbf{q}_2) B_{jk}^{\frac{2}{\alpha}} \mathsf{Z}_{1,jk}(e^x-1) \nonumber \\
& + (1- q_{f,j})  p_{{\rm a},j}(\mathbf{q}_2) \mathsf{Z}_{2,jk}(e^x-1)  +  q_{f,j}   B_{jk}^{\frac{2}{\alpha}} \Big) \bigg)^{-1}. \label{eqn:succ}
\end{align}
Upon substituting \eqref{eqn:succ} into \eqref{eqn:rate} and furthering considering \eqref{eqn:Eln}, \eqref{eqn:ERk} and \eqref{eqn:ASEdef}, Proposition 3 can be proved. \qed

\section{Proof of Corollary 4}

\renewcommand{\theequation}{G.\arabic{equation}}
\setcounter{equation}{0}

From the series-form expression of ${}_2F_1[a,b;c;x] = \sum_{n = 0}^{\infty} \frac{(a)_n (b)_n}{(c)_n} x^n$, we have the asymptotic expression of ${}_2F_1[a,b;c;x]$ for $x\to 0$ as ${}_2F_1[a,b;c;x] = 1 + \frac{ab}{c}x + \mathcal{O}(x^2)$ which is accurate when $x\ll 1$. Then, the asymptotic expression of $\mathsf{Z}_{1,jk}(e^x - 1)$ for $x \to 0$ can be derived as
\begin{align}
\mathsf{Z}_{1,jk}(e^x - 1) & = {}_{2}F_1 \left[ -\frac{2}{\alpha}, M_j; 1-\frac{2}{\alpha}; -\frac{e^x-1}{M_{jk}B_{jk}} \right] -1 \nonumber \\
& = 1 + \frac{2M_k}{2-\alpha} (1 - e^x) -1 + \mathcal{O}((1-e^x)^2) \nonumber \\
&= \frac{2M_k}{2-\alpha} x +   \mathcal{O}(x^2),  \label{eqn:ap1}
\end{align}
which is accurate when $1 - e^x \ll 1$, i.e., $x \ll \ln2$. In this case, the asymptotic expression of $\mathsf{Z}_{2,jk}(e^x - 1)$ for $x \to 0$ can be expressed as
\begin{align}
\!\! \mathsf{Z}_{2,jk}(e^x - 1) &= \frac{\Gamma \left(1-\frac{2}{\alpha}\right) \Gamma\left(M_j + \frac{2}{\alpha}\right) }{\Gamma(M_j)} \left(\frac{e^x - 1}{M_{jk}}\right)^{\frac{2}{\alpha}} \nonumber \\
 &=\frac{\Gamma \left(1-\frac{2}{\alpha}\right) \Gamma\left(M_j + \frac{2}{\alpha}\right) }{\Gamma(M_j)M_{jk}{}^{\frac{2}{\alpha}}} x^{\frac{2}{\alpha}} +  \mathcal{O}(x^{\frac{4}{\alpha}}) .\label{eqn:ap2}
\end{align}

When $x \to \infty$, similar to the derivation of \eqref{eqn:asy}, we can obtain the asymptotic expression of $\mathsf{Z}_{1,fk}(e^x - 1)$ for $x\to \infty$ as,
\begin{align}
& \mathsf{Z}_{1,jk}(e^x - 1) \nonumber \\
& = \frac{\Gamma(1 - \frac{2}{\alpha}) \Gamma(M_j + \frac{2}{\alpha})}{\Gamma(M_j) M_{jk}{}^{\frac{2}{\alpha}}} \left(e^x - 1\right)^{\frac{2}{\alpha}} - 1 + \mathcal{O}((e^x - 1)^{-M_j})  \nonumber \\
& \approx \frac{\Gamma(1 - \frac{2}{\alpha}) \Gamma(M_j + \frac{2}{\alpha})}{\Gamma(M_j) M_{jk}{}^{\frac{2}{\alpha}}} e^{\frac{2x}{\alpha}} - 1 +\mathcal{O}(x^{-M_j}) , \label{eqn:ap3}
\end{align}
where the last step omits ``1" in $e^x - 1$ which is accurate when $1-e^x \gg 1$, i.e., $x \gg \ln 2$. Then, the asymptotic expression of $\mathsf{Z}_{2,jk}(e^x - 1)$ for $x \to \infty$ can be expressed as
\begin{equation}
\mathsf{Z}_{2,jk}(e^x - 1) \approx \frac{\Gamma(1 - \frac{2}{\alpha}) \Gamma(M_j + \frac{2}{\alpha})}{\Gamma(M_j) M_{jk}{}^{\frac{2}{\alpha}}} e^{\frac{2x}{\alpha}}. \label{eqn:ap4}
\end{equation}

By dividing the integration limits from $[0, \infty)$ to $[0, \ln 2]$ and $[\ln 2, \infty)$, we have
\begin{align}
\mathsf{Int}_{fk}(x)|_{0}^{\infty} = \mathsf{Int}_{fk}(x)|_{0}^{\ln 2}  + \mathsf{Int}_{fk}(x)|_{\ln 2}^{\infty}. \label{eqn:int}
\end{align}

Employing the asymptotic expressions \eqref{eqn:ap1} and \eqref{eqn:ap2} for $\mathsf{Z}_{1,jk}(e^x -1)$ and $\mathsf{Z}_{2,jk}(e^x -1)$ and omitting $\mathcal{O}(x^2)$ when $x \in [0,\ln 2]$, we can approximate $\mathsf{Int}_{fk}(x)|_{0}^{\ln2}$ as
\begin{align}
\mathsf{Int}_{fk}(x)|_{0}^{\ln 2} 
\approx & \int_{0}^{\ln 2}  \left(\sum_{j=1}^{2}  \lambda_{jk} P_{jk}^{\frac{2}{\alpha}} \left(q_{f,j}  p_{{\rm a},j}(\mathbf{q}_2) \frac{2M_k}{2-\alpha} x \right. \right.\nonumber \\
& + ( 1 -  q_{f,j})  p_{{\rm a},j}(\mathbf{q}_2) \frac{\Gamma(1 - \frac{2}{\alpha}) \Gamma(M_j + \frac{2}{\alpha})}{\Gamma(M_j) M_{jk}{}^{\frac{2}{\alpha}}} x^{\frac{2}{\alpha}} \nonumber \\
& \left. +  q_{f,j}    \bigg)\vphantom{\sum_{j=1}^{2}}\right)^{-1}  {\rm d} x  \nonumber \\
 \approx  &\int_{0}^{\ln 2} \left(\sum_{j=1}^{2}   \lambda_{jk} P_{jk}^{\frac{2}{\alpha}}q_{f,j} \right)^{-1} {\rm d} x \nonumber \\
= & \left(\sum_{j=1}^{2} q_{f,j} \lambda_{jk}  P_{jk}^{\frac{2}{\alpha}} \right)^{-1} \ln 2,   \label{eqn:int1}
\end{align}
where the last approximation is from $\mathbf{q}_1 = \mathbf{1}$ and the condition that $N_c/N_f$ is large so that most of files can be cached at the helper, i.e., $\mathbf{q}_{2} \to \mathbf{1}$, and hence we can omit the term $q_{f,j}  p_{{\rm a},j}(\mathbf{q}_2) \tfrac{2M_k}{2-\alpha} x +( 1 -  q_{f,j})  p_{{\rm a},j}(\mathbf{q}_2) \frac{\Gamma(1 - \frac{2}{\alpha}) \Gamma(M_j + \frac{2}{\alpha})}{\Gamma(M_j) M_{jk}{}^{\frac{2}{\alpha}}} x^{\frac{2}{\alpha}} $ when compared with $q_{f,j}$ for $x\in [0, \ln 2]$.

Employing the asymptotic expressions in \eqref{eqn:ap3}, \eqref{eqn:ap4} and omitting $\mathcal{O}(x^{-M_j})$  when $x \in [\ln 2,\infty]$, we obtain
\begin{align}
\mathsf{Int}_{fk}(x)|_{\ln 2}^{\infty} \approx& \int_{\ln 2}^{\infty}  \left(  \sum_{j=1}^{2}  \lambda_{jk} P_{jk}^{\frac{2}{\alpha}}  p_{{\rm a},j}(\mathbf{q}_2) \right. \nonumber \\
&\times \frac{\Gamma(1 - \frac{2}{\alpha}) \Gamma(M_j + \frac{2}{\alpha})}{\Gamma(M_j) M_{jk}{}^{\frac{2}{\alpha}}} e^{\frac{2x}{\alpha}} \nonumber \\
&\left. + \sum_{j=1}^{2}  \lambda_{jk} P_{jk}^{\frac{2}{\alpha}} q_{f,j} (1 - p_{{\rm a},j}(\mathbf{q}_2)) \vphantom{\sum_{j=1}^{2}} \right)^{-1} {\rm d} x \nonumber \\
= & \frac{\alpha}{2\mathsf{K}_{2,fk}(\mathbf{q}_2)} \ln \left( 1 + \frac{\mathsf{K}_{2,fk}(\mathbf{q}_2)}{\mathsf{K}_{1,k}(\mathbf{q}_2)} 4^{-\frac{1}{\alpha}}\right), \label{eqn:int2}
\end{align}
where $\mathsf{K}_{1,k}(\mathbf{q}_2) \triangleq  \sum_{j=1}^{2}  \lambda_{jk} P_{jk}^{\frac{2}{\alpha}}  p_{{\rm a},j}(\mathbf{q}_2) \Gamma(1 - \frac{2}{\alpha}) \Gamma(M_j + \frac{2}{\alpha})\Gamma(M_j)^{-1} M_{jk}^{-\frac{2}{\alpha}}$, and $\mathsf{K}_{2,fk}(\mathbf{q}_2) \triangleq \sum_{j=1}^{2}  \lambda_{jk} P_{jk}^{\frac{2}{\alpha}} q_{f,j}(1 - p_{{\rm a},j}(\mathbf{q}_2))$.

Upon substituting \eqref{eqn:int1} and \eqref{eqn:int2} into \eqref{eqn:int} and further considering \eqref{eqn:ASE} and the expression of $\mathcal{P}_k$ given in Lemma 1, we can obtain Corollary 4. \qed

\section{Proof of Corollary 5}

\renewcommand{\theequation}{H.\arabic{equation}}
\setcounter{equation}{0}
With $\lambda_u/\lambda_2 \to \infty$, we have $p_{{\rm a},k}(\mathbf{q}_2) \to 1$ and $\mathsf{K}_{2,fk}(\mathbf{q}_2) \to 0$.  Then, \eqref{eqn:ASEc} degenerates into,

\begin{equation}
{\sf ASE}(\mathbf{q}_2) \approx  \sum_{k=1}^{2}  \lambda_k  M_k \left(\ln 2 +  \frac{\alpha}{2 \mathsf{K}_{1,k}\mathcal{P}_{k}(\mathbf{q}_2)}   4^{-\frac{1}{\alpha}} \sum_{f=1}^{N_f}  p_f q_{f,k} \right). \label{eqn:ASEs} 
\end{equation}
where $\mathsf{K}_{1,k}$ does not depend on $q_{f,k}$ since $p_{{\rm a}, k}(\mathbf{q}_2) \to 1$.

Further considering $\lambda_2/\lambda_1 \to \infty$, we have $\mathcal{P}_2(\mathbf{q}_2) \to 1$ and the ASE of the MBS tier can be omitted compared with the ASE of the helper tier,

\begin{equation}
{\sf ASE}(\mathbf{q}_2) \approx \lambda_2 M_2 \left( \ln 2 +\frac{\alpha}{2 \mathsf{K}_{1,2}}   4^{-\frac{1}{\alpha}} \sum_{f=1}^{N_f}  p_f q_{f,2} \right). \label{eqn:dASE} 
\end{equation}
From \eqref{eqn:dASE} we can see that maximizing ASE is equivalent to maximizing $\sum_{f=1}^{N_f}  p_f q_{f,2}$. Further considering constraints \eqref{eqn:con1} and \eqref{eqn:con2}, we can easily obtain the optimal caching probability as $ q_{1,2}^*, \cdots,  q_{N_c,2}^* = 1$ and $ q_{N_c+1,2}^*, \cdots,  q_{N_f, 2}^* = 0$.  \qed

\section{Proof of Proposition 4}

\renewcommand{\theequation}{I.\arabic{equation}}
\setcounter{equation}{0}
For the PBS tier, based on the law of total probability, from \eqref{eqn:spER2} we obtain

\begin{align}
&\mathbb{E}[R_2]  = \sum_{f=1}^{N_f} p_{f,2}(\mathbf{q}_2)\mathbb{E} [ \min\{ W\ln(1 + \gamma_{f,2}(\mathbf{q}_2)), C_{\rm bh, 2} \}] \nonumber \\
& = W \sum_{f=1}^{N_f} p_{f,2}(\mathbf{q}_2)\mathbb{E} \left[ \min \left\{ \ln(1 + \gamma_{f,2}(\mathbf{q}_2)), \frac{C_{\rm bh, 2}}{W} \right\}\right] \nonumber \\
&= W \sum_{f=1}^{N_f} p_{f,2}(\mathbf{q}_2) \int_{0}^{\infty} \mathbb{P}\left[ \min \left\{ \ln(1 + \gamma_{f,2}(\mathbf{q}_2)),  \frac{C_{\rm bh, 2}}{W} \right\} > x\right] {\rm d} x \nonumber \\
& = W \sum_{f=1}^{N_f} p_{f,2}(\mathbf{q}_2) \int_{0}^{\frac{C_{\rm bh, 2}}{W}} \mathbb{P}[ \log_2(1 + \gamma_{f,2}(\mathbf{q}_2)) > x] {\rm d} x ,  \label{eqn:ER2}
\end{align}
where the last step is from $ \mathbb{P}[\min\{ \log_2(1 + \gamma_{f,2}(\mathbf{q}_2)), \frac{C_{\rm bh, 2}}{W} \} > x] =0$ for $x>\frac{C_{\rm bh, 2}}{W}$ and $ \mathbb{P}[\min\{ \log_2(1 + \gamma_{f,2}(\mathbf{q}_2)), \frac{C_{\rm bh, 2}}{W} \} > x] =\mathbb{P}[\log_2(1 + \gamma_{f,2}(\mathbf{q}_2)) > x]$ for $0\leq x \leq \frac{C_{\rm bh, 2}}{W}$.

Since the user association of traditional HetNet can be regarded as a special case of cache-enabled HetNets when $\mathbf{q}_1 = \mathbf{q}_2 = \mathbf{1}$,
by substituting \eqref{eqn:succ} with $k=2$ and $\mathbf{q}_1 = \mathbf{q}_2 = \mathbf{1}$ into \eqref{eqn:ER2}, we have
\begin{align}
\mathbb{E}[R_2] = & \frac{W}{\mathcal{P}_2} \int_{0}^{\frac{C_{\rm bh, 2}}{W}}  \left(\sum_{j=1}^{2}  \lambda_{j2} (P_{j2}B_{j2})^{\frac{2}{\alpha}}  \right. \nonumber \\
&  \left.\times(  p_{{\rm a},j} \mathsf{Z}_{1,j2}(e^x-1) + 1 ) \vphantom{\sum_{j=1}^{2}}\right)^{-1}   {\rm d} x.
\end{align}
where $\mathcal{P}_2 = \mathcal{P}_2(\mathbf{q}_2)|_{\mathbf{q}_2 = \mathbf{1}}$.

Since the deployment of MBSs in the traditional HetNet are the same as those in the cache-enabled HetNet, by substituting \eqref{eqn:succ} with $k=1$ and $\mathbf{q}_1 = \mathbf{q}_2 = \mathbf{1}$ into \eqref{eqn:rate} and furthering considering \eqref{eqn:Eln} and \eqref{eqn:ERk}, we can obtain
\begin{align}
\mathbb{E}[R_1] =& \frac{WM_1}{\mathcal{P}_1}\int_{0}^{\infty}  \Bigg(\sum_{j=1}^{2}  \lambda_{j1} (P_{j1}B_{j1})^{\frac{2}{\alpha}} \nonumber \\
&\times \left(  p_{{\rm a},j} \mathsf{Z}_{1,j1}(e^x-1)  + 1 \right) \Bigg)^{-1} {\rm d} x. \label{eqn:ER1}
\end{align}
where $\mathcal{P}_1 = \mathcal{P}_1(\mathbf{q}_2)|_{\mathbf{q}_2 = \mathbf{1}}$. Upon substituting \eqref{eqn:ER1} and \eqref{eqn:ER2} into \eqref{eqn:ASEdef}, Proposition 4 can be proved. \qed

\section{Proof of Corollary 6}
\renewcommand{\theequation}{J.\arabic{equation}}
\setcounter{equation}{0}
Similar to the derivation of Colloary 4, by approximating $\mathsf{Z}_{1,jk}(e^x - 1)$ as \eqref{eqn:ap1} for $x \in [0,\ln 2)$ and as \eqref{eqn:ap2} for $x \in [\ln 2, \infty)$, \eqref{eqn:ER1} and \eqref{eqn:ER2} can be approximated as
\begin{align}
\mathbb{E}[R_2] \approx& \frac{W}{\mathcal{P}_2} \int_{0}^{\frac{C_{\rm bh, 2}}{W}}  \left(\sum_{j=1}^{2}  \lambda_{j2} P_{j2}^{\frac{2}{\alpha}}  \right)^{-1}  {\rm d} x \nonumber \\
= &\frac{C_{\rm bh, 2}}{\mathcal{P}_2 \sum_{j=1}^{2}  \lambda_{j2} P_{j2}^{\frac{2}{\alpha}}} = C_{\rm bh, 2},  \label{eqn:ER2b} \\
\mathbb{E}[R_1] \approx &\frac{WM_1}{\mathcal{P}_1} \left(\int_{0}^{\ln 2}  \left(\sum_{j=1}^{2}  \lambda_{j1} P_{j1}^{\frac{2}{\alpha}}  \right)^{-1}  {\rm d} x \right. \nonumber \\
& + \int_{\ln 2}^{\infty}  \left(\sum_{j=1}^{2}  \lambda_{j1} P_{j1}^{\frac{2}{\alpha}} \left( p_{{\rm a},j} \frac{\Gamma(1 - \frac{2}{\alpha}) \Gamma(M_j + \frac{2}{\alpha})}{\Gamma(M_j) M_{j1}^{\frac{2}{\alpha}}} e^{\frac{2x}{\alpha}} \right. \right. \nonumber \\
& \left.\left.\left.  + 1 - p_{{\rm a}, j} \vphantom{\frac{\Gamma(1 - \frac{2}{\alpha}) \Gamma(M_j + \frac{2}{\alpha})}{\Gamma(M_j) M_{j1}^{\frac{2}{\alpha}}}}\right)\right)^{-1}{\rm d} x \vphantom{\left(\sum_{j=1}^{2}  \lambda_{j1} P_{j1}^{\frac{2}{\alpha}}  \right)^{-1}} \right)   \nonumber \\
= & WM_1 \left(\ln 2 +\frac{\alpha}{2\mathcal{P}_1 \mathsf{K}_{2}}  \ln \left(1 + \frac{\mathsf{K}_2}{\mathsf{K}_1}4^{-\frac{1}{\alpha}}\right)   \right), \label{eqn:ER1b}
\end{align}
where $\mathsf{K}_{1} \triangleq  \sum_{j=1}^{2}  \lambda_{j1} P_{j1}^{\frac{2}{\alpha}}  p_{{\rm a},j} \Gamma(1 - \frac{2}{\alpha}) \Gamma(M_j + \frac{2}{\alpha})\Gamma(M_j)^{-1}$ $ M_{j1}^{-\frac{2}{\alpha}}$, and $\mathsf{K}_{2} \triangleq \sum_{j=1}^{2}  \lambda_{j1} P_{j1}^{\frac{2}{\alpha}} (1 - p_{{\rm a},j})$. Upon substituting \eqref{eqn:ER1b} and \eqref{eqn:ER2b} into \eqref{eqn:ASEdef}, Corollary 5 can be proved. \qed

\bibliographystyle{IEEEtran}
\bibliography{dongbib}
\end{document}